\newcommand{\pt}{\partial}
\newcommand{\bbs}[1]{\boldsymbol{#1}}
\definecolor{lgray}{RGB}{240,240,240}
\definecolor{webgreen}{rgb}{0,.5,0}
\definecolor{webbrown}{rgb}{.6,0,0}
\definecolor{RoyalBlue}{cmyk}{1, 0.50, 0, 0}
\numberwithin{equation}{section}
\begin{document}

\title{Soliton Synchronization with Randomness: \\ Rogue Waves and Universality} 
\author{Manuela Girotti\footnote{Corresponding author: manuela.girotti@emory.edu}$\ ^{,a}$, Tamara Grava$^{b,c}$, Robert Jenkins$^d$, Guido Mazzuca$^e$, \\ Ken McLaughlin$^e$, Maxim Yattselev$^f$}

\date{{\small $^a$ Emory University, $^b$ SISSA, $^c$ University of Bristol, $^d$ University of Central Florida, \\ $^e$ Tulane University, $^f$ Indiana University Indianapolis}}

	\maketitle

\begin{abstract}
    We consider an $N$-soliton  solution of the focusing nonlinear Schr\"{o}dinger equations.  We give conditions for the synchronous collision  of these $N$ solitons. When the solitons velocities are well separated  and the solitons have equal amplitude,  we  show that the local  wave profile at the collision point   scales as  the   \( \sinc(x) \) function. We show that this behaviour persists when the amplitudes of the solitons are  i.i.d. sub-exponential random variables.  Namely the {central collision peak} exhibits \emph{universality}: its spatial profile converges   to the $\sinc(x)$ function, independently of the distribution. We derive Central Limit Theorems for the fluctuations of the profile in the near-field regime (near the collision point) and in the far-regime.
\end{abstract}


\section{Introduction and Main Results}

We consider the focusing Nonlinear Schr\"{o}dinger (fNLS) equation in $1+1$ dimensions
\begin{equation}\label{eq:nls}
    \ii \psi_t + \frac{1}{2}\psi_{xx} + |\psi|^2 \psi = 0, \quad (x,t) \in \R\times [0,+\infty)\ .
\end{equation}
This equation has countless applications both in physics and engineering. It serves as a model of nonlinear waves: in particular, water waves of small amplitude over
infinite depth \cite{Zakharov68} and finite depth \cite{BenneyRoskes69, HasimotoOno72}, as well as almost monochromatic waves in a weakly nonlinear dispersive medium \cite{BenneyNewell67,CalogeroEckhaus87} and rogue waves  \cite{Chabchoubetal11,Peregrine83}.  It also appears in the study of the propagation of signal in fiber optics \cite{Suret2011, Chiaoetal64, Talanov64}, plasma of fluids \cite{Zakharov71},  Bose–Einstein condensations  \cite{PethickSmith02}, to mention a few. 
In this manuscript  we analyze the behaviour of soliton interactions.  In particular
\begin{itemize}
    \item 
we identify the 
 soliton  parameters that maximizes the  amplitude of the wave profile at the interaction time;
 \item we prove that a train  of   solitons of equal amplitudes and  well separated velocities  has a distinguished  collision profile given by the $sinc$  function;
\item  we prove the universality of  the $sinc$  profile   by showing that it surprisingly persists when the soliton amplitudes are  sampled from a probability  distribution, while  the velocities  are well separated, and deterministic. 
\end{itemize}
We further   confirm   the well known fact  that fast solitons interact linearly at leading order, and  we provide an explicit expression of the sub-leading (nonlinear) corrections that is instrumental to obtain our results.
Below we explain in detail our results.

\subsection{Deterministic  Soliton Solutions}

The fNLS equation \eqref{eq:nls} is an example of an integrable equation that admits {\em soliton solutions}. The simplest of these is the family of one-soliton solutions 
\begin{equation}
\label{one_soliton}
\begin{cases}
\psi(x,t;z,c) = \mu \sech\big(\mu(x-x_0-v t  )\big) \ee^{\ii \lp x v - \tfrac{t}{2}(v^2-\mu^2) - \phi_0- \pi \rp}, \\
z = \tfrac12(-v+ \ii \mu) \in \C^+, \quad c = \ii \mu \ee^{\mu x_0 + \ii \phi_0} \in \C^*=\C\setminus\{0\}\,,
\end{cases}
\end{equation}
where $\C^+$ denotes the complex upper half-plane. 
Each such solution describes a localized traveling wave with velocity $v$ and maximum amplitude $\mu$. Given $2N$ complex constants, which we call \emph{reflectionless scattering data},
\begin{equation}\label{eq:soliton parameters}
    \{(z_k, c_k)\}_{k=1}^N \in \mathbb{C}^+ \times \C^*,  \qquad z_k = \tfrac12(-v_k+ \ii \mu_k), \quad c_k = \ii \mu_k \ee^{\mu_k x_k + \ii \phi_k},
\end{equation}
equation \eqref{eq:nls} also admits an $N$-soliton solutions, which we denote by \( \psi_N(x,t) \), whose absolute value has the following determinantal representation  
\begin{equation}\label{eq:nls Kay-Moses}
    |\psi_N(x,t)|^2 = \partial_{xx} \log \det \big( {\boldsymbol I}_N + {\boldsymbol \Phi}_N(x,t) \overline{\boldsymbol \Phi_N(x,t)} \big),
\end{equation}
where $\boldsymbol\Phi_N(x,t)$ is the $N \times N$ matrix with entries
\begin{equation}
    \left[ \boldsymbol\Phi_N(x,t) \right]_{jk} = \frac{ \sqrt{c_j \overline c_k} \ee^{2\ii(\theta(z_j,x,t)-\theta(\overline z_k,x,t))}}{\ii(z_j-\overline z_k)},
    \qquad
    \theta(z,x,t) = x z + t z^2.
\end{equation}
Formula \eqref{eq:nls Kay-Moses} for the wave field of an $N$-soliton solution can become quite involved, as $N$ gets large. However,
the scattering theory for the reflectionless fNLS provides a general upper bound on the wave amplitude in terms of the scattering data: 
\begin{equation}\label{eq:upper_bound_result}
    | \psi_N(x,t)| \leq \sum_{k=1}^N \mu_k.
\end{equation}
In fact, this upper bound is tight, and the following proposition gives a precise description of how it can be realized.    
\begin{proposition}
\label{thm:darboux_intro}
The  $N$-soliton solution $\psi_N(x,t)$ of \eqref{eq:nls} described by reflectionless scattering data \eqref{eq:soliton parameters} with 
    norming constant 
    \begin{equation}\label{eq:max_norming_const}
    \displaystyle c_k =  \frac{\ee^{-2\ii z_k x_0 -2\ii z_k^2 t_0}}{B'(z_k)} \ , \quad  \forall \, k=1,\ldots, N\ , \qquad
    B(z) := \prod_{k=1}^N\frac{z-z_k}{z-\overline z_k}\ ,
\end{equation}
    realizes the upper bound in \eqref{eq:upper_bound_result} at \( x=x_0 \) and $t =t_0$, namely,
    \begin{equation}
    \label{upper bound}
        |\psi_N(x_0,t_0)|= \sum_{k=1}^N \mu_k\,.
    \end{equation}
\end{proposition}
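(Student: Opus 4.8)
The plan is to work directly with the standard meromorphic Riemann--Hilbert (inverse scattering) representation underlying the determinantal formula \eqref{eq:nls Kay-Moses}, rather than with the second logarithmic derivative in \eqref{eq:nls Kay-Moses} itself, since the Riemann--Hilbert data reconstruct $\psi_N$ (and not $|\psi_N|^2$) pointwise and are therefore better suited to evaluating the field at a single space--time point. The first observation is that the norming constants \eqref{eq:max_norming_const} are engineered precisely so that the phase--modulated residue constants at the target point collapse: using $\theta(z_k,x_0,t_0)=x_0 z_k+t_0 z_k^2$ one checks that
\[
  \widehat c_k \;:=\; c_k\,\ee^{2\ii\theta(z_k,x_0,t_0)} \;=\; \frac{1}{B'(z_k)}\,,
  \qquad k=1,\dots,N .
\]
These are exactly the residues appearing in the partial--fraction identity $1/B(z)=1+\sum_{k}\frac{1}{B'(z_k)}\,(z-z_k)^{-1}$, a fact I expect to drive the whole computation. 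It then suffices to solve the reflectionless problem for the sectionally meromorphic matrix $M(z)$, normalised to $M\to \boldsymbol{I}_2$ at infinity, with simple poles at $z_k,\overline z_k$ and residue relations tying the two columns through $\widehat c_k$, and to read off $\psi_N(x_0,t_0)=-2\ii\lim_{z\to\infty} z\,[M(z)]_{12}$.

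Next I would exploit the Schwarz symmetry $[M(z)]_{11}=\overline{[M(\overline z)]_{22}}$ and $[M(z)]_{21}=-\overline{[M(\overline z)]_{12}}$ to reduce the problem to two scalar unknowns. Writing $[M]_{11}=A(z)/\prod_k(z-\overline z_k)$ with $A$ monic of degree $N$ and $[M]_{12}=C(z)/\prod_k(z-z_k)$ with $\deg C\le N-1$, the reconstruction becomes $\psi_N(x_0,t_0)=-2\ii\,[z^{N-1}]\,C(z)$, the normalisation $\det M\equiv 1$ becomes $A A^\ast + C C^\ast=\prod_k(z-z_k)(z-\overline z_k)$ with $g^\ast(z):=\overline{g(\overline z)}$, and the residue conditions become interpolation conditions at the $z_k$. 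The decisive step is that, after substituting $\widehat c_k=1/B'(z_k)=\prod_j(z_k-\overline z_j)/\prod_{j\ne k}(z_k-z_j)$, every spurious product cancels and the residue conditions degenerate to the clean relations $C(z_k)=A(z_k)$ and, via the conjugate symmetry, $C^\ast(z_k)=-A^\ast(z_k)$. Since $A\mp C$ is monic of degree $N$ and vanishes at the $N$ points $z_k$, this forces
\[
  A(z)-C(z)=\prod_{k=1}^N(z-z_k)\,,
  \qquad
  A(z)+C(z)=\prod_{k=1}^N(z-\overline z_k)\,,
\]
so $C(z)=\tfrac12\big(\prod_k(z-\overline z_k)-\prod_k(z-z_k)\big)$. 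Its leading coefficient is $[z^{N-1}]C=\tfrac12\sum_k(z_k-\overline z_k)=\tfrac{\ii}{2}\sum_k\mu_k$, whence $\psi_N(x_0,t_0)=-2\ii\cdot\tfrac{\ii}{2}\sum_k\mu_k=\sum_k\mu_k$, which is real and positive and therefore realises \eqref{upper bound}.

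The main obstacle is not the algebra, which is short, but the bookkeeping needed to pin down the Riemann--Hilbert conventions so that the collapse above is exact: the placement of the poles in the two columns, the sign and form of the residue relations, the Schwarz symmetry, and the constant in the reconstruction formula must all be fixed consistently (any sign slip merely flips $\psi_N$ to $-\sum_k\mu_k$, which still gives \eqref{upper bound}, but I want the clean real value). I would also include the two standard checks that make the argument self--contained: first, that the candidate $M$ built from $A,C$ genuinely solves the problem, which is immediate since $A^\ast=A$ and $C^\ast=-C$ give $A A^\ast+C C^\ast=(A-C)(A+C)=\prod_k(z-z_k)(z-\overline z_k)$, so $\det M\equiv1$ and the pole structure is correct; and second, uniqueness via the usual vanishing lemma, which guarantees this is the solution and hence that $\sum_k\mu_k$ is the true field value. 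As a sanity test I would verify the one--soliton case, where $B'(z_1)=1/(z_1-\overline z_1)$ gives $\widehat c_1=z_1-\overline z_1=\ii\mu_1$ and the formula returns $\psi_1(x_0,t_0)=\mu_1$, matching \eqref{one_soliton}.
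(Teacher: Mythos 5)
Your proof is correct, but it follows a genuinely different route from the paper's. The paper argues entirely within the Darboux/dressing formalism of Section~2.1: with the choice \eqref{eq:max_norming_const}, the time-evolved norming constants become \( C_k(t_0)=\ee^{2\ii z_k x_0} \), and an induction over the dressing steps shows \( q_{k1}(x_0,t_0)=q_{k2}(x_0,t_0) \) for every \( k \), so each term in \eqref{psi_n} contributes exactly \( -\mu_k \) and \( \psi_N(x_0,t_0)=-\sum_k \mu_k \). You instead solve RHP~\ref{rhp-sch} in closed form at the single point \( (x_0,t_0) \): since the entries of \( \bbs M \) are rational with prescribed poles (cf.\ \eqref{M_reflectionless}), the collapse \( \widehat c_k=c_k\ee^{2\ii\theta(z_k;x_0,t_0)}=1/B'(z_k)=\overline P(z_k)/P'(z_k) \), with \( P(z)=\prod_k(z-z_k) \) and \( \overline P(z)=\prod_k(z-\overline z_k) \), turns the residue relations into the interpolation conditions \( A(z_k)=C(z_k) \) and \( A^\ast(z_k)=-C^\ast(z_k) \), which force \( A-C=P \) and \( A+C=\overline P \); extracting the \( z^{N-1} \) coefficient of \( C \) gives the result. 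Your route buys more than the paper's: it yields the explicit solution of the RHP at the collision point and makes transparent why \( 1/B'(z_k) \) is the natural choice (it is the residue of \( 1/B \) at \( z_k \)), whereas the paper's induction is more economical, needing neither the rational ansatz nor uniqueness of the RHP solution (which your argument does require, and which you correctly supply via \( \det\bbs M\equiv 1 \) and the vanishing lemma; it is also available from the unique-solvability proposition in Appendix~A). Two minor caveats, neither a gap: your pole-placement convention is transposed relative to RHP~\ref{rhp-sch} (there the \emph{first} column of \( \bbs M \) has its poles at the \( z_k \), the second at the \( \overline z_k \)), which, exactly as you anticipate, only flips the overall sign---with the paper's conventions one finds \( \psi_N(x_0,t_0)=-\sum_k\mu_k \), matching the dressing computation and consistent with Remark~\ref{rem:first_clt}; and the interpolation step \( A-C=P \) uses that the \( z_k \) are pairwise distinct, which is implicit in the simple-pole structure of the scattering data.
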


Additionally, it is well known, see \cite{Jenkins2018}, that whenever the velocities $v_k = -2\Re(z_k)$ are distinct, such a solution resolves asymptotically in the large time limit to the sum of one-soliton solutions
\begin{equation}
    \psi_N(x,t) = \sum_{k=1}^N \psi(x-x^\pm_k,t;z_k,c_k) e^{\ii \phi_k^\pm} + \bigo{e^{-\kappa|t|}}, \qquad t \to \pm \infty,
\end{equation}
where $\kappa >0$ depends on the eigenvalues $z_k$, $k=1,\dots N$, and the asymptotic phase shifts are given by
\begin{equation}\label{eq:NLS phase shifts}
    x_k^+ - x_k^- = \frac{2}{\mu_k} \sum_{j \neq k}^N \sgn(v_j-v_k) \log \left| \frac{ z_k-z_j}{z_k - \overline z_j} \right|, 
    \qquad
    \phi_k^+ - \phi_k^- = 2\sum_{j \neq k}^N \sgn(v_j-v_k) \arg\left( \frac{ z_k-z_j}{z_k - \overline z_j} \right).
\end{equation}
This formula is the justification for the statement that soliton collisions are elastic (see again \cite{Jenkins2018}), in the sense that the physical characteristics (velocity and amplitude) of the solitons are invariant before and after the collision; the only effect of the collision is the induced phase shifts.

On the other hand, for intermediate times the interactions of the "individual" solitons can generate very large wave peaks as supported by Proposition~\ref{thm:darboux_intro}. 
In this setting, the solitons interact almost linearly, indeed at the level of \eqref{eq:NLS phase shifts} one sees that 
 \begin{equation}
  \frac{ z_k - z_j}{z_k - \overline z_j} = 1 + \frac{2\ii \mu_j}{v_k - v_j} + \bigo{(v_k -v_j)^{-2}} ,  
\end{equation}
so that, when the soliton velocities are well separated, the phase shifts due to the pairwise soliton interactions become small. 
Our first main result gives a precise characterization of the asymptotic linearity of the soliton interactions when their velocities are well-separated. Moreover, it provides an explicit first correction to the leading order linearity with bounds on the higher order corrections.

\begin{theorem}
\label{thm:1}
Given $N \in \mathbb{N}$, consider the  $N$-soliton solution $\psi_N(x,t)$ of \eqref{eq:nls} corresponding to the reflectionless spectral data \eqref{eq:soliton parameters} with distinct velocities. Define 
\[
\Delta := \min_{j\neq k}|v_j-v_k|>0.
\]
Let \( \boldsymbol\mu=(\mu_1,\mu_2,\ldots,\mu_N) \) be the vector of soliton amplitudes. Then there exists $C_*>0$  such that for all $\Delta>C_*\|\boldsymbol\mu\|_\infty$ it holds that
\begin{multline} 
\label{linear interaction}
    \psi_N(x,t) 
   = \sum_{k=1}^N \psi^{(k)}(x,t) + \\
  \qquad  \frac{1}{2\ii } \sum_{j=1}^N\sum_{\substack{ k=1 \\ k \neq j}}^N
   \left[  \frac{\psi^{(k)}(x,t){m^{(j)}(x,t)}}{z_j - \overline z_k} - \frac{m^{(k)}(x,t) \psi^{(j)}(x,t)}{\overline z_j - \overline z_k}
    \right]
    + \bigo{\frac{\|\boldsymbol\mu\|_\infty \| \boldsymbol\mu\|_2^2}{\Delta^{2}}},
\end{multline}
where \( \psi^{(k)}(x,t) =\psi(x,t;z_k,c_k)\) is the one-soliton solution \eqref{one_soliton} and \( m^{(k)}(x,t) = \int_x^\infty |\psi^{(k)}(s,t)|^2 \di s \). The error bounds are uniform for all $(x,t)\in \R^2$ and depend on $N$ only through the norms $\| \boldsymbol\mu \|_\infty := \sup_{1\leq k \leq N} |\mu_k|$ and $\| \boldsymbol{\mu}\|_2^2 := \sum_{k=1}^N | \mu_k |^2 $.
\end{theorem}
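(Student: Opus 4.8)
The plan is to reconstruct the full complex field $\psi_N$ (not merely $|\psi_N|^2$) from its scattering data through a meromorphic Riemann--Hilbert problem (RHP), to solve that RHP \emph{exactly} as a finite linear algebraic system, and then to expand the solution perturbatively in the cross-soliton coupling, which carries the Cauchy factors $1/(z_j-\overline z_k)$ and is therefore $\bigo{\Delta^{-1}}$.

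First I would set up the normalized RHP for a $2\times 2$ matrix $M(z;x,t)\to \boldsymbol I$ as $z\to\infty$, meromorphic with simple poles at the $z_k$ and $\overline z_k$, whose residues feed one column of $M$ back into the other through the factors $\gamma_k:=c_k\ee^{2\ii\theta(z_k,x,t)}$, with the reconstruction $\psi_N = 2\ii\lim_{z\to\infty} z\, M_{12}$. Because the only singularities are simple poles, $M$ is rational and the residue conditions collapse to a closed linear system for the first-column residues $b_k$; the fNLS symmetry, which ties the residues at $\overline z_k$ to the conjugates of those at $z_k$, halves the unknowns. The result is $(\boldsymbol I+\mathcal K)\,b=\gamma$ with $\mathcal K_{km}=\gamma_k\sum_l \overline\gamma_l\big[(z_k-\overline z_l)(\overline z_l-z_m)\big]^{-1}$, together with $\psi_N=-2\ii\,\overline{\sum_k b_k}$. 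Solving the $N=1$ block gives $b=\gamma\mu^2/(\mu^2+|\gamma|^2)$ and recovers the one-soliton profile, fixing all conventions.

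Next I would split $\mathcal K=\mathcal K_{\mathrm d}+\mathcal K_{\mathrm{off}}$, where $\mathcal K_{\mathrm d}$ is the purely self-interacting diagonal part, $(\mathcal K_{\mathrm d})_{kk}=|\gamma_k|^2/\mu_k^2$, arising from the single term $l=m=k$, and $\mathcal K_{\mathrm{off}}$ collects every remaining contribution. Each entry of $\mathcal K_{\mathrm{off}}$ carries at least one Cauchy factor with distinct indices, hence a denominator of modulus $\ge \Delta/2$, so $\mathcal K_{\mathrm{off}}=\bigo{\Delta^{-1}}$. I would then factor $(\boldsymbol I+\mathcal K)^{-1}=\big(\boldsymbol I+(\boldsymbol I+\mathcal K_{\mathrm d})^{-1}\mathcal K_{\mathrm{off}}\big)^{-1}(\boldsymbol I+\mathcal K_{\mathrm d})^{-1}$ and expand the first factor in a Neumann series, which converges precisely when $\Delta>C_*\|\boldsymbol\mu\|_\infty$ drives the relevant operator norm below one. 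The zeroth order, $b^{(0)}=(\boldsymbol I+\mathcal K_{\mathrm d})^{-1}\gamma$ with $b^{(0)}_k=\gamma_k/(1+|\gamma_k|^2/\mu_k^2)$, returns $-2\ii\,\overline{\sum_k b^{(0)}_k}=\sum_k\psi^{(k)}$.

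The heart of the argument is the first-order term $b^{(1)}=-(\boldsymbol I+\mathcal K_{\mathrm d})^{-1}\mathcal K_{\mathrm{off}}\,b^{(0)}$. Only two families of entries of $\mathcal K_{\mathrm{off}}$ carry a single large Cauchy factor and hence survive at order $\Delta^{-1}$: those with $l=k,\ m\ne k$, and those with $l\ne k,\ m=l$; in each the remaining \emph{self} factor $z_k-\overline z_k=\ii\mu_k$ or $\overline z_l-z_l=-\ii\mu_l$ combines with $|\gamma_\bullet|^2$ and the diagonal resolvent $(1+|\gamma_\bullet|^2/\mu_\bullet^2)^{-1}$ to reproduce exactly $m^{(\bullet)}=2\mu_\bullet\frac{|\gamma_\bullet|^2/\mu_\bullet^2}{1+|\gamma_\bullet|^2/\mu_\bullet^2}=\int_x^\infty|\psi^{(\bullet)}|^2\di s$, while the other dressed factor becomes $\psi^{(\bullet)}$. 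This is what produces the stated double sum $\frac{1}{2\ii}\sum_j\sum_{k\ne j}\big[\tfrac{\psi^{(k)}m^{(j)}}{z_j-\overline z_k}-\tfrac{m^{(k)}\psi^{(j)}}{\overline z_j-\overline z_k}\big]$. I expect the bookkeeping here --- sorting terms by how many large Cauchy factors they carry, resumming each soliton's self-interaction to all orders, and matching the $m^{(j)}$ factors with the correct signs --- to be one main obstacle. The other, and sharper, obstacle is the uniform-in-$(x,t)$ remainder bound: since at a synchronized collision the soliton centers may coincide, spatial separation is unavailable and one must instead estimate the Neumann tail through Hilbert--Schmidt/operator norms, exploiting the square-summability $\sum_{j\ne k}|v_j-v_k|^{-2}\lesssim \Delta^{-2}$ (which, unlike $\sum|v_j-v_k|^{-1}$, is free of logarithms) together with the uniform bounds $|\gamma_k|\mu_k/(\mu_k^2+|\gamma_k|^2)\le\tfrac12$ and $\mu_k^2/(\mu_k^2+|\gamma_k|^2)\le 1$ on the dressed profiles. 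Pairing the weighted vectors $\boldsymbol{1}$ and $\gamma$ against the two coupling operators via Cauchy--Schwarz then organizes the index sums into $\|\boldsymbol\mu\|_\infty\|\boldsymbol\mu\|_2^2$, giving the claimed error $\bigo{\|\boldsymbol\mu\|_\infty\|\boldsymbol\mu\|_2^2/\Delta^2}$.
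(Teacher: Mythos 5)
Your reduction of the reflectionless RHP to the finite linear system $(\boldsymbol I+\mathcal K)\beta=\gamma$, with $\mathcal K_{km}=\gamma_k\sum_l\overline{\gamma_l}\big[(z_k-\overline z_l)(\overline z_l-z_m)\big]^{-1}$ and $\psi_N=-2\ii\,\overline{\sum_k\beta_k}$, is correct (it is the system \eqref{alpha_beta} of Appendix~\ref{app:facts} after eliminating the $\alpha_k$'s), and your identification of the zeroth- and first-order terms is essentially right. The genuine gap is the quantitative claim on which the whole scheme rests: it is \emph{not} true that $\mathcal K_{\mathrm{off}}=\bigo{\Delta^{-1}}$, nor that $(\boldsymbol I+\mathcal K_{\mathrm d})^{-1}\mathcal K_{\mathrm{off}}$ has small norm once $\Delta>C_*\|\boldsymbol\mu\|_\infty$, \emph{uniformly in} $(x,t)$. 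The entries of $\mathcal K_{\mathrm{off}}$ carry the factors $\gamma_k\overline{\gamma_l}$, and $|\gamma_l|=|c_l|\,\ee^{-\mu_l(x-v_lt)}$ sweeps all of $(0,\infty)$ as $x$ runs over $\R$; a Cauchy denominator of size $\Delta$ cannot compensate an unbounded numerator. Left multiplication by $(\boldsymbol I+\mathcal K_{\mathrm d})^{-1}$ dresses only the outermost index: in
\begin{equation*}
\big[(\boldsymbol I+\mathcal K_{\mathrm d})^{-1}\mathcal K_{\mathrm{off}}\big]_{km}=\frac{\mu_k^2\,\gamma_k}{\mu_k^2+|\gamma_k|^2}\sum_{l}\frac{\overline{\gamma_l}}{(z_k-\overline z_l)(\overline z_l-z_m)}
\end{equation*}
(the sum excluding only the term $l=m=k$), every internal factor $\overline{\gamma_l}$ with $l\neq k$ remains bare. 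Already for $N=2$ with $\mu_2>\mu_1$, the $(1,1)$ entry has modulus $\frac{\mu_1^2|\gamma_1|}{\mu_1^2+|\gamma_1|^2}\cdot\frac{|\gamma_2|}{|z_1-\overline z_2|^2}\sim C\,\ee^{(\mu_1-\mu_2)x}\to\infty$ as $x\to-\infty$, so the operator norm is unbounded on $\R^2$ no matter how large $\Delta$ is; likewise individual Neumann terms are unbounded (take $\mu_2>2\mu_1$ and pair this entry with $b_1^{(0)}$). Consequently neither the convergence of your Neumann series nor the uniform remainder $\bigo{\|\boldsymbol\mu\|_\infty\|\boldsymbol\mu\|_2^2/\Delta^{2}}$ follows from the argument as you set it up.

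What is missing is not bookkeeping but a structural step that must \emph{precede} any expansion: the coupling operator has to be rewritten so that every index appears only through the uniformly bounded one-soliton combinations $|\psi^{(l)}|=2\mu_l^2|\gamma_l|/(\mu_l^2+|\gamma_l|^2)\leq\mu_l$ and $m^{(l)}=2\mu_l|\gamma_l|^2/(\mu_l^2+|\gamma_l|^2)\leq2\mu_l$, in every slot of every product, not only the leftmost one. This symmetric dressing is exactly what the paper's chain of transformations \eqref{eq:N.def}, \eqref{eq:a_k/b_k}, \eqref{rhp:O_def} accomplishes: the resulting jump matrix \eqref{eq:V0.expand} has coefficients $\psi^{(k)},m^{(k)}$ obeying \eqref{eq:estimates}, so that $\|\bbs V_O-\bbs I\|_{L^\infty(\Gamma)}\lesssim\|\boldsymbol\mu\|_\infty/\Delta$ holds uniformly on $\R^2$, and only then is the small-norm/Neumann argument legitimate. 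In your finite-dimensional language, the analogous fix is to keep both families of unknowns $\alpha_k,\beta_k$ and conjugate the system by diagonal weights built from $\mu_k^2+|\gamma_k|^2$ so that the off-diagonal coupling is fully dressed; with that modification your outline (zeroth order $\sum_k\psi^{(k)}$, first order the double sum, tail controlled via $\sum_{j\neq k}|v_j-v_k|^{-2}\lesssim\Delta^{-2}$ and Cauchy--Schwarz) does go through. A last, harmless remark: carried out exactly, your first-order term produces the denominator $\overline z_j-z_k$ where \eqref{linear interaction} has $\overline z_j-\overline z_k$; the difference between the two double sums is itself $\bigo{\|\boldsymbol\mu\|_\infty\|\boldsymbol\mu\|_2^2/\Delta^{2}}$, so both express the same theorem.
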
 

\begin{remark}
\label{error terms}
    The proof of Theorem~\ref{thm:1} shows that the first correction term, the double sum in \eqref{linear interaction}, is of order \( \|\boldsymbol{\mu}\|_2^2/\Delta \). In particular, if the magnitudes \( \mu_k \) are uniformly (independently of \( N\)) bounded above and away from zero, then \( \|\boldsymbol{\mu}\|_2^2 \sim N \). In this case, the leading term in \eqref{linear interaction} can be of order \( N \) (see \eqref{upper bound} and \eqref{sinc profile} further below), the first correction term is of order \( N/\Delta \), while the remaining error terms are of order \( N/\Delta^2 \).
\end{remark}

Theorem~\ref{thm:1} shows that for well separated velocities the phase shifts induced by soliton interactions become negligible. In what follows, we tune the discrete spectral data so that: (1) the  solitons all collide at a fixed point in space time $(x_0,t_0)$, that in view of the Galilean invariance of the fNLS equation, we assume without loss of generality to be \( (0,1) \); and (2) the soliton phases are tuned so that the maximum of the solution at collision time is exactly equal to the sum of the soliton amplitudes, as proven in Proposition \ref{thm:darboux_intro}.

\begin{figure}[t!]
\centering
\begin{overpic}[width=.9\linewidth]{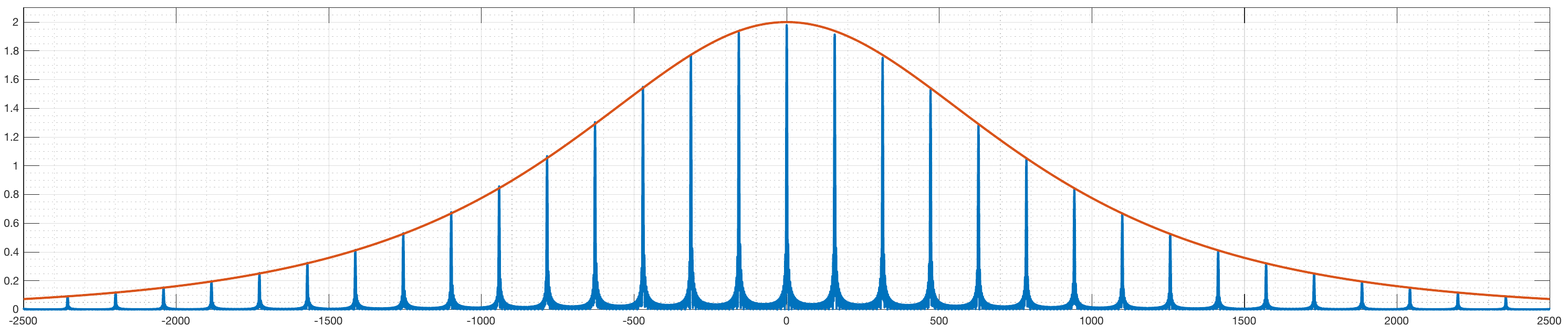}
\put(40,180){$\mathbf{a)}$}
\end{overpic}
\begin{overpic}[width=.9\linewidth]{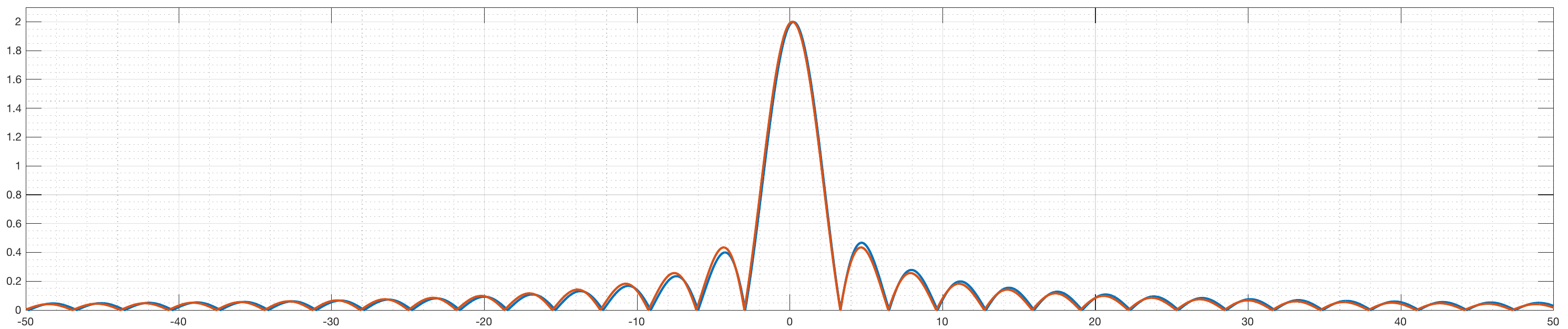}
\put(39.5,175.5){$\mathbf{b)}$}
\end{overpic}
\begin{overpic}[width=.9\linewidth]{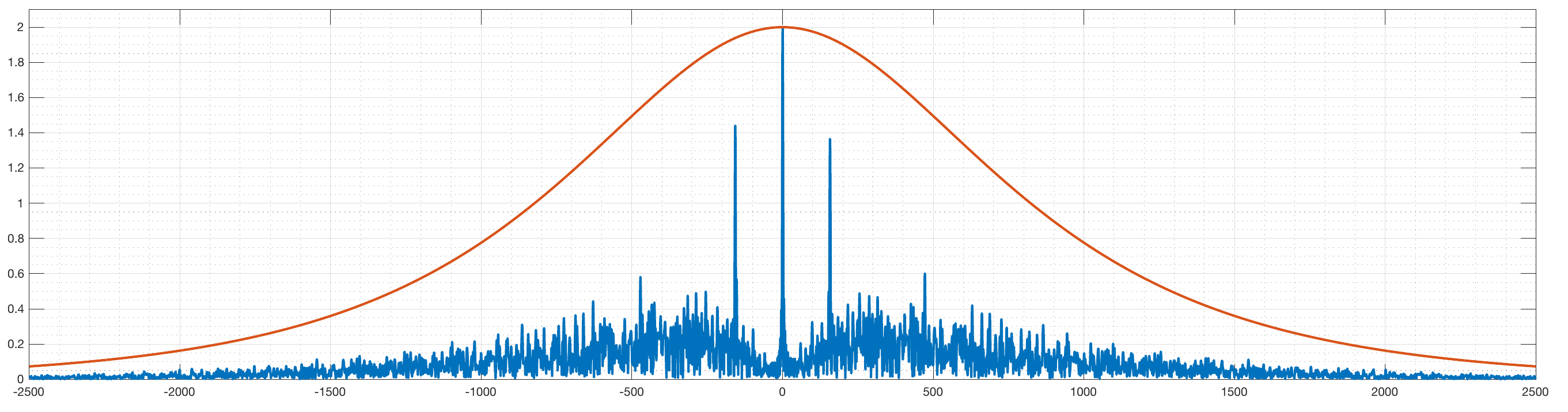}
\put(46,224.5){$\mathbf{c)}$}
\end{overpic}
\begin{overpic}[width=.9\linewidth]{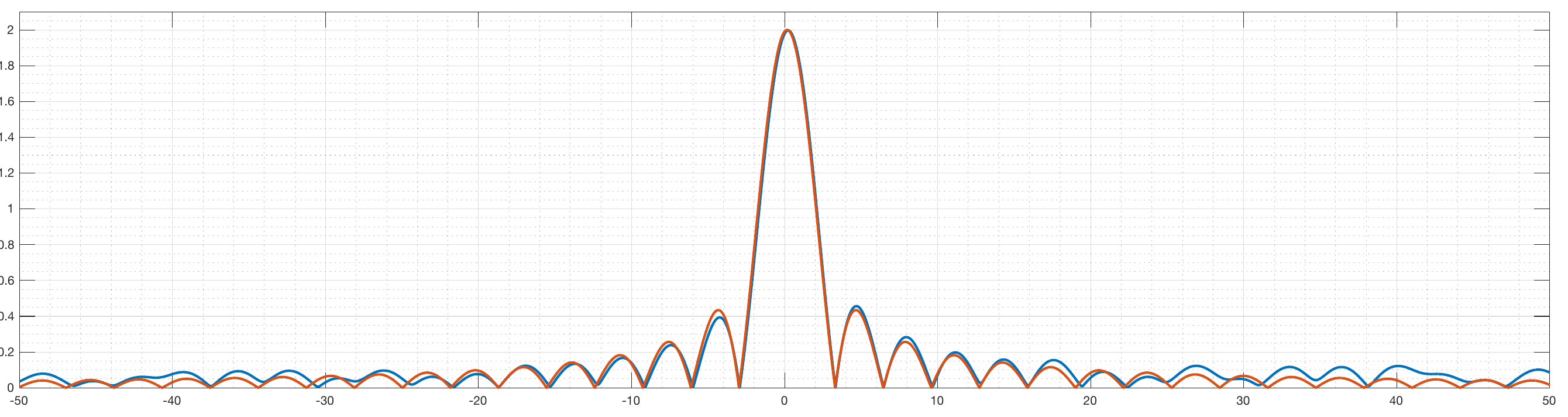}
\put(37,225.5){$\mathbf{d)}$}
\end{overpic}
\caption{
A comparison of equal spaced \textit{vs} randomly spaced velocities in Corollary~\ref{cor:sinc}.
\textbf{a)} plot of the scaled $N$-soliton solution \( \left|\frac{1}{N} \psi_N\left(\frac{2X}{N V},1\right)\right| \) at collision time (blue curve) with parameters \( N = 50 \), \( \mu = 2, V=50 \), and equally spaced velocities \( v_k = kV \), \(k=1,\dots, N\), compared with the envelope \( \mu \sech\left(\frac{2\mu X}{N V}\right) \) (red curve, see \eqref{eq:psiN.rescaled} with $T=0$)) for \( |X| < 2500 \).  
\textbf{b)} comparison between the same function (blue curve) and \( \frac{2\sin X}{X} \) (red curve) for \( |X| < 50 \).   
\textbf{c)} and \textbf{d)}: same setup as in the first two panels, but with perturbed velocities \( v_k = (k + \nu_k)V \), where \( \nu_k \) are i.i.d. random variables uniformly distributed on \( [-\tfrac{1}{5}, \tfrac{1}{5}] \).
}
\label{fig:sinc}
\end{figure}


\begin{corollary}
\label{cor:sinc}
Given constants \( \alpha\in \mathbb{R} \), and \( \mu, V >0\), consider the $N$-soliton solution $\psi_N(x,t)$ of \eqref{eq:nls} with reflectionless scattering data \eqref{eq:soliton parameters} satisfying
\begin{equation}\label{eq:scattering_data}
\mu_k=\mu, \quad v_k\in[(\alpha N+k-1)V,(\alpha N+k)V], \quad \text{and} \quad c_k = \ii \mu_k \ee^{-2\ii z_k^2},
\end{equation}
for all \( k=1,\ldots,N\), where we also assume that \( \Delta = \min\limits_{j \neq k } |v_j-v_k|  > 0 \) (necessarily \( V\geq \Delta\)). Then, there exists \( N_0 \in \mathbb N\) and \(C_*>0 \) such that whenever $ \Delta >C_*  \mu $ and \( N\geq N_0 \) it holds that
\[
\frac1N \psi_N\left( \frac{2X}{N V},1+\frac{T}{(N V)^2} \right) = \mu e^{\ii(2\alpha X-\alpha^2\frac T2)} \psi_0\left(X-\frac{\alpha T}2,T\right) + \bigo{\max\left\{\frac1N,\frac1{\Delta}\right\}},
\]
where the error is locally uniform for \( X,T\in\R \) and
\begin{equation}
\label{fresnel}
\psi_0(X,T) = - \int_0^1 \ee^{\ii\left( 2Xs - \tfrac{T}{2}s^2 \right)} \di s.
\end{equation}
In particular, when \( T=0 \), it holds that
\begin{equation}
    \frac1N \psi_N\left (\frac{2X}{N V},1\right) = -\mu \ee^{\ii (2\alpha+1)X} \frac{\sin(X)}{X} + \bigo{\max\left\{\frac1N,\frac1{\Delta}\right\}}.
    \label{sinc profile}
\end{equation}
\end{corollary}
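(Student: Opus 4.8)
\emph{Proof proposal.} The plan is to read this off from Theorem~\ref{thm:1}, reducing the rescaled profile to a Riemann sum of one‑solitons whose limit is the Fresnel integral $\psi_0$. Since $\mu_k=\mu$ we have $\|\boldsymbol\mu\|_\infty=\mu$ and $\|\boldsymbol\mu\|_2^2=N\mu^2$, and the hypothesis $\Delta>C_*\mu$ is exactly that of the theorem. Writing $D_N(x,t)$ for the double sum in \eqref{linear interaction} and dividing that identity by $N$ gives
\[
\frac1N\psi_N(x,t)=\frac1N\sum_{k=1}^N\psi^{(k)}(x,t)+\frac1N D_N(x,t)+\bigo{\tfrac{\mu^3}{\Delta^2}}.
\]
By Remark~\ref{error terms} one has $D_N=\bigo{N\mu^2/\Delta}$, so with $\mu$ fixed both correction terms are together $\bigo{1/\Delta}$, uniformly in $(x,t)$. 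Everything therefore reduces to analyzing $\frac1N\sum_k\psi^{(k)}$ at the rescaled argument.

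Next I would make the tuned data explicit. Expanding $c_k=\ii\mu\,\ee^{-2\ii z_k^2}$ via $z_k^2=\tfrac14(v_k^2-\mu^2-2\ii v_k\mu)$ and matching \eqref{eq:soliton parameters} yields $x_k=-v_k$ and $\phi_k=-\tfrac12(v_k^2-\mu^2)$. Inserting these into the one‑soliton formula \eqref{one_soliton} at $x=\tfrac{2X}{NV}$, $t=1+\tfrac{T}{(NV)^2}$, the $\sech$–argument is $\mu\big(x-v_k(t-1)\big)=\bigo{1/(NV)}$, so $\sech=1+\bigo{1/(NV)^2}$, while the phase collapses to $2Xs_k-\tfrac{T}{2}s_k^2-\pi+\bigo{1/(NV)^2}$ with $s_k:=v_k/(NV)\in[\alpha+\tfrac{k-1}{N},\,\alpha+\tfrac{k}{N}]$. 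Hence each $\psi^{(k)}=-\mu\,\ee^{\ii(2Xs_k-\frac{T}{2}s_k^2)}\big(1+\bigo{1/(NV)^2}\big)$, and
\[
\frac1N\sum_{k=1}^N\psi^{(k)}=-\frac{\mu}{N}\sum_{k=1}^N\ee^{\ii(2Xs_k-\frac{T}{2}s_k^2)}+\bigo{1/(NV)^2}.
\]

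Then I would recognize a tagged Riemann sum: the $s_k$ lie one per cell of the uniform partition of $[\alpha,\alpha+1]$ into $N$ subintervals, and $f(s)=\ee^{\ii(2Xs-\frac{T}{2}s^2)}$ is Lipschitz there with constant $\bigo{|X|+|T|}$, so $\frac1N\sum_k f(s_k)=\int_\alpha^{\alpha+1}f+\bigo{(|X|+|T|)/N}$ locally uniformly. The substitution $s=\alpha+u$ factors the quadratic phase and, writing $2X-T\alpha=2\big(X-\tfrac{\alpha T}{2}\big)$, gives $-\mu\int_\alpha^{\alpha+1}f=\mu\,\ee^{\ii(2\alpha X-\frac{\alpha^2}{2}T)}\psi_0\big(X-\tfrac{\alpha T}{2},T\big)$, the claimed leading term. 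Collecting the $\bigo{1/\Delta}$ nonlinear correction and the $\bigo{1/N}$ Riemann error (the $\bigo{1/(NV)^2}$ linearizations being dominated) yields $\bigo{\max\{1/N,1/\Delta\}}$. For $T=0$ the conclusion is immediate from $\psi_0(X,0)=-\int_0^1\ee^{2\ii Xs}\di s=-\ee^{\ii X}\sin(X)/X$, giving \eqref{sinc profile}.

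The substantive input is entirely Theorem~\ref{thm:1}; the work proper to this corollary is the bookkeeping of three distinct scales — the $\bigo{1/\Delta}$ nonlinear correction, the $\bigo{1/N}$ discretization error, and the $\bigo{1/(NV)^2}$ local linearizations of $\sech$ and of the phase — and confirming all are dominated by $\max\{1/N,1/\Delta\}$ locally uniformly in $(X,T)$. I expect the main subtlety to be that the $s_k$ are not exactly equispaced but only pinned one per subinterval, so the discretization estimate must use the tagged‑partition bound rather than a midpoint or endpoint rule; since the integrand is smooth with locally bounded, $(X,T)$‑dependent derivative, this still produces the clean $\bigo{1/N}$ rate.
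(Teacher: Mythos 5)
Your proposal is correct and follows essentially the same route as the paper's own proof: invoke Theorem~\ref{thm:1} with Remark~\ref{error terms} (so both corrections are $\bigo{1/\Delta}$ after dividing by $N$), extract $x_k=-v_k$, $\phi_k=(\mu^2-v_k^2)/2$ from the tuned norming constants, reduce the rescaled sum of one-solitons to a tagged Riemann sum over $s_k=v_k/(NV)\in[\alpha+\tfrac{k-1}{N},\alpha+\tfrac kN]$, and finish with the substitution $s=\alpha+u$ and the evaluation $\psi_0(X,0)=-\ee^{\ii X}\sin(X)/X$. Your bookkeeping of the three error scales, including the observation that $V\geq\Delta$ makes the $\bigo{1/(NV)^2}$ linearization errors negligible, matches the paper's use of $\Delta\leq V$ in its error bound.
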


In Corollary~\ref{cor:sinc} both \( N \) and \( \Delta \) act as large parameters and we fixed the norming constants \( c_k \) so that all the solitons collide at $(x_0,t_0)=(0,1)$. Moreover,  at time \( t=0 \), the individual solitons are well separated as their individual peaks are located at \( x_k=-v_k \), see \eqref{eq:soliton parameters}. Notice that, in the regime where $N\to \infty$, this choice of norming constants coincides with \eqref{eq:max_norming_const} from Proposition \ref{thm:darboux_intro}, since \(\displaystyle \lim_{N\to\infty}(B'(z_k))^{-1} = \ii\mu_k \). The results of Corollary~\ref{cor:sinc} are numerically illustrated in \figurename~\ref{fig:sinc} 
in the case of uniformly spaced velocities \( v_k=k V\) and with perturbed ones $v_k = k(V+\nu_k)$ where $\nu_k$ is a uniform distribution in $\left(-\frac{1}{5},\frac{1}{5}\right)$.
In fact, despite our results being valid in the large $N$ limit, \figurename~\ref{fig:sinc_5sol} illustrates a numerically good prediction of the behavior of the solution near the space-time collision point also for a small number of solitons.

\begin{figure}[t!]
\centering
\includegraphics[width=.9\linewidth]{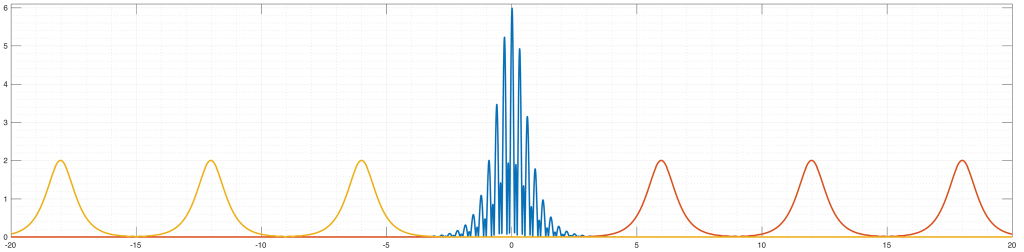}
\includegraphics[width=.9\linewidth]{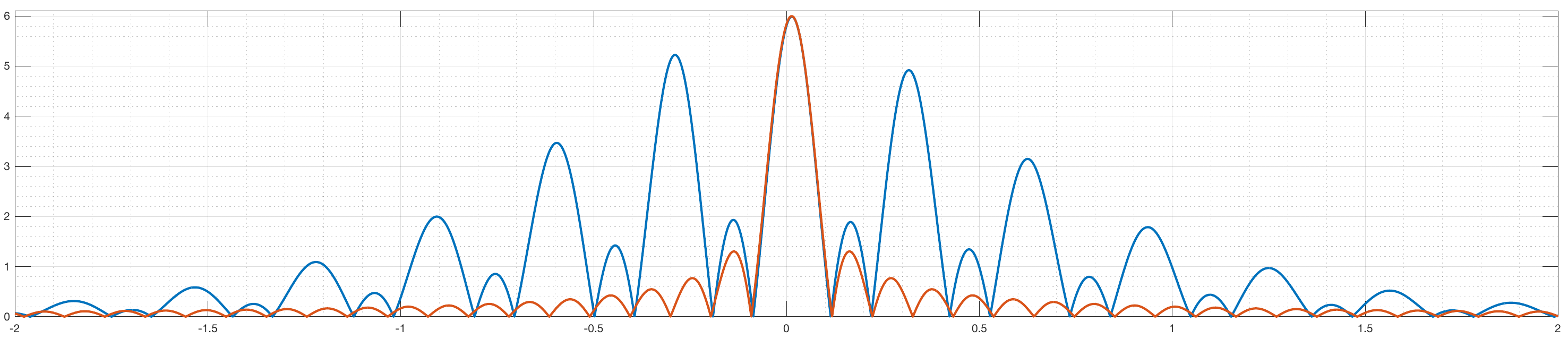}
\includegraphics[width=.9\linewidth]{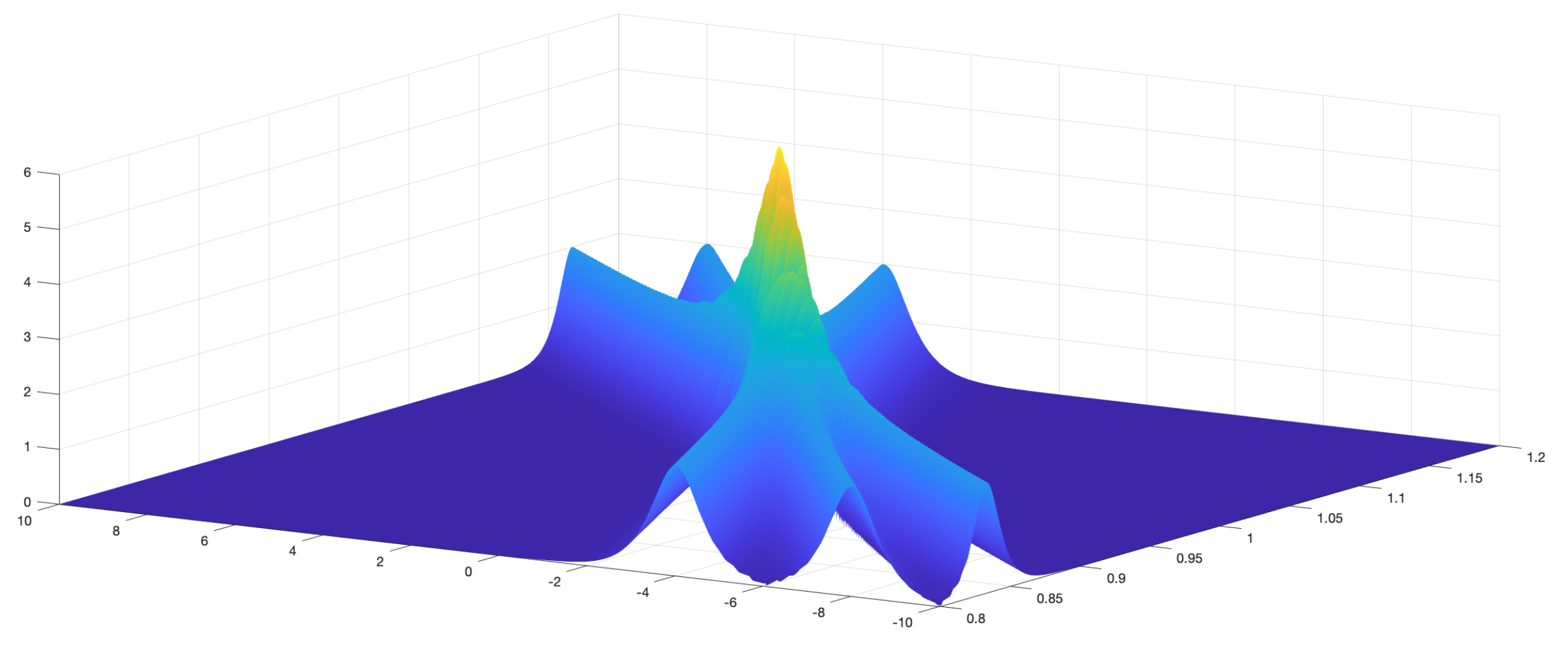}
    \caption{Top: the function \( |\psi_3(x,t)|\) with \( V=\Delta=20 \) and $\mu=2$ at times \( t=.7 \) (yellow), \( t=1 \) (blue), and \( t=1.3 \) (red) for \( |x|<20 \). Middle: comparison between \( |\psi_3(x,1)|\) (blue) and a shifted and scaled sinc profile (red) for \( |x|<2 \). Bottom: 3D graph of \( |\psi_3(x,t)|\) for \( |x|<10\) and \( |t-1|<0.2\). The setting is the one analyzed in Corollary \ref{cor:sinc} with \( v_k=k V \).}
    \label{fig:sinc_5sol}
\end{figure}

\subsection{Stochastic  Soliton Solution}

We now consider synchronized solitons with \emph{random scattering data}. We assume that  the magnitudes $\mu_k$ of the scattering data are random, while we keep the velocities $v_k$ deterministic. In order to control probabilistic fluctuations we need sufficiently strong control on the error terms in Theorem~\ref{thm:1}. To this end, we make the following assumptions.
\begin{assumption} \label{hp:random_scattering_intro}
In our probabilistic calculations we assume the following:
\begin{enumerate}
 \item The velocities are deterministic, equally spaced with spacing which grows with the number of solitons:
 \begin{equation}\label{eq:scalingDelta}
   v_k = k\Delta, \quad \Delta = \beta N^\gamma \ , \qquad \beta>0, ~~ \gamma > \tfrac{1}{2};
 \end{equation}
\item The amplitudes $\mu_k$ are independent identically distributed (i.i.d.) random variables 
\begin{equation}
\mu_k \ \sim \ \mathcal D,
\end{equation}
where $\mathcal D$ is any distribution with positive support, mean $\mu_{\mathcal D}$,  and sub exponential with parameters $(\nu,\alpha)$  (see Definition \ref{def:subexp})
\item The norming constants are random variables
\begin{equation}
    c_k = \ii \mu_k \ee^{-2\ii z_k^2}\,, \qquad z_k := \tfrac12(-k\Delta + \ii \mu_k) \ .
\end{equation}
\end{enumerate}
\end{assumption}

Under these assumptions, the $N$-soliton solution $\psi_N(x,t)$ becomes a random variable, where the solitons have random amplitudes and velocities proportional to $N$. 
In this setting, we prove a law of large numbers at the collision point showing that the emerging sinc-profile is universal, independent from the choice of the random distribution (Proposition \ref{prop:prop_CLT_intro}). This result is consistent with the deterministic case (Corollary \ref{cor:sinc}). Furthermore, we obtain central limit theorems for the fluctuations of the profile, both in the near-field region, close to the rogue wave peak (Theorem \ref{thm:local_CLT_intro}), and in the far-field region (Theorem \ref{thm:envelope_intro}). 

\begin{proposition}[\bf Universal $\sinc$-profile]
\label{prop:prop_CLT_intro}
    By choosing random soliton amplitudes according to Assumption \ref{hp:random_scattering_intro},  it holds for each fixed pair $(X,T)\in \R^2$ that
        \begin{equation}
        \frac{1}{\mu_{\mathcal D} N}\psi_N\left(\frac{2X}{\Delta N}, 1 +\frac{T}{\Delta^2N^2} \right) \to \psi_0(X,T) \quad \text{as $N\to \infty$,  in probability}\,,
    \end{equation}
    where $\mu_\cD$ is the mean value of the distribution $\cD$ and $\psi_0(X,T)$ was defined in \eqref{fresnel}.
    In particular, at collision time ($T=0$), we have 
       \begin{equation}
       \label{eq:sinc_univeral}
        \frac{1}{\mu_{\mathcal D} N} \psi_N\left(\frac{2X}{
        \Delta N},1\right) \to -\frac{\sin(X)}{X}e^{\ii X} \quad \text{as $N\to \infty$, in probability}\,.
    \end{equation}
\end{proposition}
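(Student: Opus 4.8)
\emph{Proof strategy.} The plan is to use Theorem~\ref{thm:1} to replace the full field $\psi_N$ by the linear superposition $\sum_{k=1}^N\psi^{(k)}$ and then to recognize the scaled superposition as a random Riemann sum to which a law of large numbers applies. Since Theorem~\ref{thm:1} is a deterministic statement valid only when $\Delta>C_*\|\boldsymbol\mu\|_\infty$, I first introduce the event $\mathcal E_N=\{\|\boldsymbol\mu\|_\infty\le \Delta/C_*\}$. The sub-exponentiality in Assumption~\ref{hp:random_scattering_intro} (Definition~\ref{def:subexp}) gives, for $N$ large, the union bound $\mathbb P(\mathcal E_N^c)\le N\,\mathbb P(\mu_1>\Delta/C_*)\le N\,\ee^{-c\beta N^\gamma}\to0$, so it suffices to prove convergence in probability restricted to $\mathcal E_N$, on which \eqref{linear interaction} is in force.

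Next I would evaluate each $\psi^{(k)}$ at the scaled point $x=2X/(\Delta N)$, $t=1+T/(\Delta^2N^2)$. Reading $x_k=-v_k$ and $\phi_k=-\tfrac12(v_k^2-\mu_k^2)$ off $c_k=\ii\mu_k\ee^{-2\ii z_k^2}$ and substituting $v_k=k\Delta$ into \eqref{one_soliton}, a direct computation shows that the argument of $\sech$ equals $\mu_k\big(\tfrac{2X}{\Delta N}-\tfrac{kT}{\Delta N^2}\big)$ and the total phase equals $\tfrac{2Xk}{N}-\tfrac T2(k/N)^2-\pi$ up to an additive correction of size $O\big(\mu_k^2/(\Delta N)^2\big)$. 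On $\mathcal E_N$ both small quantities are $O(1/N)$ uniformly in $k\le N$, whence
\[
\psi^{(k)}\!\left(\tfrac{2X}{\Delta N},\,1+\tfrac{T}{\Delta^2N^2}\right)
=-\mu_k\,\ee^{\ii\left(\frac{2Xk}{N}-\frac T2(k/N)^2\right)}+r_k,
\]
where, using $|1-\sech a|=O(a^2)$ and $|\ee^{\ii\delta}-1|=O(\delta)$, the remainder obeys $|r_k|=O\big(\mu_k^3/(\Delta N)^2\big)$. Since sub-exponential variables have all moments, $\sum_k\mu_k^3=O_P(N)$, so $\sum_k|r_k|=O_P\big(\Delta^{-2}N^{-1}\big)$ and $\tfrac1{\mu_{\mathcal D}N}\sum_k r_k=o_P(1)$.

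It remains to discard the other terms in \eqref{linear interaction} and carry out the law of large numbers. Using $|\psi^{(k)}|\le\mu_k$, $0\le m^{(k)}\le 2\mu_k$, and the separations $|z_j-\overline z_k|,\,|\overline z_j-\overline z_k|\ge\tfrac12|j-k|\Delta$, the double sum is bounded in modulus by $\tfrac{C}{\Delta}\sum_k\sum_{j\ne k}\tfrac{\mu_j\mu_k}{|j-k|}$, with expectation $O(\mu_{\mathcal D}^2N\log N/\Delta)$; the final error term $O(\|\boldsymbol\mu\|_\infty\|\boldsymbol\mu\|_2^2/\Delta^2)$ is, on $\mathcal E_N$, at most $\|\boldsymbol\mu\|_2^2/(C_*\Delta)=O_P(N^{1-\gamma})$. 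Divided by $\mu_{\mathcal D}N$, both are $o_P(1)$ since $\gamma>\tfrac12$. One is thus reduced to the random Riemann sum $S_N:=\tfrac1{\mu_{\mathcal D}N}\sum_{k=1}^N\mu_k f(k/N)$ with $f(s)=\ee^{\ii(2Xs-\frac T2 s^2)}$. Because the $\mu_k$ are i.i.d.\ with mean $\mu_{\mathcal D}$ and finite variance $\sigma^2$, one has $\mathbb E[S_N]=\tfrac1N\sum_k f(k/N)\to\int_0^1 f(s)\,\di s$ and, by independence, $\mathbb E\big|S_N-\mathbb E[S_N]\big|^2=\tfrac{\sigma^2}{\mu_{\mathcal D}^2N^2}\sum_k|f(k/N)|^2=\tfrac{\sigma^2}{\mu_{\mathcal D}^2N}\to0$, so Chebyshev gives $S_N\to\int_0^1 f(s)\,\di s$ in probability. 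Collecting the estimates, $\tfrac1{\mu_{\mathcal D}N}\psi_N=-S_N+o_P(1)\to-\int_0^1 f(s)\,\di s=\psi_0(X,T)$ by \eqref{fresnel}; setting $T=0$ and evaluating $\int_0^1\ee^{2\ii Xs}\,\di s=\ee^{\ii X}\sin(X)/X$ yields \eqref{eq:sinc_univeral}.

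\emph{Main obstacle.} The delicate point is the transfer of the deterministic bound of Theorem~\ref{thm:1} into the stochastic regime: its hypothesis and its error term are expressed through the random quantities $\|\boldsymbol\mu\|_\infty$ and $\|\boldsymbol\mu\|_2^2$, either of which could in principle be large enough to invalidate \eqref{linear interaction} or to destroy the negligibility of the correction terms after the $\mu_{\mathcal D}N$ scaling. This is exactly where sub-exponentiality is indispensable: it forces $\|\boldsymbol\mu\|_\infty=O_P(\log N)$, so that $\Delta=\beta N^\gamma$ dominates it with overwhelming probability, and $\|\boldsymbol\mu\|_2^2=O_P(N)$, pushing every correction below the leading scale. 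Once these two facts are secured, the probabilistic content collapses to the elementary second-moment computation for $S_N$, with randomness entering only through the law of large numbers for the weights $\mu_k$.
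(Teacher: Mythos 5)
Your proposal is correct, and it reaches the conclusion by a genuinely more elementary probabilistic route than the paper. The paper deduces Proposition~\ref{prop:prop_CLT_intro} as a corollary of the near-field central limit theorem (Theorem~\ref{thm:local_CLT_intro}): it first proves a CLT for the approximating sum \(\widehat\psi_N(X,T)=-\sum_{k}\mu_k \ee^{\ii(2Xk/N-Tk^2/(2N^2))}\) via the Nagaev--Guivarc'h method (Lemma~\ref{lem:CLT_approx}), transfers it to \(\psi_N\) using the error estimates of Lemma~\ref{prop:prob_error_estimate} and Proposition~\ref{prop:psi_approx} together with a high-probability event controlling the small-norm hypothesis, and then observes that a CLT with \(\sqrt{N}\)-normalization forces the \(N\)-normalized quantity to converge in probability to its mean. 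Your argument shares the same deterministic backbone and essentially the same error estimates --- your event \(\mathcal E_N=\{\|\boldsymbol\mu\|_\infty\le\Delta/C_*\}\) plays exactly the role of the paper's set \(\Omega\) (with threshold \(\Delta/C_*\) instead of \(N^{1/3}\)), your bound \(\frac{C}{\Delta}\sum_{k}\sum_{j\neq k}\frac{\mu_j\mu_k}{|j-k|}\) with expectation \(O(N\log N/\Delta)\) reproduces Lemma~\ref{prop:prob_error_estimate}, and your pathwise \(\sech\)/phase remainder estimates parallel Proposition~\ref{prop:psi_approx} --- but the limit itself is obtained by a direct second-moment computation: Chebyshev applied to the random Riemann sum \(S_N\), using only that the \(\mu_k\) are i.i.d.\ with finite variance. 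What the paper's route buys is the stronger fluctuation result (the universal variances \(\sigma_\pm\) and the Hoyt law), which it wants as a standalone theorem anyway, so the Proposition becomes a two-line corollary; what your route buys is self-containedness and economy, avoiding the Nagaev--Guivarc'h machinery entirely, at the price of yielding no information about the fluctuations around the \(\sinc\) profile. Both are valid proofs of the stated Proposition.
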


Equation \eqref{eq:sinc_univeral} shows that a universal macroscopic profile emerges in the large $N$ limit near the collision singularity, which we illustrate numerically on \figurename~\ref{fig:sinc_universal}. 

\begin{figure}
        \centering
        \includegraphics[width=.4\textwidth]{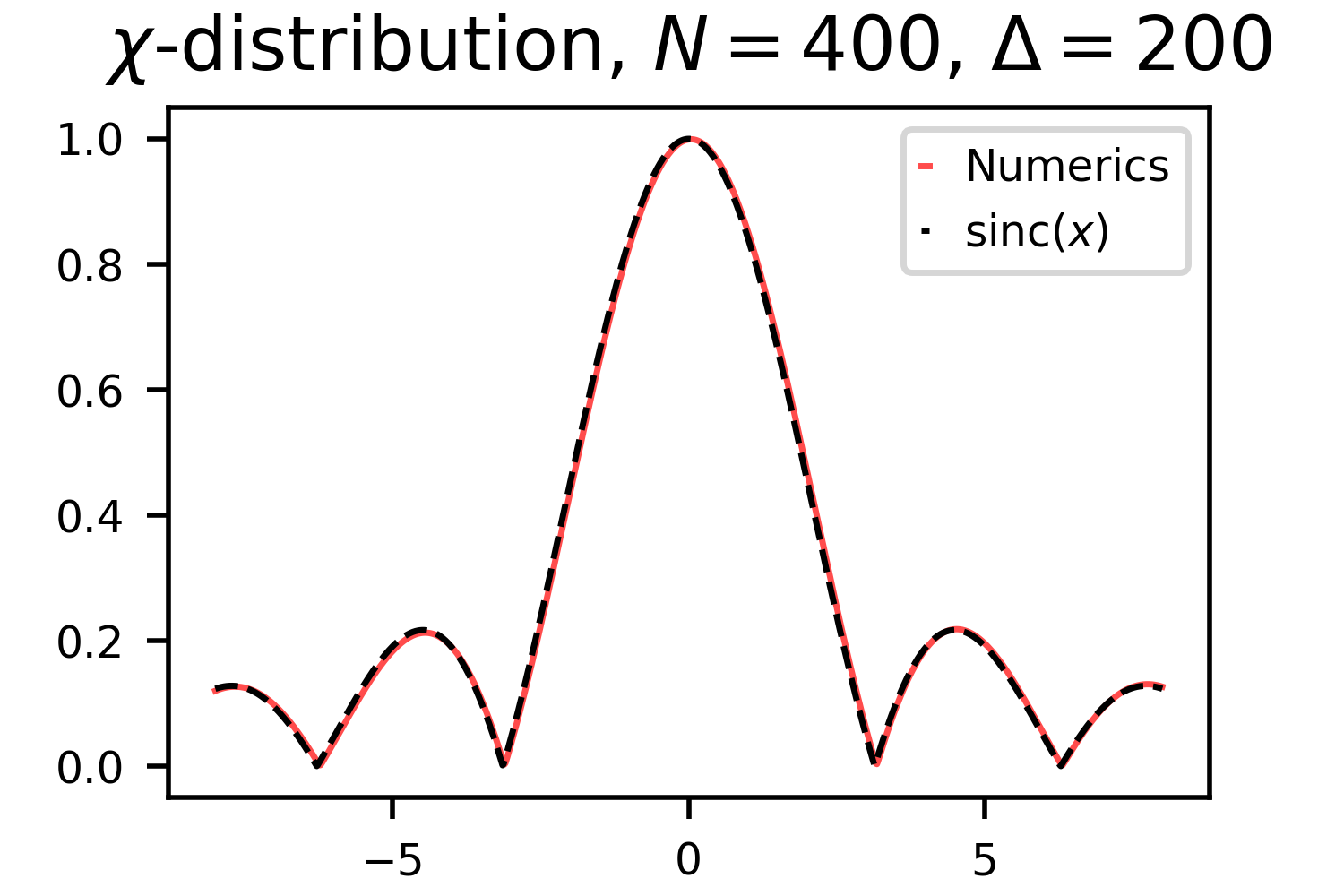}
        \includegraphics[width=.4\textwidth]{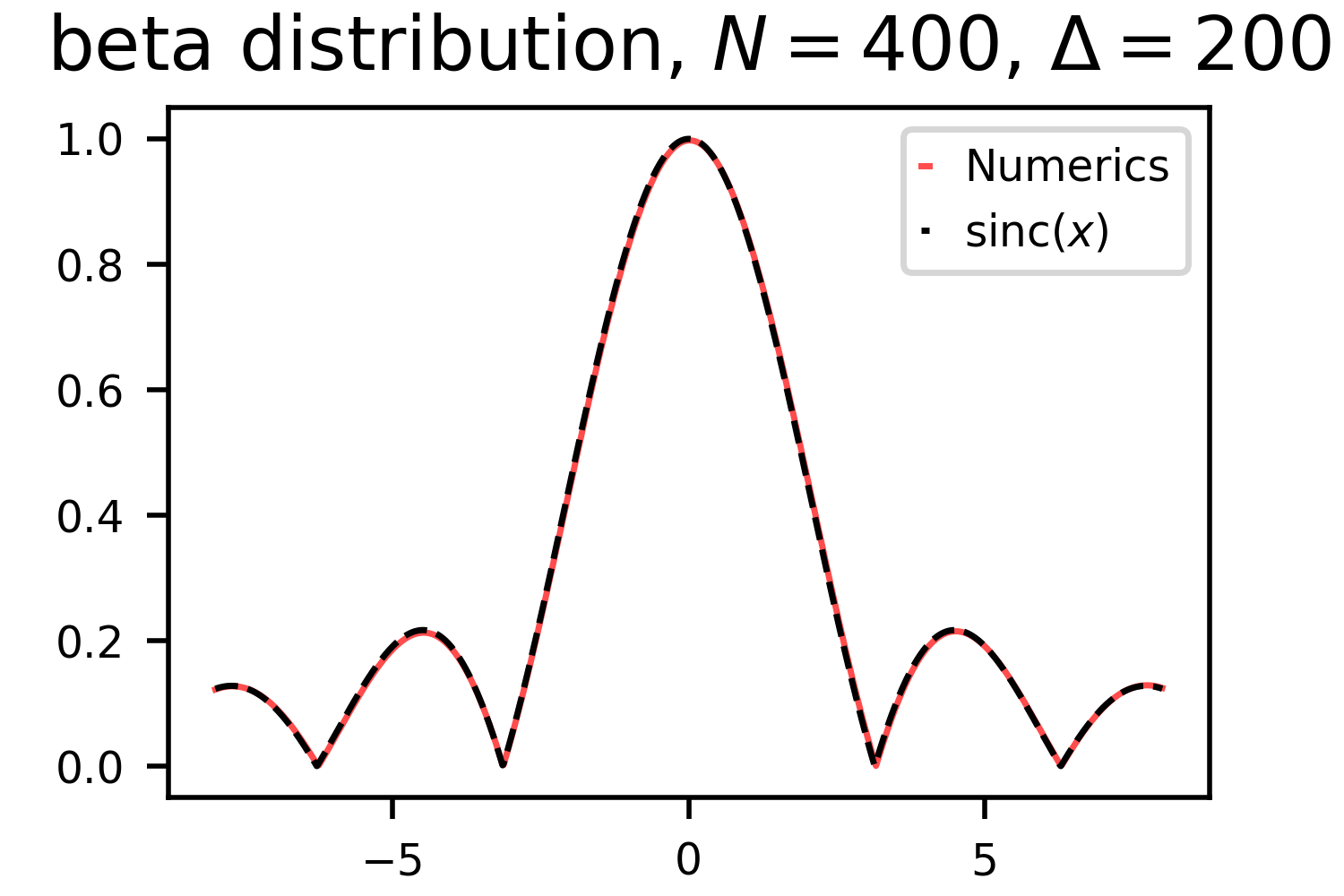}
        \includegraphics[width=.4\textwidth]{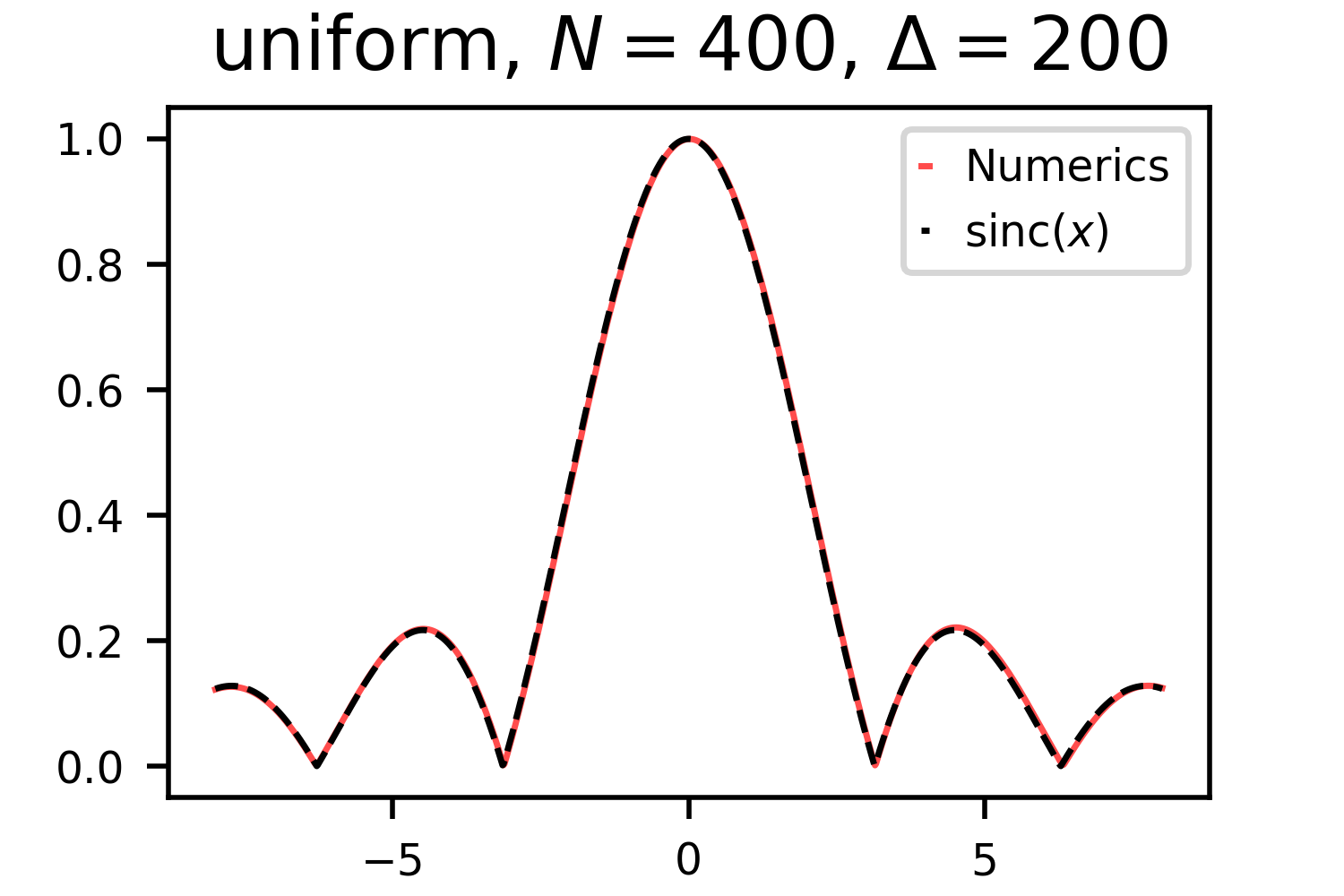}
        \includegraphics[width=.4\textwidth]{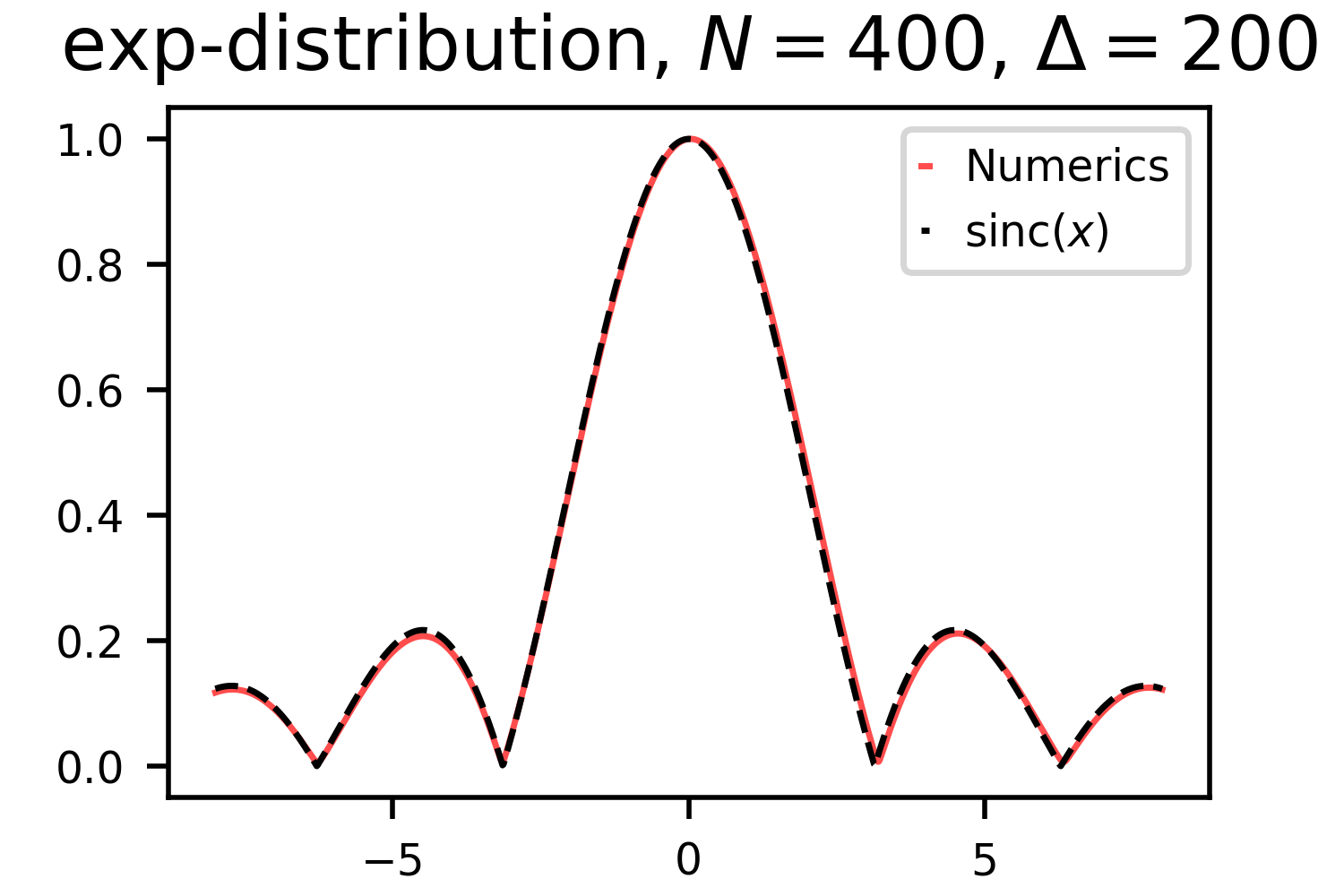}
        \caption{Numerical simulation (red) of the solution of the fNLS equation  compared to the theoretical prediction (dashed black) \eqref{eq:sinc_univeral}. The number of solitons $N=400$ and $\Delta = 200$. Here we are in the near-field, and we have both rescaled the $x$ axis, and divided the solution by  $\mu_{\mathcal{D}} N$.
        The amplitudes $\mu_j$'s are sampled according to a $\chi(2)$-distribution (top left), a $\text{Beta}_{2,2}$ distribution (top right), a uniform $(0,1)$ distribution (bottom left) and an exponential distribution with parameter $\lambda=1$ (bottom right). To realize this picture, we average over $1000$ trials.  }
        \label{fig:sinc_universal}
    \end{figure} 

In fact, Proposition~\ref{prop:prop_CLT_intro} is a consequence of the following more general  theorem.

\begin{theorem}[{\bf Central Limit Theorem in the near-field regime}]
 \label{thm:local_CLT_intro} 
    Under Assumption \ref{hp:random_scattering_intro}, for all $X, T \in \R$, the following holds
    \begin{align}
    \label{eq:clt_re}
        &  \frac{\Re\Big( \psi_{N}\lp \frac{2X}{\Delta N},1+ \frac{T}{\Delta^2N^2}\rp - N\mu_{\mathcal D}\, \psi_0(X,T)\Big)}{ \sqrt{N \Var_{\cD}}} \xrightarrow[N\to\infty]{\text{law}} \mathcal{N}\big( 0,\sigma_+(X,T)\big)\,, \\
        \label{eq:clt_im}
          &  \frac{\Im\Big(\psi_{N}\lp \frac{2X}{\Delta N},1+ \frac{T}{\Delta^2N^2}\rp - N \mu_{\mathcal D}\, \psi_0(X,T)\Big)}{ \sqrt{N \Var_{\cD}} } \xrightarrow[N\to\infty]{\text{law}}  \mathcal{N}\big( 0,\sigma_-(X,T) \big) \, ,\\
          \label{eq:clt_mod}
        &  \frac{\left \vert \psi_{N}\lp \frac{2X}{\Delta N},1+\frac{T}{\Delta^2N^2}\rp - N \mu_{\mathcal D}\,\psi_0(X,T)  \right \vert}{ \sqrt{N \Var_{\cD}}}  \xrightarrow[N\to\infty]{\text{law}}  \mathcal H(\varphi(X,T))\,,
    \end{align}
     where $\Var_{\cD} $ is the variance of the distribution $\mathcal D$,  $\mathcal H(\varphi)$ is the Hoyt distribution (see Lemma \ref{lem:CLT_approx}), and
    \[
    \sigma_{\pm}(X,T) = \frac{1}{2}\left(1 \pm \int_0^1 \cos \left( 4Xs - Ts^2 \right) \di s \right)  \ .
    \]
\end{theorem}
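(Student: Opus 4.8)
The plan is to leverage the deterministic expansion of Theorem~\ref{thm:1} to reduce the random $N$-soliton field to a sum of \emph{independent} random variables, and then invoke a Lyapunov-type central limit theorem together with the Cram\'er--Wold device. Evaluating \eqref{linear interaction} at the rescaled point $(x,t)=\bigl(\tfrac{2X}{\Delta N},\,1+\tfrac{T}{\Delta^2N^2}\bigr)$ we write $\psi_N = \sum_{k=1}^N \psi^{(k)} + R_N$, where $R_N$ collects the double-sum correction and the remainder. The first task is to show that $R_N$ is negligible at the fluctuation scale $\sqrt{N\Var_{\cD}}\sim\sqrt N$. By Remark~\ref{error terms} the correction is of order $\|\boldsymbol\mu\|_2^2/\Delta$ and the remainder of order $\|\boldsymbol\mu\|_\infty\|\boldsymbol\mu\|_2^2/\Delta^2$. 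Using the sub-exponential Assumption~\ref{hp:random_scattering_intro}, standard concentration gives $\|\boldsymbol\mu\|_2^2=O(N)$ and $\|\boldsymbol\mu\|_\infty=O(\log N)$ on an event of probability $1-o(1)$; combined with $\Delta=\beta N^\gamma$ this yields $R_N=O(N^{1-\gamma})$ there. Since $\gamma>\tfrac12$ we have $N^{1-\gamma}=o(\sqrt N)$, so $R_N/\sqrt{N\Var_{\cD}}\to 0$ in probability. This is precisely where the hypothesis $\gamma>\tfrac12$ enters.

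Second, I would linearize each one-soliton term. With the norming constants of Assumption~\ref{hp:random_scattering_intro} the $k$-th peak sits at $x=0$ when $t=1$, its $\sech$ argument at the rescaled point is $O(\mu_k/(\Delta N))$, and a direct computation of the phase gives $\Theta_k=\omega_k-\pi+\tfrac{T\mu_k^2}{2\Delta^2N^2}$ with $\omega_k:=2X\tfrac kN-\tfrac T2\bigl(\tfrac kN\bigr)^2$. Hence $\psi^{(k)}=\mu_k f_k + O\bigl(\mu_k^3/(\Delta N)^2\bigr)+O\bigl(\mu_k^3/(\Delta^2N^2)\bigr)$, where $f_k:=-\ee^{\ii\omega_k}$ is \emph{deterministic} of modulus one; the key point is that the only random contribution to the phase is the utterly negligible term $T\mu_k^2/(2\Delta^2N^2)$. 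Summing, the linearization error is $O\bigl((\sum_k\mu_k^3)/(\Delta N)^2\bigr)=O(N^{-2\gamma-1})=o(\sqrt N)$, reducing the problem to the sum $\sum_k\mu_k f_k$ of independent random variables.

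Third, I subtract the mean. Since $\mathbb E[\mu_k f_k]=\mu_{\mathcal D}f_k$ and a Riemann-sum estimate gives $\sum_k f_k=N\psi_0(X,T)+O(1)$, the centering $N\mu_{\mathcal D}\psi_0$ agrees with $\mathbb E[\sum_k\mu_kf_k]$ up to $O(1)=o(\sqrt N)$. It thus suffices to analyze $S_N:=\sum_k(\mu_k-\mu_{\mathcal D})f_k$. Writing $f_k=-\ee^{\ii\omega_k}$, the Riemann sums $\tfrac1N\sum_k\cos^2\omega_k\to\sigma_+$, $\tfrac1N\sum_k\sin^2\omega_k\to\sigma_-$, and $\tfrac1N\sum_k\cos\omega_k\sin\omega_k\to\rho$ with $\rho(X,T):=\tfrac12\int_0^1\sin(4Xs-Ts^2)\di s$ show that the rescaled pair $(\Re S_N,\Im S_N)/\sqrt{N\Var_{\cD}}$ has covariance converging to $\Sigma=\bigl(\begin{smallmatrix}\sigma_+ & \rho\\ \rho & \sigma_-\end{smallmatrix}\bigr)$. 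A bivariate CLT follows from Cram\'er--Wold: for every $(a,b)\in\R^2$ the scalar $a\Re S_N+b\Im S_N$ is a sum of independent bounded-coefficient multiples of the mean-zero variables $\mu_k-\mu_{\mathcal D}$, whose third absolute moments are finite by sub-exponentiality, so Lyapunov's ratio $\sum_k\mathbb E|Y_k|^3/(\sum_k\Var Y_k)^{3/2}=O(N^{-1/2})\to 0$ whenever the limiting variance $a^2\sigma_++b^2\sigma_-+2ab\rho$ is positive (the degenerate case being a trivial point mass).

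Finally, the marginals \eqref{eq:clt_re}--\eqref{eq:clt_im} are the projections of this bivariate limit onto the real and imaginary axes, giving $\mathcal N(0,\sigma_\pm)$. For \eqref{eq:clt_mod}, the continuous mapping theorem shows $|S_N|/\sqrt{N\Var_{\cD}}$ converges to the modulus of a centered bivariate Gaussian with covariance $\Sigma$; being rotation-invariant, this depends only on the eigenvalues of $\Sigma$ and is exactly a Hoyt distribution $\mathcal H(\varphi(X,T))$ with parameter determined by $\sigma_+,\sigma_-,\rho$ as in Lemma~\ref{lem:CLT_approx}. The main obstacle throughout is not the CLT itself but the quantitative control of the three error layers — the nonlinear corrections of Theorem~\ref{thm:1}, the $\sech$/phase linearization, and the Riemann-sum centering — each of which must be shown to be $o(\sqrt N)$ \emph{in probability}, forcing the joint use of sub-exponential concentration for the random norms $\|\boldsymbol\mu\|_2,\|\boldsymbol\mu\|_\infty$ and the borderline scaling $\gamma>\tfrac12$.
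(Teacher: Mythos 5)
Your treatment of \eqref{eq:clt_re} and \eqref{eq:clt_im} is correct and follows the same skeleton as the paper: the expansion of Theorem~\ref{thm:1}, probabilistic control of the nonlinear corrections (your high-probability event $\{\|\boldsymbol\mu\|_\infty = O(\log N)\}$ plays exactly the role of the paper's set $\Omega$ in \eqref{eq:Omega}, and both arguments correctly confront the fact that the small-norm condition $\Delta > C_*\|\boldsymbol\mu\|_\infty$ is itself random), the $\sech$/phase linearization to $\widehat\psi_N$ (the paper's Proposition~\ref{prop:psi_approx}), and a CLT for the resulting sum of independent variables. The genuine methodological difference is the CLT engine: the paper computes characteristic functions of the real and imaginary parts \emph{separately} via the Nagaev--Guivarc'h theorem (Theorem~\ref{thm:Guido}), while you use Cram\'er--Wold plus Lyapunov to obtain a \emph{bivariate} Gaussian limit with the full $2\times 2$ covariance matrix $\Sigma$, with diagonal $(\sigma_+,\sigma_-)$ and off-diagonal entry $\rho(X,T) = \tfrac12\int_0^1 \sin\lp 4Xs - Ts^2\rp \mathrm{d}s$. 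Your route is more elementary and strictly stronger: joint convergence of the pair $(\Re,\Im)$ is what \eqref{eq:clt_mod} actually requires, and the paper's marginal-only argument does not supply it. Your handling of the centering by explicit Riemann-sum estimates, in place of the paper's absorption of the centering into the Nagaev--Guivarc'h theorem, is also fine.

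However, the last step of your modulus argument is a non sequitur, and it exposes a real problem. Since $\rho$ is generically nonzero (at $T=0$, $X=\pi/8$ one gets $\rho = 1/\pi$), the law of $|Z|$ for $Z \sim \mathcal{N}(0,\Sigma)$ is the Hoyt law attached to the \emph{eigenvalues} of $\Sigma$, namely $\tfrac12\bigl(1 \pm \sqrt{c^2+\varsigma^2}\bigr)$ where $c = \int_0^1\cos(4Xs-Ts^2)\,\mathrm{d}s$ and $\varsigma = 2\rho$, and \emph{not} to the marginal variances $\sigma_\pm = \tfrac12(1\pm c)$; the two coincide only when $\rho = 0$. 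So your conclusion that the limit ``is exactly a Hoyt distribution $\mathcal{H}(\varphi(X,T))$ \ldots\ as in Lemma~\ref{lem:CLT_approx}'' (where that lemma sets $\cos(2\varphi) = c$) does not follow from your own covariance computation; carried to its end, your argument gives a Hoyt parameter satisfying $\cos(2\varphi) = \sqrt{c^2+\varsigma^2} = \bigl|\int_0^1 \mathrm{e}^{\mathrm{i}(4Xs-Ts^2)}\,\mathrm{d}s\bigr|$ instead. You should be aware that the paper's proof of \eqref{eq:clt_mod_approx} suffers from the complementary defect: having established only the two marginal CLTs, it tacitly treats $\mathcal U$ and $\mathcal V$ as independent when computing the law of $\sqrt{\mathcal U^2 + \mathcal V^2}$ --- an independence that your covariance calculation disproves. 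In short: your proofs of \eqref{eq:clt_re}--\eqref{eq:clt_im} are complete and correct by a different (and stronger) route; for \eqref{eq:clt_mod} you must either show $\rho \equiv 0$ (which is false) or conclude that the correct limit is the Hoyt law built from the eigenvalues of $\Sigma$, which differs from the one asserted in the statement.
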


\begin{figure}
    \centering
    \includegraphics{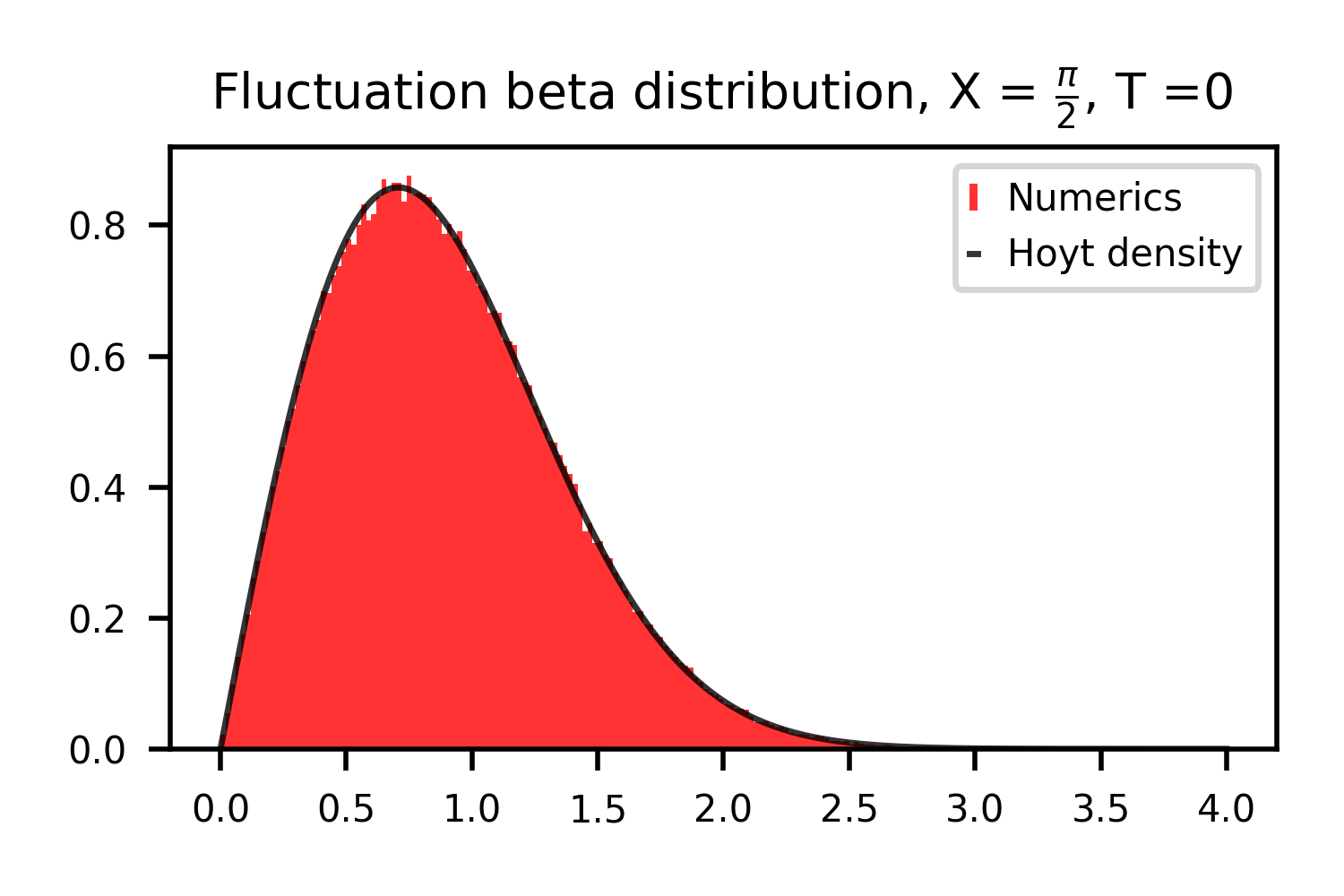}
\includegraphics{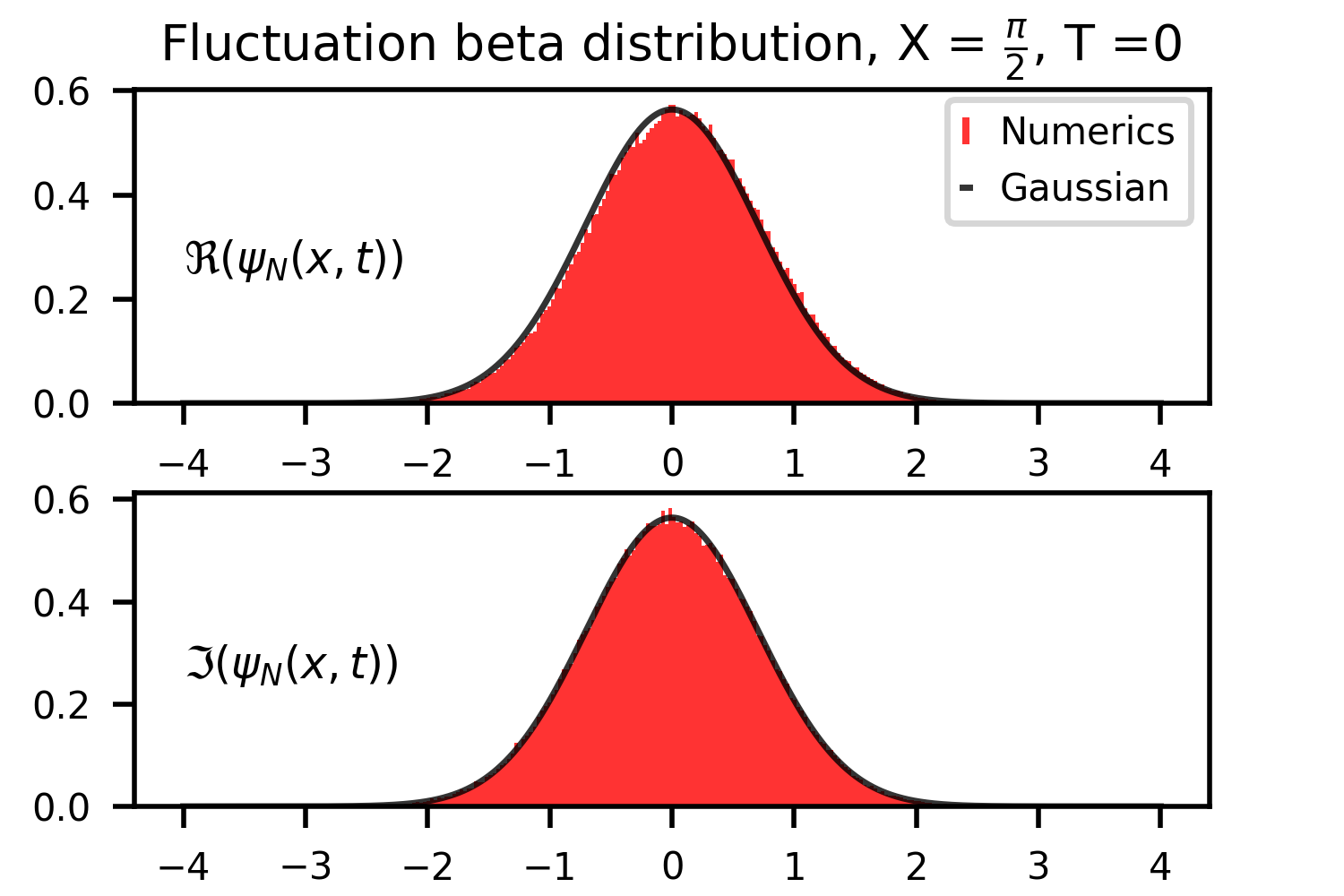}
\includegraphics{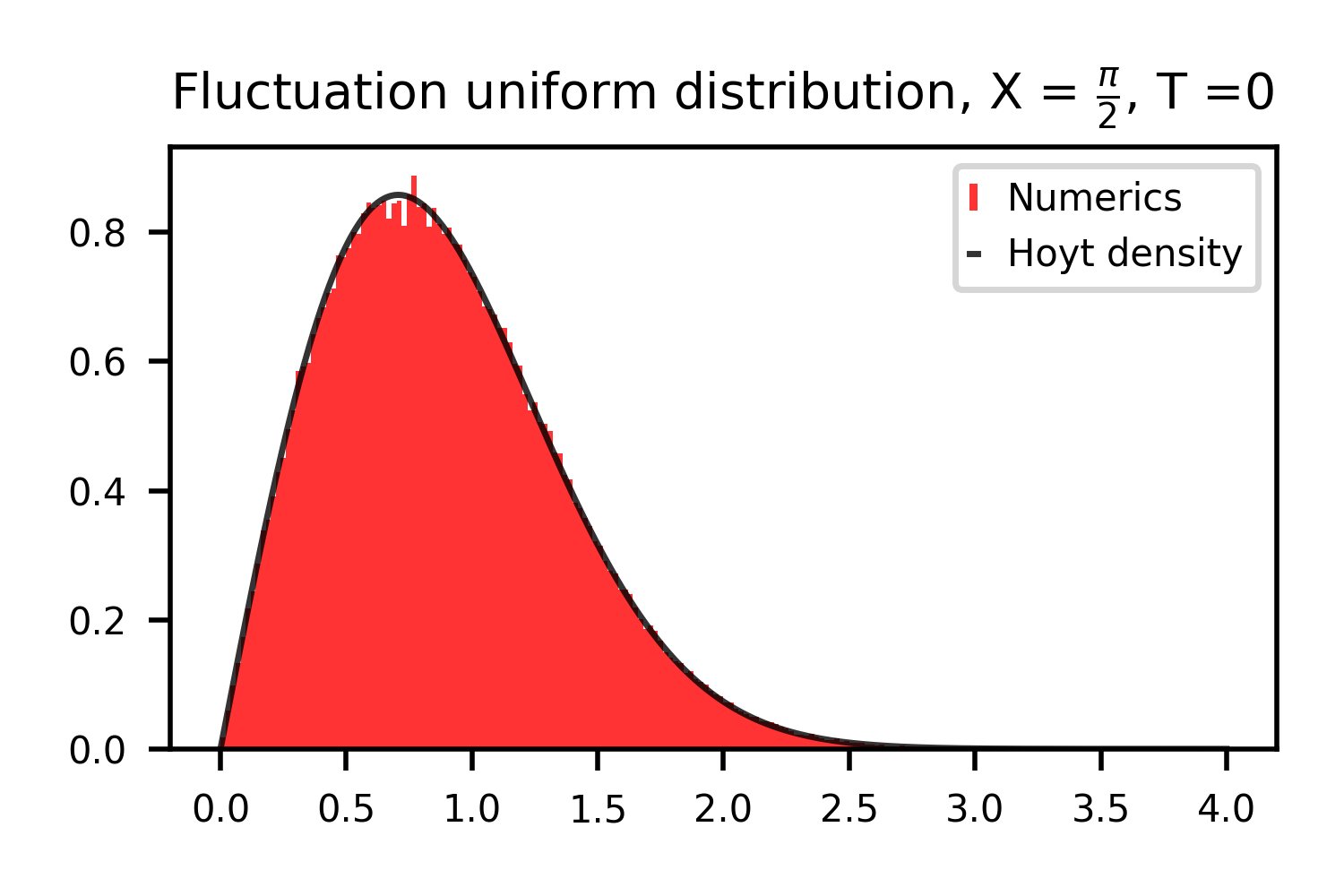}
\includegraphics{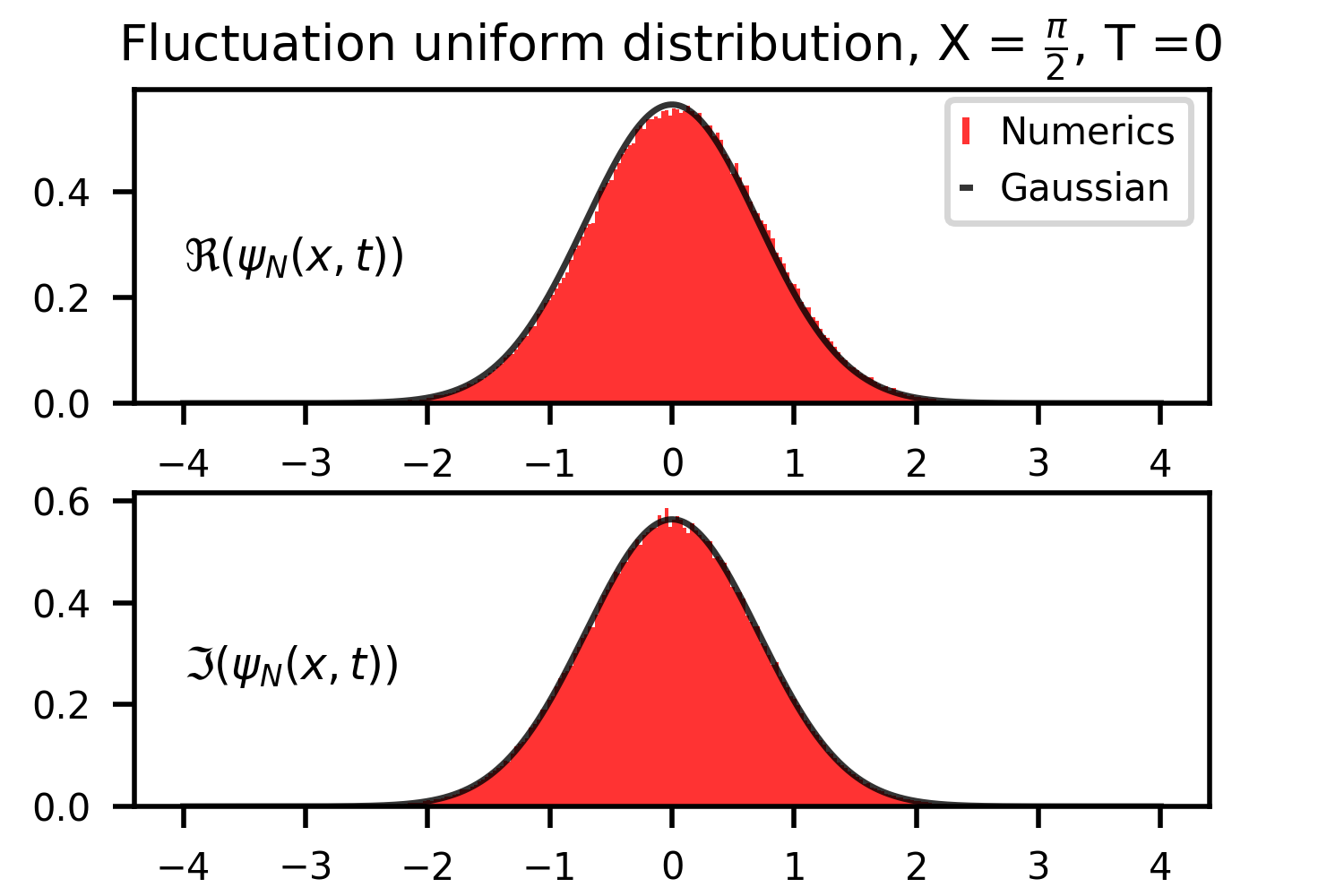}
    \caption{Fluctuation of $\psi_N(x,t)$ with respect to the average solution, $N=1200$, $\Delta = 500$, 200,000 trials. Top panel: $\mu_j$'s are i.d.d.  $\text{Beta}_{2,2}$ distribution, left side the fluctuations of $|\psi_N(x,t)|$  , right side the one of $\Re(\psi_N(x,t))$ and $\Im(\psi_N(x,t))$. Bottom panel: $\mu_j$'s are i.i.d. uniform distribution in $(0,1)$, left side the fluctuations of $|\psi_N(x,t)|$  , right side the one of $\Re(\psi_N(x,t))$ and $\Im(\psi_N(x,t))$. }
    \label{fig:fluctuations}
\end{figure}

\begin{remark}
    The variances $\sigma_\pm$ in \eqref{eq:clt_re}-\eqref{eq:clt_im} can be expressed as
\begin{multline}
    \sigma_\pm(X,T)= 
    \frac{1}{2}\left( 
    1 \pm \sqrt{\tfrac{\pi}{2T}} 
    \left\{ \cos(\xi^2) 
    \left[ 
      C\left( \sqrt{\tfrac{2}{\pi} } (\sqrt{T} - \xi) \right)
      +C\left( \sqrt{\tfrac{2}{\pi}}  \xi \right) 
    \right] 
     \right. \right. \\ \left. \left.
     + \sin(\xi^2)
      \left[
       S\left( \sqrt{\tfrac{2}{\pi} } (\sqrt{T} - \xi) \right)
       +S\left( \sqrt{\tfrac{2}{\pi}}  \xi \right) 
     \right]
    \right\}
    \right)\ ,  
\end{multline}
where $\xi(X, T) = \frac{2X}{\sqrt{T}}$, and $C(\cdot)$ and $S(\cdot)$ are the Fresnel integrals \cite[Formula 7.2.7 and 7.2.8]{DLMF}. 
\end{remark}

Notice that the limiting variances in \eqref{eq:clt_re}--\eqref{eq:clt_mod} are {independent} from the distribution $\cD$, thus universal. In \figurename~\ref{fig:fluctuations} we show the results for the Beta distribution and the uniform distribution.

\begin{remark}
\label{rem:first_clt}
For $(X,T)=(0,0)$, the variance of the normal distribution in \eqref{eq:clt_im} vanishes, which implies that the random variable  $\Im( \psi_{N}(0,0))$ is deterministically equal to $0$ in the limit as $N\to \infty$. This is consistent with the decomposition in Theorem \ref{thm:1}, where the leading term is real for $(X,T)=(0,0)$, and the remaining terms are asymptotically small with high probability (see Lemma \ref{prop:prob_error_estimate} and \eqref{eq:error_estimate_prob1}-\eqref{eq:error_estimate_prob2}).
\end{remark}

Finally, we analyze the fluctuations of the global profile of the $N$-soliton solution at collision time $\psi_N(x,1)$ over the whole spatial domain. 
\begin{theorem}[\bf Central Limit Theorem for the global profile at collision time]
\label{thm:envelope_intro}
     Let $x\in \R$. Under Assumption \ref{hp:random_scattering_intro}, consider the  \( N \)-soliton solution $\psi_{N}(x,t)$ of the fNLS equation \eqref{eq:nls}. Then the following  holds
    \begin{align}
    \label{eq:clt_envelope_re}
        &  \frac{\Re\lp\psi_N(x,1) \rp- \omega_\cD(x) \cos\lp  \frac{x\Delta(N+1)}{2}\rp D_N(x\Delta) }{\sigma_{N,\Re}(x)} \xrightarrow[N\to\infty]{\text{law}}  \cN(0,1)\,, \\
        \label{eq:clt_envelope_im}
        &  \frac{\Im\lp \psi_N(x,1) \rp - \omega_\cD(x) \sin\lp  \frac{x\Delta N}{2}\rp D_{N+1}(x\Delta)}{\sigma_{N,{\Im}}(x)} \xrightarrow[N\to\infty]{\text{law}}  \cN(0,1)\,, \quad {\text{for } x\neq 0\,,}
    \end{align}
    where $D_N(x) := \frac{\sin\lp \frac{x N}{2}\rp}{\sin \lp \frac{x}{2}\rp}$ is the Dirichlet kernel,
    \begin{align}
        & \omega_\cD(x) = - \meanval{\frac{\xi}{\cosh( x\xi)}}\, , \quad \xi\sim \cD\, , \label{eq:omega_D}\\
        \label{eq:real_var_envelope}
        &\sigma^2_{N,\Re}(x) = \Var\lp\frac{\xi}{\cosh( x\xi)}  \rp\lp \frac{N-1}{2} + \frac{1}{2} \cos\lp x\Delta N \rp D_{N+1}(2x\Delta) \rp \,,  \\ 
        &\sigma^2_{N,\Im}(x) = \Var\lp\frac{\xi}{\cosh( x\xi)}  \rp\lp \frac{N+1}{2} - \frac{1}{2}  \cos\lp x\Delta N \rp D_{N+1}(2x\Delta) \rp \, ,
    \end{align}
    and $\Var(\cdot) $ is the variance of the given random variable. Moreover, 
    \begin{equation}
    \label{eq:clt_envelope_mod}
        \lim_{N\to\infty} \frac{1}{N} \Big( \left|\psi_N(x,1)\right| - \left|\omega_\cD(x)D_N(x\Delta)\right|\Big) \to 0
    \end{equation}
    in probability.

\end{theorem}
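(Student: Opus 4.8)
The plan is to reduce the statement to a central limit theorem for a triangular array of weighted i.i.d.\ random variables, using Theorem~\ref{thm:1} to linearize the interaction and then treating the correction terms as a vanishing perturbation. First I would insert $t=1$ into the expansion \eqref{linear interaction}. With the norming constants fixed by Assumption~\ref{hp:random_scattering_intro}, a direct evaluation of the one-soliton \eqref{one_soliton} shows that every soliton is centered at $x=0$ at collision time and carries phase $xv_k-\pi$, so that the leading term becomes
\begin{equation*}
\sum_{k=1}^N \psi^{(k)}(x,1) = -\sum_{k=1}^N Y_k\,\ee^{\ii k\Delta x}, \qquad Y_k := \frac{\mu_k}{\cosh(\mu_k x)}, \quad v_k = k\Delta,
\end{equation*}
where the $Y_k$ are i.i.d.\ with mean $\mathbb E[Y_k] = -\omega_\cD(x)$ and variance $\Var(Y_k)=\Var(\xi/\cosh(x\xi))$. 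This sum is the source of the Gaussian fluctuations; the two remaining terms in \eqref{linear interaction} will be handled in the last step.

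Writing $\theta = x\Delta$ and $W_k = Y_k-\mathbb E[Y_k]$, the centered leading term splits into $-\sum_k W_k\cos(k\theta)$ (real part) and $-\sum_k W_k\sin(k\theta)$ (imaginary part). Summing the geometric series $\sum_k \ee^{\ii k\theta}=\ee^{\ii(N+1)\theta/2}D_N(\theta)$ reproduces exactly the deterministic centerings in \eqref{eq:clt_envelope_re}--\eqref{eq:clt_envelope_im}, the two equivalent forms being related by $\sin(\tfrac{(N+1)\theta}{2})D_N(\theta)=\sin(\tfrac{N\theta}{2})D_{N+1}(\theta)$. The variances follow from $\sum_k\cos^2(k\theta)=\tfrac N2+\tfrac12\cos((N+1)\theta)D_N(2\theta)$ together with the telescoping identity $\cos(N\theta)D_{N+1}(2\theta)-\cos((N+1)\theta)D_N(2\theta)=1$, which turns the factor into exactly the brackets defining $\sigma_{N,\Re}^2(x)$ and $\sigma_{N,\Im}^2(x)$. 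Because the weights $\cos(k\theta)$ depend on $N$ through $\theta=x\beta N^\gamma$, this is a genuine triangular array, and I would invoke the Lindeberg--Feller theorem: since $|\cos(k\theta)|\le 1$ and, for fixed $x\neq0$, the Dirichlet-kernel corrections are of lower order so that $\sigma_{N,\Re}^2,\sigma_{N,\Im}^2\asymp N\,\Var(Y)$, the Lindeberg condition reduces to $\mathbb E[W^2]<\infty$ (for $x\neq 0$ the variable $W$ is even bounded). At $x=0$ the real part is the ordinary i.i.d.\ sum $\sum_k\mu_k$, while the imaginary part vanishes identically, explaining the restriction $x\neq 0$ in \eqref{eq:clt_envelope_im} and agreeing with Remark~\ref{rem:first_clt}.

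The main obstacle is the probabilistic control of the corrections. By Remark~\ref{error terms} the double sum in \eqref{linear interaction} is $\bigo{\|\boldsymbol\mu\|_2^2/\Delta}$ and the remainder is $\bigo{\|\boldsymbol\mu\|_\infty\|\boldsymbol\mu\|_2^2/\Delta^2}$. Under Assumption~\ref{hp:random_scattering_intro} the amplitudes are sub-exponential, so on an event of probability tending to one one has $\|\boldsymbol\mu\|_\infty=\bigo{\log N}$ (maximum of $N$ sub-exponential variables) and $\|\boldsymbol\mu\|_2^2=\bigo{N}$ (concentration around $N\,\mathbb E[\mu^2]$); these are the estimates I would record as Lemma~\ref{prop:prob_error_estimate}. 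With $\Delta=\beta N^\gamma$ the two corrections are then $O_{\mathbb P}(N^{1-\gamma})$ and $O_{\mathbb P}(N^{1-2\gamma}\log N)$, both $o(\sqrt N)$ precisely because $\gamma>\tfrac12$; the same event guarantees $\Delta>C_*\|\boldsymbol\mu\|_\infty$, so that Theorem~\ref{thm:1} applies with uniform constants. Dividing by $\sigma_{N,\bullet}(x)\asymp\sqrt N$ these corrections vanish in probability, and Slutsky's theorem upgrades the CLT for the leading term to \eqref{eq:clt_envelope_re}--\eqref{eq:clt_envelope_im}. The delicate points are exactly (i) making Theorem~\ref{thm:1}'s error, which is conditional on the random event $\{\Delta>C_*\|\boldsymbol\mu\|_\infty\}$, quantitatively uniform, and (ii) ruling out degeneracy of $\sigma_{N,\bullet}$ below order $\sqrt N$ for the fixed $x$ at hand.

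Finally, the modulus statement \eqref{eq:clt_envelope_mod} does not need the full CLT. Setting $S_N=\sum_k Y_k\ee^{\ii k\theta}$, the variance computation and Chebyshev's inequality give $|S_N-\mathbb E S_N|=O_{\mathbb P}(\sqrt N)$, while the reverse triangle inequality yields $\big||S_N|-|\mathbb E S_N|\big|\le|S_N-\mathbb E S_N|$. Since $|\mathbb E S_N|=|\omega_\cD(x)\,D_N(x\Delta)|$ and $|\psi_N(x,1)|=|S_N|+O_{\mathbb P}(\sqrt N)$ after absorbing the corrections of the previous step, we obtain $|\psi_N(x,1)|=|\omega_\cD(x)D_N(x\Delta)|+O_{\mathbb P}(\sqrt N)$; dividing by $N$ produces the claimed convergence to $0$ in probability.
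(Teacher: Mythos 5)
Your proposal is correct and follows essentially the same route as the paper: the paper likewise isolates the leading term $\sum_{k}\psi^{(k)}(x,1)=-\sum_{k}\frac{\mu_k}{\cosh(\mu_k x)}\,\ee^{\ii k\Delta x}$, proves the CLT for this weighted i.i.d.\ sum by a classical triangular-array criterion (Lyapounov's condition with fourth moments, where you use Lindeberg--Feller), arrives at the same centerings and variances via the same Dirichlet-kernel identities, and then absorbs $\widetilde{\psi}$ and the Theorem~\ref{thm:1} remainder exactly as you do --- sub-exponential concentration of $\|\boldsymbol\mu\|_\infty$, the good event $\{\Delta>C_*\|\boldsymbol\mu\|_\infty\}$, Markov's inequality, and Slutsky. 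Incidentally, the degeneracy you flag in point (ii) --- that $\sigma_{N,\Im}$ need not be of order $\sqrt{N}$ for special parameters (e.g.\ $\gamma=1$ and $x\beta\in\pi\mathbb{Z}$ make $\Im\varphi_N\equiv 0$) --- is equally unaddressed in the paper's own proof, so it is a shared caveat rather than a gap in your argument relative to theirs.
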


We define the function $|\omega_\cD(x)|$ as the {\em envelope}, meaning a smooth curve outlining the extremes of an oscillating signal. 
As an example, if $\cD \sim \chi^2(2)$, 
the envelope is equal to
\[
 |\omega_{\mathcal D}(x)| = \frac{\Psi ^{(1)}\left(\frac{| x| +1}{4 | x| }\right)-\Psi ^{(1)}\left(\frac{1}{4} \left(3+\frac{1}{| x| }\right)\right)}{8 x^2}\,,
\]
where $\Psi^{(1)}$ is the $1^{st}$ poly-gamma function \cite[Ch. 5]{DLMF}. 
In \figurename~\ref{fig:envelope}, we show the envelope profile $|\omega_{\mathcal D}(x)|$, its modulation through the Dirichlet kernel $ D_N(x\Delta)$ and the numerical average of the solution for several choice of distributions $\cD$. 

\begin{figure}[t!]
    \centering
    \includegraphics{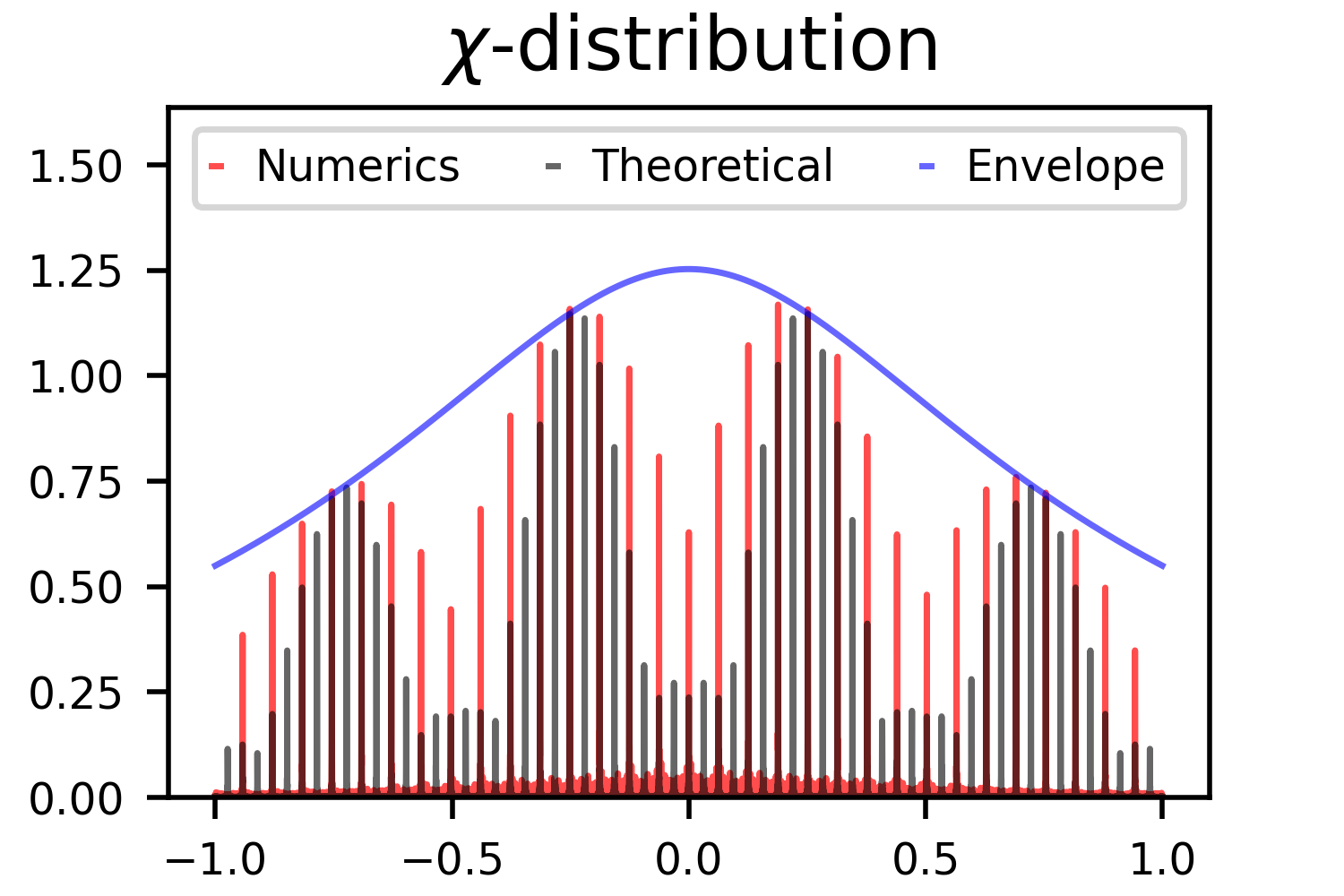}
    \includegraphics{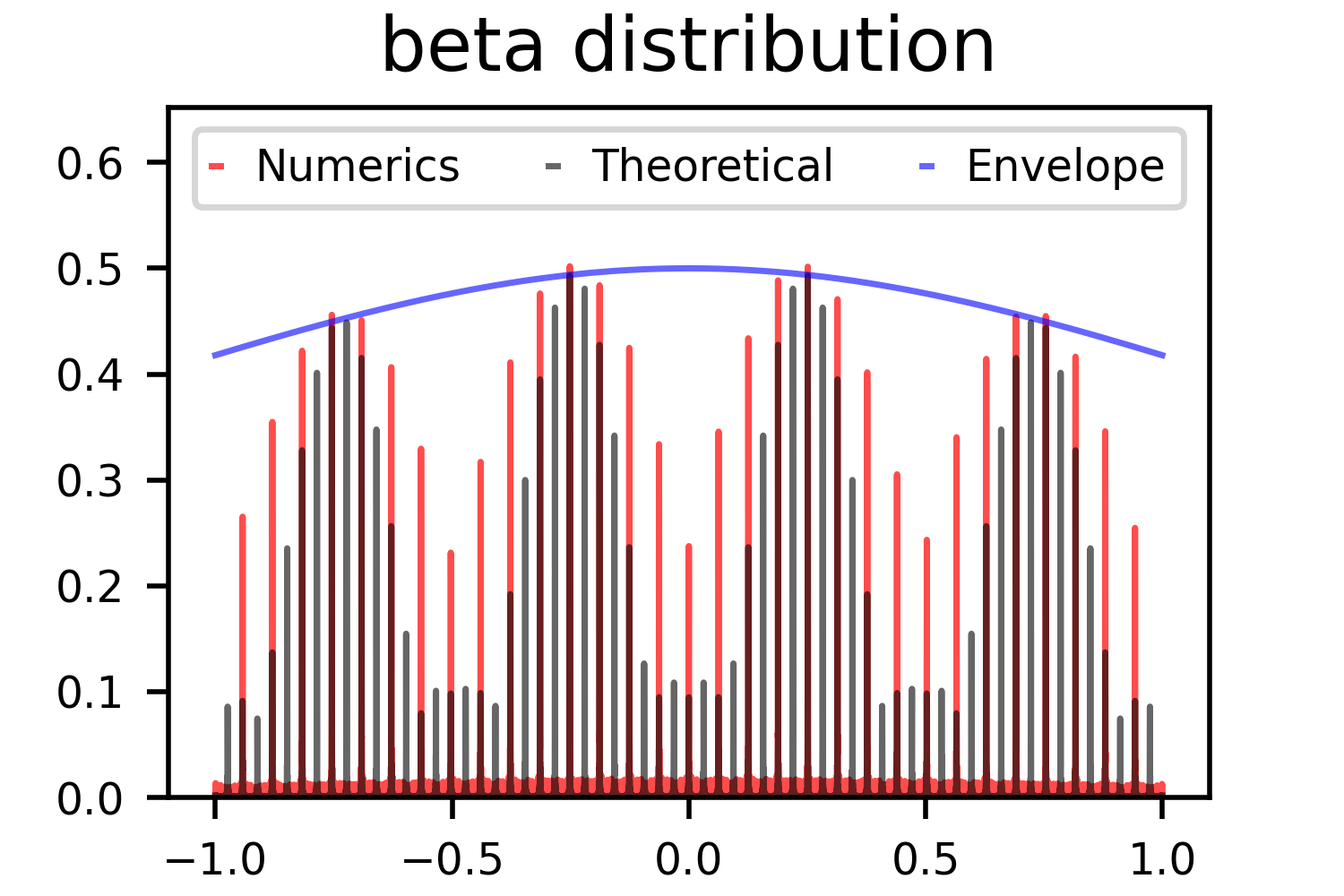}
    \includegraphics{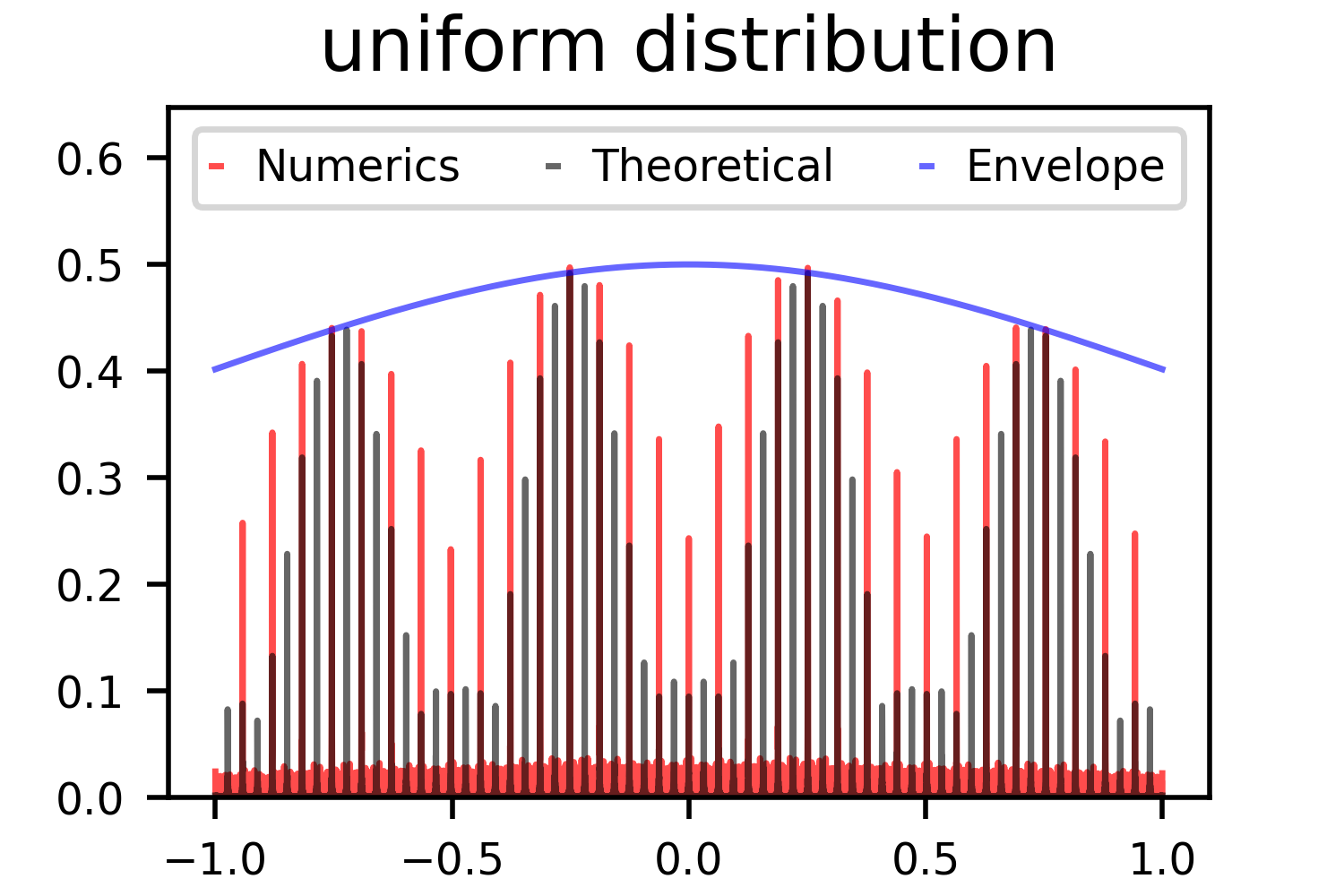}
    \includegraphics{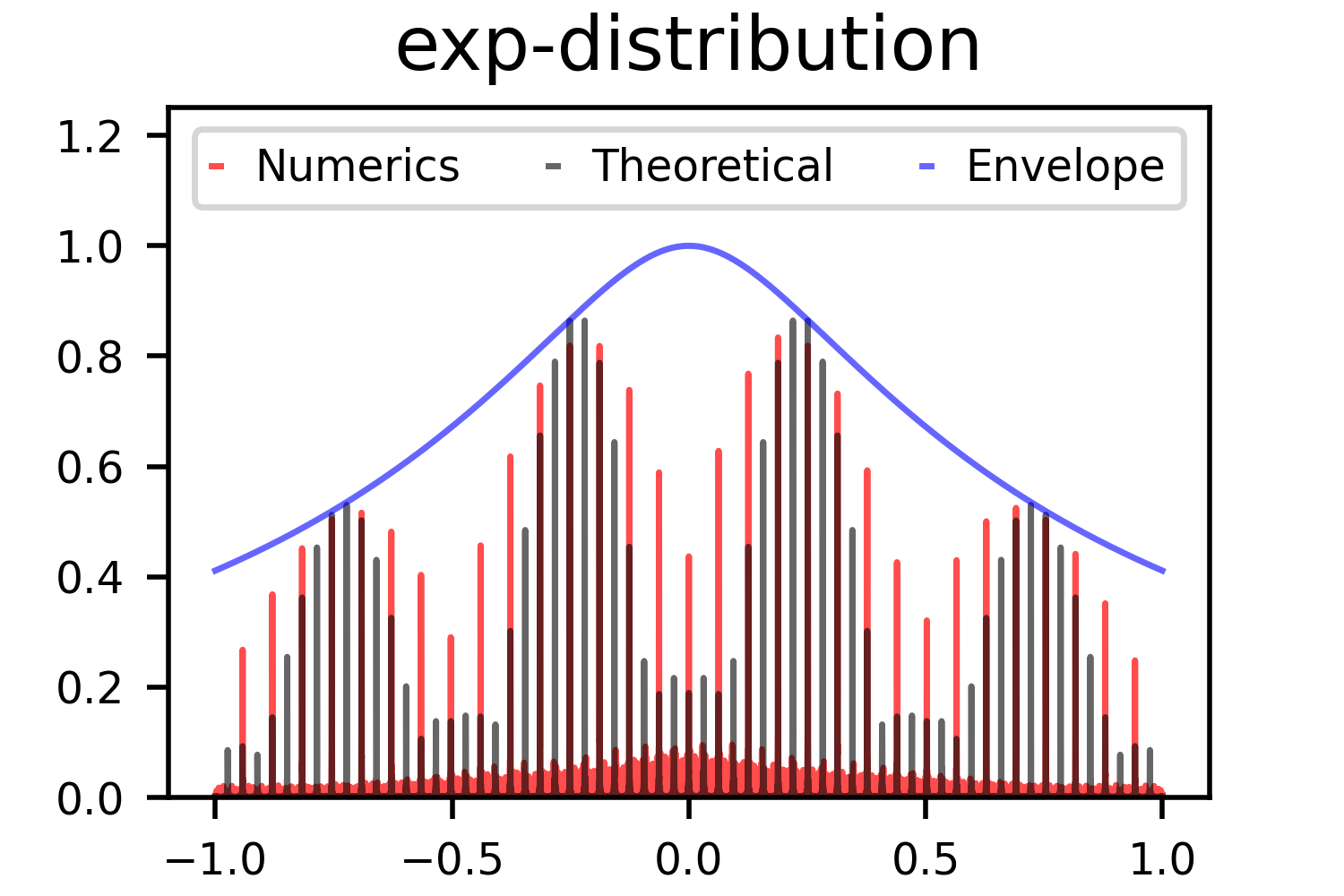}    
    \caption{Macroscopic profile: numerical simulation (red) of $|\psi_N(x,1)|$, theoretical prediction (dashed black) \eqref{eq:clt_envelope_mod} and the envelope $|\omega_{\mathcal D}(x)|$ (blue) \eqref{eq:omega_D}. The number of solitons $N=400$ and $\Delta = 200$. The $\mu_j$'s are sampled according to a $\chi(2)$-distribution (top left), a $\text{Beta}_{2,2}$ distribution (top right), a uniform $(0,1)$ distribution (bottom left) and an exponential distribution with parameter $\lambda=1$ (bottom right). To realize this picture, we averaged over $1000$ trials. }
    \label{fig:envelope}
\end{figure}

\medskip

In recent years, a lot of effort has been put into describing the formation of rogue waves in deep sea and optical fibers.  Informally, a rogue wave is a large-amplitude disturbance of the background state. Historically, the first instance of a rogue wave solution was derived by Peregrine \cite{Peregrine83}. Through the years, many more solutions with similar behavioural pattern have been studied experimentally, numerically and analytically (see for example \cite{Chabchoub,Tikan2017,Dematteis2018,Chabchoubetal11}).
Furthermore, numerical and physical experiments have observed  that the interaction of suitably prepared solitons also yield rogue waves (see \cite{MussotKudlinski09, KharifPelinovsky01, Sun16} for the NLS and \cite{SlunyaevPelinovsky16} for the modified KdV equations).
 
In the experiments conducted in \cite{Chabchoub} the authors studied rogue wave formation in a deep water tank: using the NLS equation as a model, they argue that such events are usually caused by the presence of a Peregrine breather appearing in the dynamics, or a degenerate two-soliton solution.
In \cite{Tikan2017}, it has been shown that the Peregrine breather emerges as a universal profile as the compression of the $N$-soliton solution to the NLS equation; furthermore, at the point of maximal localization, it yields to a peak three times bigger than the background.

Certain types of rogue waves have been extensively studied in \cite{BilmanBuckingham19, BilmanLingMiller20, BilmanMiller22, BilmanMiller24a, BilmanMiller24b} via a careful Riemann--Hilbert analysis: the authors showed that rogues waves can be constructed from high-order breather solutions \cite{BilmanLingMiller20, BilmanMiller22}, high-order soliton solutions \cite{BilmanBuckingham19}, or high-order solutions belonging to a one-parameter family which encompasses the previous two classes \cite{BilmanMiller24a}. In the limit as the order goes to infinity, the solution displays a universal central peak, which is described in terms of a member of the Painlev\'e III hierarchy \cite{Sakka}.

 Comparing with the previous literature, the results presented in this paper prove rigorously that the formation of rogue waves can be the result of the constructive interaction of a handful of solitons at one point in space-time (see \figurename~\ref{fig:sinc_5sol}), and the resulting peak is universal, as it survives random perturbations of the soliton amplitudes.
 
 Our soliton solution setup is similar to the scenario described in \cite{SlunyaevPelinovsky16}: the authors consider a multi-soliton solution of the modified KdV equation, and tune the scattering data to obtain a rogue wave at a given collision time, so that the height of the peak is equal to the sum of amplitudes. 
 Furthermore, phenomena of coherent soliton pulse trains, modelled by a generalization of the fNLS equation, appear in experimental optics, specifically in (micro)-resonators and dissipative Kerr soliton combs \cite{Kippenberg14, Kippenberg18}.

We finally highlight that the $N$-soliton configuration studied in this paper is an instance of a {\em dilute soliton gas} \cite{Suret2023SolitonGT}. This model was originally proposed by Zakharov \cite{Zakharov71} as an infinite collection of KdV solitons with random parameters, which are so sparse that it is possible to distinguish individual soliton-soliton interactions. Zakharov additionally derived a formula to describe the average velocity of a trial soliton as it travels through the diluted KdV gas. A more general kinetic formula for the soliton gas solving the fNLS equation was later derived in \cite{ElTovbis20}, however we  do not pursue this direction of research here.

\section{Deterministic $N$-Soliton Solutions}\label{sec:RHP_Nsolitons}

The proofs of Proposition~\ref{thm:darboux_intro}, Theorem~\ref{thm:1}, and Corollary~\ref{cor:sinc}, which we prove in this section, rely on the integrability of the fNLS equation.  

\subsection{Darboux (Dressing) Method}

The integrability of fNLS was established by Zakharov and Shabat in \cite{zakharov1972exact} where they showed that \eqref{eq:nls} has a Lax Pair structure given by 
\begin{subequations}\label{ZH system}
\begin{align}
\label{ZH system_a}
\bbs{\Phi}_x &= \mathcal{L} \bbs{\Phi}, \quad \mathcal{L}:=\begin{bmatrix}\ -\ii z & \psi \\ -\overline{\psi} & \ii z \end{bmatrix},
\\
\label{ZH system_b}
\bbs{\Phi}_t &= \mathcal{B} \bbs{\Phi}, \quad \mathcal{B}:=\begin{bmatrix}\ -\ii z^2 + \frac{\ii}{2} |\psi|^2 & z \psi + \frac{\ii}{2} \psi_x \\ -z \overline{\psi} + \frac{\ii}{2} \overline{\psi}_x & \ii z^2 - \frac{\ii}{2} |\psi |^2 \end{bmatrix} ,
\end{align}
\end{subequations}
and established the existence of a simultaneous solution of this overdetermined system of ODEs provided the Lax operators $\mathcal{L}$ and $\mathcal{B}$ satisfy the compatibility condition $\mathcal{L}_t - \mathcal{B}_x + \mathcal{L} \mathcal{B} - \mathcal{B} \mathcal{L} = 0$, which is equivalent to $\psi$ being a solution of \eqref{eq:nls}. 

The integrable structure allows us to compute solutions via the Inverse Scattering Transform (IST) method \cite{IST, novikov1984theory,faddeev2007hamiltonian, zakharov1972exact}. 
The formulation of the IST method starts by considering the scattering problem for the first operator \eqref{ZH system_a} in the fNLS Lax Pair viewed as an eigenvalue problem:
\begin{equation}
\widehat{\mathcal{L}}\bbs{\Phi} = z \bbs{\Phi},
\quad
\widehat{\mathcal{L}} = \ii \begin{bmatrix}\ 1 & 0 \\ 0 & -1 \end{bmatrix}\frac{\pt}{\pt x} - \ii\begin{bmatrix}\ 0 & \psi \\ \overline{\psi} & 0 \end{bmatrix}\ . \label{ZH system1}
\end{equation}
For spatially localized potentials $\psi$, the spectrum of \eqref{ZH system1} generically\footnote{By "generic", we refer to potentials belonging to an open dense subset of $L^1(\R)$ (see \cite{BealsCoifmann1984}).} consist of a finite number of non-real points (discrete spectrum) and the real line (continuous spectrum). In this setting the scattering data, which is time dependent, consists of a \emph{reflection coefficient} $r: \R \to \C$ defined on the continuous spectrum, the collection of the discrete eigenvalues $z_{k} \in \C^+$ (the \emph{poles}), and the so-called \emph{norming constants} $C_{k} \in \C^*$ associated to each discrete eigenvalue in the following sense: for each $z_k \in \C^+$ there exist vector solutions of \eqref{ZH system1} 
\[
    \bbs{\phi}^+(x,t;z_k) \sim \begin{bmatrix} 0 \\ 1 \end{bmatrix} e^{\ii z_k x}, \quad x \to +\infty,  \qquad
    \bbs{\phi}^-(x,t;z_k) \sim \begin{bmatrix} 1 \\ 0 \end{bmatrix} e^{-\ii z_k x}, \quad x \to -\infty,
\]
such that \( \bbs{\phi}^+(x,t;z_k) = C_k(t)\bbs{\phi}^-(x,t;z_k) \). The key result, which makes the IST effective, is that the time evolution of the scattering data is trivial, i.e. \( \mathcal{S}(t) = ( \{z_k(t), C_k(t)\}_{k=1}^N, r(z;t)) \), the scattering data at time \( t \), is given by
\begin{equation}
\label{evolution}
    \mathcal{S}(t) = \Big(  \left\{ \left( z_k , \ C_ke^{-2\ii z_k^2 t} \right) \right\}_{k=1}^N, ~ r(z)e^{2\ii z^2 t} \Big) \ ,    
\end{equation}
where \( \mathcal{S}(0) = ( \{(z_k, C_k\}_{k=1}^N, r(z)) \) corresponds to the initial data $\psi_0(x) = \psi(x,0)$. The IST is the process by which one recovers the time-evolved potential $\psi(x,t)$ from the known evolution of the scattering data $\mathcal{S}(t)$. There are several ways to formulate the inverse scattering transform depending on the complexity of the scattering data. In what follows, we are only interested in the reflectionless case \( r=0 \). The corresponding solution is the $N$-soliton solution $\psi_N(x,t)$, and it can be obtained iteratively via the Darboux transform (or dressing method) \cite{Gelash2021,Gelash2020}, which we recall here briefly. As usual, we write \( z_k = (-v_k+\ii\mu_k)/2\). Let \( \bbs{\Phi}_n(z;x,t) \), \( n\geq 1 \), be the solution of the ZS system~\eqref{ZH system_a}-\eqref{ZH system_b}. These matrices can be constructed inductively using the so-called dressing matrices $\boldsymbol{\chi}_n(z;x,t)$ via
\begin{align}
\bbs{\Phi}_n(z;x,t) &= \boldsymbol{\chi}_n(z;x,t)\bbs{\Phi}_{n-1}(z;x,t), \qquad \bbs{\Phi}_{0}(z;x,t) = \ee^{-\ii xz \boldsymbol \sigma_3}, \label{dressing Psi}\\
(\boldsymbol{\chi}_n(z;x,t))_{{j\ell}} &= \delta_{{j\ell}} + \frac{z_n - \overline z_n}{z - z_n} \frac{\overline{q_{n{j}}(x,t)} \, q_{n{\ell}}(x,t)}{|\bbs{q}_n(x,t)|^2}, \label{dressing matrix} \\
\bbs{q}_n(x,t) &= \overline{\bbs{\Phi}_{n-1}(\overline z_n;x,t)} \, \begin{bmatrix} 1 \\ C_n(t) \end{bmatrix}, \label{qn via Phin}
\end{align}
where ${j,\ell}=1,2$ and $\delta_{{j\ell}}$ is the Kronecker delta in \eqref{dressing matrix}, and we write \( \bbs{q}_n(x,t) = \begin{bmatrix}q_{n1}(x,t) &q_{n2}(x,t)\end{bmatrix}^{\top} \). The  $N$-soliton solution \( \psi_N(x,t)\) is then given by
\begin{equation}
\label{psi_n}
\psi_N(x,t) = -2\sum_{k=1}^N\mu_k\frac{\overline{q_{k1}(x,t)}\, q_{k2}(x,t)}{|\bbs{q}_k(x,t)|^2}.
\end{equation}
Because \( 2\left|\overline{q_{k1}}\, q_{k2}\right| \leq |\bbs{q}_k|^2 \) for all $(x,t)\in \R\times \R^+$, \eqref{psi_n} readily yields the bound in \eqref{eq:upper_bound_result}.

\subsection{Riemann-Hilbert Approach}
The dressing method is a straightforward method for iteratively computing soliton solutions or, more generally, for adding solitons to a known ``seed'' solution. However, it is not well suited to asymptotic analysis. 
To study the dependence of solutions on external parameters, the IST is more appropriately formulated as a Riemann-Hilbert problem (RHP). Using the nonlinear steepest descent method one can often compute complete asymptotic expansions to the solution of the RHP with explicit error bounds. 
Below, the Riemann-Hilbert problem for fNLS is given for reflectionless potentials, which is all we need for our purposes. The RHP for generic potentials that decay sufficiently as $|x| \to \infty$ can be found many places. See, for example, \cite{Kamvissis2003, Jenkins2018}. 

\begin{rhp}
\label{rhp-sch}
    Given reflectionless scattering data $\left\{ \left(z_k, C_k\right) \right\}_{k=1}^N$, find a matrix function $\bbs M(z)\in\textrm{SL}(2,\C)$ such that
\begin{enumerate}
	\item $\bbs{M}(\,\cdot\;; x,t)$ is analytic in $ \C \setminus \big( \{ z_k, \overline z_k \}_{k=1}^N\big)$;
	\item $\bbs{M}(z;x,t) = \bbs I + \bigo{z^{-1}}$ as $z \to \infty$;
	\item $\bbs{M}(z;x,t)$ has a simple pole at each $z_k$ and \( \overline z_k \) satisfying the residue relation
	\begin{equation}
		\begin{aligned}
		&\Res_{z=z_k} \bbs{M}(z;x,t) = \phantom{-}c_k \ee^{2 \ii \theta(z_k;x,t)} \lim_{z \to z_k} \bbs{M}(z;x,t) \tril[0]{1}, \\
		&\Res_{z=\bar{z}_k} \bbs{M}(z;x,t) = -\overline c_k \ee^{-2\ii \theta(\overline z_k;x,t)} \lim_{z \to \bar{z}_k} \bbs{M}(z;x,t) \triu[0]{1},
		\end{aligned}
	\end{equation}
 where \( \theta(z;x,t) = tz^2 +xz \) and the {\em residue coefficients} $c_k$ are related to the norming constants $C_k$ by (see \cite{Gelash2020,Gelash2021}) 
\begin{equation}
\label{eq:c_C_rel}
    c_k := \frac{1}{C_k\, B'(z_k)}\,, \quad B(z) = \prod_{k=1}^N\frac{z-z_k}{z-\overline z_k}.
\end{equation}
\end{enumerate}
\end{rhp}

Expanding the solution of this RHP as $z \to \infty$, it can be shown \cite{Jenkins2018} that
\begin{equation}\label{m.expand}
    \bbs{M}(z;x,t) = \bbs{I} + \frac{1}{2\ii z} 
    \begin{bmatrix}
        -m(x,t) & \psi(x,t) \smallskip \\ \overline{\psi(x,t)} & m(x,t) 
    \end{bmatrix}
    +\bigo{z^{-2}}, \quad z \to \infty,
\end{equation}
where 
\begin{equation} \label{eq:moment}
m(x,t) := \int_x^\infty | \psi(s,t)|^2 \, \di s .
\end{equation}
It follows that the solution of the fNLS equation \eqref{eq:nls} can be recovered as 
 \begin{equation}\label{eq:recovery}
     \psi(x,t) = \lim_{z\to\infty} 2\ii z [\bbs M(z;x,t)]_{1,2}\,.
 \end{equation}

\subsection{Proof of Proposition \ref{thm:darboux_intro}}
\label{subsec:Darboux}

According to \eqref{psi_n}, to prove \eqref{upper bound}, it is enough to show that 
\begin{equation}
\label{inductive hyp}
q_{k1}(x_0,t_0)=q_{k2}(x_0,t_0).
\end{equation}
Since \( C_k(t_0) = C_k\ee^{-2\ii z_k^2t_0} \), our choice of the constants \( c_k \) and  \eqref{eq:c_C_rel} yields that \( C_k(t_0) = \ee^{2\ii z_kx_0} \) for each \( k=1,\ldots,N \). Hence, we get from \eqref{qn via Phin} and \eqref{dressing Psi} that
\[
\bbs{q}_1(x_0,t_0) = \ee^{\ii x_0z_1} \begin{bmatrix} 1 \\ 1 \end{bmatrix} \quad \Rightarrow \quad q_{11}(x_0,t_0)=q_{12}(x_0,t_0)
\]
as desired. Assume now that \eqref{inductive hyp} holds for all \( k=1,\ldots,n-1\). Relations \eqref{qn via Phin} and \eqref{dressing Psi} give
\[
\bbs q_n(x_0,t_0) = \ee^{\ii x_0z_n}\left(\prod_{k=1}^{n-1}\overline{\boldsymbol{\chi}_k(\overline z_n;x_0,t_0)}\right) \begin{bmatrix} 1 \\ 1 \end{bmatrix}.
\]
It follows from \eqref{inductive hyp} and \eqref{dressing matrix} that
\[
\boldsymbol{\chi}_k(z;x_0,t_0)\begin{bmatrix} 1 \\ 1 \end{bmatrix} = \left( \boldsymbol I + \frac12 \frac{z_k-\overline z_k}{z-z_k} \begin{bmatrix} 1 & 1 \\ 1 & 1 \end{bmatrix} \right)\begin{bmatrix} 1 \\ 1 \end{bmatrix} = \left( 1 + \frac{z_k-\overline z_k}{z-z_k} \right) \begin{bmatrix} 1 \\ 1 \end{bmatrix},
\]
which, in turn, readily implies that
\[
\bbs q_n(x_0,t_0) = \ee^{\ii x_0z_n}\prod_{k=1}^{n-1} \left( 1 - \frac{z_k - \overline z_k}{z_n-\overline z_k} \right) \begin{bmatrix} 1 \\ 1 \end{bmatrix}  \quad \Rightarrow \quad q_{n1}(x_0,t_0)=q_{n2}(x_0,t_0)
\]
as desired. This clearly finishes the proof of the inductive step and therefore of \eqref{upper bound}.

\subsection{Proof of Theorem~\ref{thm:1}}
\label{subsec:RHP_Nsols_train}

Given the RHP~\ref{rhp-sch}, we first perform a rescaling transformation
\[
	\widehat{\bbs{M}}(\lambda;x,t) := \bbs{M}(\Delta \lambda; x,t) \ .
\]
Intuitively, since $\Delta \gg 1$, the matrix $\widehat{\bbs{M}}(\lambda;x,t)$ will have polar singularities very close to the real axis, while still having neighboring distances (i.e. horizontal spacing)  of order $\bigo{1}$: 
\[
    \lambda_k := \frac{ - v_k + \ii \mu_k}{2\Delta} =: \frac{-\widehat v_k + \ii \widehat \mu_k }2 \quad \text{with} \quad \widehat v_k-\widehat v_{k-1}\geq 1 \ .
\]
More rigorously, it is easy to verify that $\widehat{\bbs M}(\lambda;x,t)$ satisfies the following RHP:
\begin{enumerate}
	\item $\widehat{\bbs{M}}(\lambda; x,t)$ is analytic for $\lambda \in \C \setminus \{ \lambda_k, \overline\lambda_k \}_{k=1}^N$;
	\item $\widehat{\bbs{M}}(\lambda;x,t) = \bbs I + \bigo{\lambda^{-1}}$ as $\lambda \to \infty$;
	\item $\widehat{\bbs{M}}(\lambda;x,t)$ has a simple pole at each $\lambda_k$ and $\overline\lambda_k$ satisfying the residue relation
	\begin{equation}
	\begin{aligned}
		&\Res_{\lambda=\lambda_k} \widehat{\bbs{M}}(\lambda; x,t) = 
		\widehat\gamma_k(x,t)  \lim_{\lambda \to \lambda_k} \widehat{\bbs{M}}(\lambda;x,t) \tril[0]{1}, \\
		&\Res_{\lambda=\overline\lambda_k} \widehat{\bbs{M}}(\lambda; x,t) = -\overline{\widehat\gamma_k(x,t)} 
		\lim_{\lambda \to \overline{\lambda}_k} \widehat{\bbs{M}}(\lambda;x,t) \triu[0]{1},
	\end{aligned}
	\end{equation}
    where \(\widehat \gamma_k(x,t) = \gamma_k(x,t)/\Delta\) with $\gamma_k(x,t) := c_k \ee^{2\ii\theta(z_k;x,t)}$. 
\end{enumerate}
The solution of fNLS is then given by 
\begin{equation}\label{eq:recover.2}
    \psi_N(x,t) = \lim_{\lambda \to \infty} 2\ii \Delta \lambda \, [\widehat{\bbs{M}}(\lambda;x,t)]_{1,2}.
\end{equation}

In order to consider all values of $N\in \mathbb N$ simultaneously, we embed \( \widehat v_1,\ldots,\widehat v_N \) into a sequence \( \{\widehat v_k\}_{k\in\N}\), \( |\widehat v_k-\widehat v_j|\geq1\) for any \( k\neq j\) and let \( c_k =0 \) for all \( k>N \) (so that \( \widehat\gamma_k(x,t)=0 \), \( k>N \)). Then, for each \( k\in \N \) we define 
\[
\Gamma_k:= \big\{\lambda \in \C : | \lambda + \tfrac12\widehat v_k | = {\varepsilon_0} \big\}\ , 
\]
a circle centered at $-\tfrac12\widehat v_k$ and orient it counterclockwise for some $\varepsilon_0>0$ sufficiently small, so that $\Gamma_k\cap \Gamma_j = \emptyset$, $\forall j,k \in \N$, $j\neq k$.
We are assuming the parameter \( \Delta \) is large enough so that the pair of poles $\lambda_k$ and $\overline \lambda_k$ lies in the interior of $\Gamma_k$ for all \( k=1,\ldots,N \). We set $\Gamma := \bigcup_{k=1}^\infty \Gamma_k$.  
 
 In the next step we trade polar singularities for jump relations on \( \Gamma \). Thus, we define
 \begin{equation}
\label{eq:N.def}
	\bbs{N}(\lambda;x,t) := 
	\begin{dcases}
	\widehat{\bbs{M}}(\lambda;x,t) 
		\tril{a_k(\lambda)}
		\triu{b_k(\lambda)},
	&\lambda \in \inside(\Gamma_k),\\
	\widehat{\bbs{M}}(\lambda;x,t), 
	& \text{ otherwise},
	\end{dcases}
\end{equation}
where 
 \begin{equation}\label{eq:a_k/b_k}
 \begin{gathered}
 a_k(\lambda) = \frac{\ii \widehat\gamma_k(x,t)}{\widehat{\mu}_k} \frac{\lambda-\overline\lambda_k}{\lambda-\lambda_k}  
 = \left( \frac{\ii \widehat\gamma_k(x,t)}{\widehat{\mu}_k} - \frac{\widehat\gamma_k(x,t)}{\lambda-\lambda_k} \right)\ , 
 \\ 
 b_k(\lambda) = \frac{ \widehat{\mu}_k^2 \,\overline{\widehat \gamma_k(x,t)} }{\widehat{\mu}_k^2 + | \widehat \gamma_k(x,t) |^2} \frac{1}{\lambda - \overline{\lambda}_k}
 = -\frac1{2\ii\Delta} \frac{\psi^{(k)}(x,t)}{\lambda- \overline\lambda_k}\ .
 \end{gathered}
 \end{equation}

\begin{lemma}
    The matrix $\bbs N(\lambda;x,t) $ is analytic in $\C \setminus \Gamma$.
\end{lemma}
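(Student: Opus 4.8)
The plan is to verify analyticity region by region. Outside the union of the open disks bounded by the circles $\Gamma_k$ the matrix $\bbs N$ coincides with $\widehat{\bbs M}$, and since every pole $\lambda_k,\overline\lambda_k$ lies inside the corresponding $\Gamma_k$ by construction, $\bbs N$ is manifestly analytic there. It therefore suffices to fix one index $k$ and show that on $\inside(\Gamma_k)$ the only possible singularities of $\bbs N = \widehat{\bbs M}\,\tril{a_k}\,\triu{b_k}$ are the two enclosed points $\lambda_k$ and $\overline\lambda_k$, and that both are removable; at every other point of $\inside(\Gamma_k)$ the three factors are analytic, so $\bbs N$ is too. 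Writing $\widehat{\bbs M}=[\widehat{\bbs m}_1\mid\widehat{\bbs m}_2]$ in columns, the triangular product gives first column $\widehat{\bbs m}_1+a_k\widehat{\bbs m}_2$ and second column $b_k(\widehat{\bbs m}_1+a_k\widehat{\bbs m}_2)+\widehat{\bbs m}_2$, so it is enough to control these two combinations near each pole.

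At $\lambda_k$ only $\widehat{\bbs m}_1$ and $a_k$ are singular. The RHP residue relation for $\widehat{\bbs M}$ reads $\Res_{\lambda_k}\widehat{\bbs m}_1=\widehat\gamma_k\,\widehat{\bbs m}_2(\lambda_k)$ with $\widehat{\bbs m}_2$ regular at $\lambda_k$, while the partial-fraction form of $a_k$ in \eqref{eq:a_k/b_k} gives $\Res_{\lambda_k}a_k=-\widehat\gamma_k$. Hence the simple poles in $\widehat{\bbs m}_1+a_k\widehat{\bbs m}_2$ cancel, so the first column is regular; since $\widehat{\bbs m}_2$ and $b_k$ are both analytic at $\lambda_k$, the second column is regular as well. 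This is the straightforward half.

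The delicate half, which I expect to be the main obstacle, is the pole at $\overline\lambda_k$. The factor $(\lambda-\overline\lambda_k)/(\lambda-\lambda_k)$ built into $a_k$ forces $a_k(\overline\lambda_k)=0$, which annihilates the simple pole of $\widehat{\bbs m}_2$ in the first column and makes $\widehat{\bbs m}_1+a_k\widehat{\bbs m}_2$ analytic there. The subtlety is that its value at $\overline\lambda_k$ is \emph{not} $\widehat{\bbs m}_1(\overline\lambda_k)$: because $a_k$ vanishes only to first order while $\widehat{\bbs m}_2$ has a simple pole, the product contributes $a_k'(\overline\lambda_k)\,\Res_{\overline\lambda_k}\widehat{\bbs m}_2$. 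Computing $a_k'(\overline\lambda_k)=-\widehat\gamma_k/\widehat\mu_k^2$ from \eqref{eq:a_k/b_k} and using the residue relation $\Res_{\overline\lambda_k}\widehat{\bbs m}_2=-\overline{\widehat\gamma_k}\,\widehat{\bbs m}_1(\overline\lambda_k)$, this correction produces exactly the factor $(\widehat\mu_k^2+|\widehat\gamma_k|^2)/\widehat\mu_k^2$, so that the first column evaluates to $\big(\widehat\mu_k^2+|\widehat\gamma_k|^2\big)\widehat\mu_k^{-2}\,\widehat{\bbs m}_1(\overline\lambda_k)$ at $\overline\lambda_k$.

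It remains to check the second column $b_k(\widehat{\bbs m}_1+a_k\widehat{\bbs m}_2)+\widehat{\bbs m}_2$ at $\overline\lambda_k$. Its only candidate pole comes from $b_k$ multiplying the now-analytic first column, with residue $\big(\Res_{\overline\lambda_k}b_k\big)$ times the value of that column at $\overline\lambda_k$. Here the prefactor $\widehat\mu_k^2\overline{\widehat\gamma_k}/(\widehat\mu_k^2+|\widehat\gamma_k|^2)$ in $b_k$ is tuned precisely so that its denominator cancels the factor just produced, leaving residue $\overline{\widehat\gamma_k}\,\widehat{\bbs m}_1(\overline\lambda_k)$; this exactly offsets $\Res_{\overline\lambda_k}\widehat{\bbs m}_2=-\overline{\widehat\gamma_k}\,\widehat{\bbs m}_1(\overline\lambda_k)$. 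Thus both columns are regular at $\overline\lambda_k$, the singularity is removable, and $\bbs N$ is analytic on all of $\inside(\Gamma_k)$. Combined with the exterior region, this shows $\bbs N$ is analytic on $\C\setminus\Gamma$. The crux of the argument — and the point where the specific rational forms of $a_k$ and $b_k$ in \eqref{eq:a_k/b_k} are indispensable — is the interplay at $\overline\lambda_k$ between the first-order zero of $a_k$, the derivative term it forces into the regular value of the first column, and the matching denominator $\widehat\mu_k^2+|\widehat\gamma_k|^2$ in $b_k$.
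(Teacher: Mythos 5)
Your proof is correct and follows essentially the same route as the paper: decompose $\bbs N$ into columns $\widehat{\bbs M}_1+a_k\widehat{\bbs M}_2$ and $\widehat{\bbs M}_2+b_k(\widehat{\bbs M}_1+a_k\widehat{\bbs M}_2)$, cancel the pole at $\lambda_k$ using $\Res_{\lambda_k}a_k=-\widehat\gamma_k$ against the residue relation, and handle $\overline\lambda_k$ via the first-order zero of $a_k$, the correction $a_k'(\overline\lambda_k)\Res_{\overline\lambda_k}\widehat{\bbs M}_2$, and the tuned prefactor of $b_k$. Your bookkeeping at $\overline\lambda_k$ (evaluating the regularized first column and multiplying by $\Res_{\overline\lambda_k}b_k$) is algebraically identical to the paper's grouping $\bigl(1+a_k'(\overline\lambda_k)\Res_{\overline\lambda_k}b_k\bigr)\Res_{\overline\lambda_k}\widehat{\bbs M}_2+\Res_{\overline\lambda_k}b_k\,\widehat{\bbs M}_1(\overline\lambda_k)$, so no substantive difference remains.
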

\begin{proof}
Write \( \widehat{\bbs M} = [\widehat{\bbs M}_1,\widehat{\bbs M}_2] \). Since
\[
\widehat{\bbs  M}\,\bbs  E_{21} = [\widehat{\bbs  M}_2,0], \quad \widehat{\bbs  M}\,\bbs  E_{12} = [0,\widehat{\bbs  M}_1], \quad \text{and} \quad \widehat{\bbs M} \, \bbs  E_{22} = [0,\widehat{\bbs M}_2],
\]
where \( \bbs E_{ij} \) is the \(2\times 2\) elementary matrix whose entries are zero except for the \((i,j)\)-th entry, which is 1, it holds that
\[
\bbs  N = \widehat{\bbs{M}} \big( \bbs I + a_k\bbs E_{21} + b_k\bbs E_{12} + a_kb_k \bbs E_{22} \big) = \big[\widehat{\bbs  M}_1+a_k\widehat{\bbs  M}_2,\widehat{\bbs  M}_2+b_k\widehat{\bbs  M}_1+a_kb_k\widehat{\bbs  M}_2 \big].
\]
The residue conditions of \( \widehat{\bbs M} \) can be rewritten as
\[
\begin{cases}
\underset{\lambda =\lambda_k}{\mathrm{Res}}\,\widehat{\bbs  M} = \big[\underset{\lambda =\lambda_k}{\mathrm{Res}}\,\widehat{\bbs  M}_1,0\big] =  \big[\widehat\gamma_k(x,t) \widehat{\bbs  M}_2(\lambda_k),0\big], \medskip \\
\underset{\lambda =\overline\lambda_k}{\mathrm{Res}}\,\widehat{\bbs  M} = \big[0,\underset{\lambda =\overline\lambda_k}{\mathrm{Res}}\,\widehat{\bbs  M}_2] =  [0,- \overline{\widehat\gamma_k(x,t)} \widehat{\bbs M}_1(\overline\lambda_k)\big].
\end{cases}
\]
Let \( \bbs N(\lambda) = [\bbs N_1(\lambda),\bbs N_2(\lambda)] \). We have that
\[
\begin{cases}
\underset{\lambda = \lambda_k}{\mathrm{Res}}\,\bbs  N_1 = \underset{\lambda = \lambda_k}{\mathrm{Res}}\,\widehat{\bbs  M}_1 + \underset{\lambda = \lambda_k}{\mathrm{Res}}\,a_k~\widehat{\bbs  M}_2(\lambda_k)  =0, \medskip \\
\underset{\lambda = \lambda_k}{\mathrm{Res}}\,\bbs N_2 = b_k(\lambda_k)\underset{\lambda = \lambda_k}{\mathrm{Res}}\,\big(\widehat{\bbs M}_1 + a_k\widehat{\bbs M}_2\big)  =0.
\end{cases}
\]
On the other hand, because \( a_k(\overline\lambda_k) =0 \), \( \bbs  N_1(\lambda) \) does not have a pole at \( \overline\lambda_k \). Moreover,
\begin{align*}
\underset{\lambda = \overline\lambda_k}{\mathrm{Res}}\,\bbs N_2 & = \left(1+a_k^\prime(\overline\lambda_k) \underset{\lambda = \overline\lambda_k}{\mathrm{Res}}\,b_k\right) \underset{\lambda = \overline\lambda_k}{\mathrm{Res}}\,\widehat{\bbs  M}_2 + \underset{\lambda = \overline\lambda_k}{\mathrm{Res}}\,b_k~\widehat{\bbs  M}_1(\overline\lambda_k) \\
& = \left( -\overline{\widehat{\gamma}_k(x,t)} + \left(1+ \frac{|\widehat\gamma_k(x,t)|^2}{\widehat\mu_k^2} \right)\underset{\lambda = \overline\lambda_k}{\mathrm{Res}}\,b_k \right)\ \widehat{\bbs M}_1(\overline\lambda_k) = 0.
\end{align*}

Hence, \( \bbs N \) is analytic in each \( \inside(\Gamma_k) \), but has a discontinuity across each \( \Gamma_k \) by construction.
\end{proof}

Instead of calculating directly the jump across $\Gamma$ for the matrix $\bbs N$, we first introduce one last transformation. We define
\begin{equation}\label{rhp:O_def}
	\bbs{O}(\lambda;x,t) := \begin{dcases}
		\bbs{N}(\lambda;x,t) \tril{-\frac{\ii \widehat \gamma_k(x,t)}{\widehat\mu_k} }, & z \in \inside(\Gamma_k),  \\
		\bbs{N}(\lambda;x,t),  & \text{otherwise}.
	\end{dcases}
\end{equation}

\begin{lemma}
\label{prop:rhpO}
    The matrix $\bbs{O}$ solves the following RHP:
\begin{enumerate}
	\item $\bbs{O}(\lambda; x,t)$ is analytic for $\lambda \in \C \setminus \Gamma$; 
	\item $\bbs{O}(\lambda;x,t) = \bbs I + \bigo{\lambda^{-1}}$ as $\lambda \to \infty$;
	\item $\bbs{O}(\lambda;x,t)$ has continuous boundary values on each side of $\Gamma$ satisfying the jump relations ($+$ and $-$ boundary values are on the right and left side of the contour when traversing it according to its orientation)
	\begin{align}
		\bbs{O}_+(\lambda;x,t) &= 
            \bbs{O}_-(\lambda;x,t)  \bbs{V}_{{O}}(\lambda;x,t) \ , \qquad \lambda \in \Gamma, \\
\bbs{V}_{{O}}(\lambda;x,t) 
  &= \bbs{I} + 
     \begin{multlined}[t][.7\linewidth]
       \frac{1}{2\ii \Delta} \sum_{k=1}^N \left( 
       \frac{1}{\lambda-\lambda_k} 
       \begin{bmatrix}
          0 & 0 \\ 
          -\overline{\psi^{(k)}(x,t)} & 
          -m^{(k)}(x,t)
       \end{bmatrix}
       \right. \\ \left.
       + \frac{1}{\lambda-\overline\lambda_k} 
       \begin{bmatrix}
         m^{(k)}(x,t) & 
         -\psi^{(k)}(x,t) \\
         0 & 0
       \end{bmatrix}
       \right) \mathbbm 1_k(\lambda),
     \end{multlined}
     \label{eq:V0.expand}
\end{align}
where $\psi^{(k)}(x,t)$ is the one-soliton solution \eqref{one_soliton} with scattering data $(\lambda_k,c_k)$, $m^{(k)}(x,t)$ is given by \eqref{eq:moment}, and $\mathbbm 1_k$ is the indicator function of $\Gamma_k$.
\end{enumerate}
\end{lemma}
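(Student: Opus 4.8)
The plan is to check the three conditions defining the RHP for $\bbs{O}$ one at a time, with essentially all of the content concentrated in the jump relation (3). Conditions (1) and (2) are inherited from $\bbs N$. For analyticity, note that by \eqref{rhp:O_def} the matrix $\bbs O$ differs from $\bbs N$ only inside each $\Gamma_k$, where it is obtained by right multiplication by the constant (i.e.\ $\lambda$-independent) unipotent factor $\tril{-\ii\widehat\gamma_k/\widehat\mu_k}$; this factor is entire and invertible, so since the preceding lemma gives analyticity of $\bbs N$ on $\C\setminus\Gamma$, the same holds for $\bbs O$. For the normalization, recall that $c_k=0$ for $k>N$, hence $\widehat\gamma_k=0$ and all inserted factors equal $\bbs I$ on $\Gamma_k$ for $k>N$; thus $\bbs O$ differs from $\widehat{\bbs M}$ only inside the finitely many bounded circles $\Gamma_1,\dots,\Gamma_N$. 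Consequently $\bbs O\equiv\widehat{\bbs M}$ in a neighborhood of $\lambda=\infty$, and the normalization $\widehat{\bbs M}(\lambda)=\bbs I+\bigo{\lambda^{-1}}$ passes directly to $\bbs O$.

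For (3), the decisive structural fact is that $\widehat{\bbs M}$ is analytic across each $\Gamma_k$: its only singularities are the poles $\lambda_k,\overline\lambda_k$, which lie strictly inside $\Gamma_k$, so $\widehat{\bbs M}$ is continuous on the circle itself. Hence the two one-sided boundary values of $\bbs O$ on $\Gamma_k$ carry the common analytic factor $\widehat{\bbs M}$, which cancels in the ratio $\bbs O_-^{-1}\bbs O_+$. Using \eqref{eq:N.def} and \eqref{rhp:O_def}, the jump therefore reduces (up to the orientation of $\Gamma_k$) to the purely algebraic product of the three inserted unipotent factors,
\[
\bbs{V}_{O}\big|_{\Gamma_k}=\tril{a_k}\,\triu{b_k}\,\tril{-\tfrac{\ii\widehat\gamma_k}{\widehat\mu_k}},
\]
with $a_k,b_k$ as in \eqref{eq:a_k/b_k}. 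Multiplying these out gives a $2\times2$ matrix whose entries are rational in $\lambda$ and tend to $\bbs I$ as $\lambda\to\infty$; it then suffices to locate the poles and match residues.

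Two features of \eqref{eq:a_k/b_k} make the poles collapse to the claimed rank structure. First, the trailing factor $\tril{-\ii\widehat\gamma_k/\widehat\mu_k}$ is chosen exactly to cancel the $\lambda$-independent part $\ii\widehat\gamma_k/\widehat\mu_k$ of $a_k$, leaving only the simple-pole term $-\widehat\gamma_k/(\lambda-\lambda_k)$ at $\lambda_k$ (this is also what makes $\bbs V_O=\bbs I+\bigo{\Delta^{-1}}$, as needed for the later small-norm analysis). Second, the numerator $\lambda-\overline\lambda_k$ in $a_k$ gives $a_k(\overline\lambda_k)=0$, which kills the apparent pole at $\overline\lambda_k$ in every product involving $a_k$. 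Together these force the bottom row of $\bbs V_O$ to have a single simple pole at $\lambda_k$ and the top row a single simple pole at $\overline\lambda_k$, matching \eqref{eq:V0.expand}. The off-diagonal entries are then read off immediately: the $(1,2)$ entry is $b_k=-\tfrac{1}{2\ii\Delta}\psi^{(k)}/(\lambda-\overline\lambda_k)$, and the $(2,1)$ entry, after combining $-\widehat\gamma_k/(\lambda-\lambda_k)$ with the surviving cross term, equals $-\tfrac{1}{2\ii\Delta}\overline{\psi^{(k)}}/(\lambda-\lambda_k)$ by the conjugate of the relation in \eqref{eq:a_k/b_k} defining $\psi^{(k)}$.

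The one genuinely nontrivial point, and the step I expect to be the main obstacle, is the diagonal. Here one must show that the residues of the $(1,1)$ and $(2,2)$ entries (which come out proportional to $\widehat\mu_k|\widehat\gamma_k|^2/(\widehat\mu_k^2+|\widehat\gamma_k|^2)$) are exactly $\pm m^{(k)}/(2\ii\Delta)$. This amounts to the explicit single-soliton moment identity, namely $m^{(k)}=2\mu_k|\gamma_k|^2/(\mu_k^2+|\gamma_k|^2)$ with $\gamma_k=c_k\ee^{2\ii\theta(z_k;x,t)}$, which one obtains either by integrating $|\psi^{(k)}|^2$ from the one-soliton formula \eqref{one_soliton} or from the large-$z$ expansion \eqref{m.expand} of the single-soliton RHP. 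Undoing the rescaling $\widehat\mu_k=\mu_k/\Delta$, $\widehat\gamma_k=\gamma_k/\Delta$ then turns the computed residues into $\pm m^{(k)}/(2\ii\Delta)$; once this identity is in place, all four entries of the product agree with \eqref{eq:V0.expand}, which proves (3) and completes the lemma.
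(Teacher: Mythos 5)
Your proposal is correct and follows essentially the same route as the paper: conditions (1)--(2) are inherited by construction, the jump on each $\Gamma_k$ reduces to the product $\tril{a_k}\triu{b_k}\tril{-\ii\widehat\gamma_k/\widehat\mu_k}$, and the entries are identified with $\psi^{(k)}$ and $m^{(k)}$ via one-soliton identities. The diagonal identity you single out as the key step, $m^{(k)}=2\mu_k|\gamma_k|^2/(\mu_k^2+|\gamma_k|^2)$, is precisely what the paper establishes in \eqref{eq:one.sol.first.moments} by solving the one-soliton linear system \eqref{alpha_beta}--\eqref{gamma} and reading off $m^{(k)}=-2\ii\alpha^{(k)}$, i.e.\ the second of the two derivations you propose.
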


\begin{proof} 
By construction (see \eqref{rhp:O_def}), $\bbs O$ satisfies points 1. and 2. of the lemma, and the jump condition reads 
	\begin{equation}
		\begin{aligned}
		\bbs{O}_+(\lambda;x,t) &= 
            \bbs{O}_-(\lambda;x,t)  \bbs{V}_{{O}}(\lambda;x,t), \\
		\bbs{V}_{{O}}(\lambda;x,t) &= \left(1-\sum_{k=1}^N\mathbbm 1_k(\lambda)\right)\boldsymbol I +
		\sum_{k=1}^N 
            \tril{
             a_k(\lambda)
            }
		\triu{
             b_k(\lambda) 
            }
		\tril{-\frac{\ii \widehat \gamma_k(x,t)}{\widehat\mu_k}}
		\mathbbm 1_k(\lambda). 
		\end{aligned}
	\end{equation}
Consider now the one-soliton problem with data $(z_k, c_k)$. The solution can be found by solving a $2\times 2$ linear system \eqref{alpha_beta}--\eqref{gamma}  which yields
\begin{equation}\label{eq:one.sol.first.moments}
\begin{gathered}
    \psi^{(k)}(x,t) = -2\ii \overline{\beta^{(k)}(x,t)} 
    \quad \text{and} \quad m^{(k)}(x,t) = -2\ii \alpha^{(k)}(x,t)\ ,\\
      \text{with} \ \alpha^{(k)}(x,t) = \ii \Delta \widehat \mu_k\, \frac{| \widehat \gamma_k(x,t)|^2}{\widehat \mu_k^2 + | \widehat \gamma_k(x,t)|^2 }\ ,
    \quad
    \beta^{(k)}(x,t) = \Delta \widehat\gamma_k(x,t)\, \frac{\widehat \mu_k^2}{\widehat \mu_k^2 + | \widehat \gamma_k(x,t)|^2 } \ .
    \end{gathered}
\end{equation}

Using \eqref{eq:one.sol.first.moments}, we calculate
\begin{equation*}
    \begin{aligned}
 a_k(\lambda) b_k(\lambda) & = 
 \ii \widehat \mu_k \frac{|\widehat\gamma_k(x,t)|^2}{\widehat{\mu}_k^2+|\widehat\gamma_k(x,t)|^2} \frac{1}{\lambda-\lambda_k} 
 = -\frac{1}{2\ii\Delta} \frac{m_k(x,t)}{\lambda-\lambda_k}, \medskip \\
\frac{\ii \widehat\gamma_k(x,t)}{\widehat \mu_k} b_k(\lambda) & = 
\ii \widehat \mu_k \frac{|\widehat\gamma_k(x,t)|^2}{\widehat{\mu}_k^2+|\widehat\gamma_k(x,t)|^2} \frac{1}{\lambda-\overline \lambda_k} 
 = -\frac{1}{2\ii\Delta} \frac{m_k(x,t)}{\lambda-\overline \lambda_k}, \medskip \\
a_k(\lambda)-\frac{\ii \widehat\gamma_k(x,t)}{\widehat \mu_k} \big(1+(a_kb_k)(\lambda) \big) &  
= \frac{\ii \widehat\gamma_k(x,t)}{\widehat \mu_k} \left( \frac{\lambda-\overline\lambda_k}{\lambda-\lambda_k} -1 -(a_kb_k)(\lambda)\right) 
= -\frac1{2\ii\Delta} \frac{\overline{\psi^{(k)}(x,t)}}{\lambda-\lambda_k}.
\end{aligned}
\end{equation*}
These relations readily yield \eqref{eq:V0.expand}. 
\end{proof}

We are now in a position to use the Small Norm Argument \cite{smallnormRH} and to derive the asymptotic behavior of the $N$-soliton potential in the regime $\Delta \gg 1$ and $\boldsymbol \mu$ bounded in $\ell^2$- and $\ell^\infty$-norms.

 The solution of the RHP $\bbs{O}$ (if it exists) can expressed in the form
\begin{equation}
\label{eq:O.integral.form}
	\bbs{O}(\lambda;x,t) = \bbs{I} + \int_{\Gamma} \frac{ (\bbs{I} + \boldsymbol{\eta}(\xi;x,t))(\bbs{V}_{O}(\xi;x,t) - \bbs{I}) }{\xi - \lambda} \frac{\mathrm{d} \xi}{2\pi \ii},
\end{equation}
where $\boldsymbol{\eta}$ is the solution of the integral equation
\begin{equation}\label{eq:eta_integral}
	(\mathbbm{1} - \mathcal{C}_{O})\boldsymbol{\eta} = \mathcal{C}_{O}[\bbs{I}], \qquad \mathcal{C}_{O}[\bbs f] := \mathcal{C}_-[\bbs  f (\bbs{V}_{O} - \bbs{I})] \ ,
\end{equation}
and $\mathcal C_-: L^2(\Gamma) \to L^2(\Gamma)$ is the Cauchy projection operator on $\Gamma$, namely
\begin{equation}\label{eq:Cauchy}
\mathcal C_-[\bbs f](\lambda) := \lim_{\substack{z \to \lambda \\ z \in \mbox{ \small right side of }  \Gamma}} \left( \frac{1}{2\pi \ii}\int_{\Gamma} \frac{\bbs f(\xi)}{\xi-z} \di \xi\right)\ .
\end{equation}
It remains to show that the integral operator $\mathbbm{1} - \mathcal{C}_{O}$ in \eqref{eq:eta_integral} is invertible, thus yielding existence (and
uniqueness) of the solution $\bbs O$.

\begin{lemma} 
\label{prop:small_norm}
The Cauchy operator $\mathcal C_O$ defined in \eqref{eq:eta_integral} has bounded norm
\begin{equation}\label{eq:CO_norm}
	\| \mathcal{C}_{O} \| \leq C_* \|\boldsymbol \mu \|_\infty \Delta^{-1}
\end{equation}
for some constant $ C_*>0$. Then, it follows for $\Delta > C_*\|\boldsymbol \mu \|_\infty$ that $\boldsymbol{\eta}$ exists and it  can be expanded as a convergent Neumann series
 \begin{equation}
     \boldsymbol{\eta} = (\mathbbm{1} - \mathcal{C}_{O})^{-1} \mathcal{C}_{O}[\bbs{I}] = \sum_{j=1}^\infty \mathcal{C}_{O}^{j}[\bbs{I}].
 \end{equation}
\end{lemma}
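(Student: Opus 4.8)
**The plan is to establish the operator-norm bound \eqref{eq:CO_norm} and then invoke the standard small-norm (Neumann series) argument.** The Cauchy projection $\mathcal{C}_-$ is a bounded operator on $L^2(\Gamma)$ with a norm that depends only on the geometry of the contour $\Gamma$, and crucially \emph{not} on $N$: since $\Gamma = \bigcup_k \Gamma_k$ is a disjoint union of circles of fixed radius $\tfrac16$ whose centers $-\tfrac12 \widehat v_k$ are separated by at least $\tfrac12$ (because $|\widehat v_k - \widehat v_j|\geq 1$), the operator decouples across components up to controlled interactions, and one has $\|\mathcal{C}_-\|_{L^2(\Gamma)\to L^2(\Gamma)} \leq c_0$ for an absolute constant $c_0$. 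Hence $\|\mathcal{C}_O\| = \|\mathcal{C}_-[\,\cdot\,(\bbs V_O - \bbs I)]\| \leq c_0 \,\|\bbs V_O - \bbs I\|_{L^\infty(\Gamma)}$ (combined with a uniform bound on $\|\mathcal{C}_-\|_{L^2\to L^\infty}$-type estimates along the contour), so the whole matter reduces to estimating the sup-norm of $\bbs V_O - \bbs I$ on $\Gamma$.

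First I would read off from the explicit formula \eqref{eq:V0.expand} that on a fixed circle $\Gamma_k$ the jump matrix differs from the identity by terms carrying the prefactor $\tfrac{1}{2\ii\Delta}$ and involving $\psi^{(k)}(x,t)$, $\overline{\psi^{(k)}(x,t)}$, and $m^{(k)}(x,t)$, divided by $\lambda - \lambda_k$ or $\lambda - \overline\lambda_k$. On $\Gamma_k$ the denominators are bounded below by a fixed constant (the poles $\lambda_k,\overline\lambda_k$ sit well inside $\Gamma_k$, at distance $\gtrsim \tfrac16$ from the boundary once $\Delta$ is large), so the $k$-th summand is controlled by $\tfrac{C}{\Delta}\big(|\psi^{(k)}| + |m^{(k)}|\big)$. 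From the one-soliton formulas \eqref{one_soliton} and \eqref{eq:moment} one has the uniform pointwise bounds $|\psi^{(k)}(x,t)| \leq \mu_k \leq \|\boldsymbol\mu\|_\infty$ and $m^{(k)}(x,t) = \int_x^\infty \mu_k^2 \sech^2(\cdots)\,\di s \leq 2\mu_k \leq 2\|\boldsymbol\mu\|_\infty$. Because the indicator functions $\mathbbm 1_k$ have disjoint supports, at any single point $\lambda\in\Gamma$ only one term in the sum over $k$ is nonzero, so the sup-norm of $\bbs V_O - \bbs I$ over all of $\Gamma$ is bounded by $\max_k \tfrac{C}{\Delta}(|\psi^{(k)}| + |m^{(k)}|) \leq C_* \|\boldsymbol\mu\|_\infty \Delta^{-1}$, which is exactly \eqref{eq:CO_norm} with the claimed $N$-independence.

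Once the bound $\|\mathcal{C}_O\| \leq C_*\|\boldsymbol\mu\|_\infty\Delta^{-1}$ is in hand, the conclusion is immediate: for $\Delta > C_*\|\boldsymbol\mu\|_\infty$ we have $\|\mathcal{C}_O\| < 1$, so $\mathbbm 1 - \mathcal{C}_O$ is invertible on $L^2(\Gamma)$ by the geometric (Neumann) series $(\mathbbm 1 - \mathcal{C}_O)^{-1} = \sum_{j\geq 0}\mathcal{C}_O^j$, and applying this to $\mathcal{C}_O[\bbs I]$ gives the stated convergent expansion $\boldsymbol\eta = \sum_{j\geq 1}\mathcal{C}_O^j[\bbs I]$ (the $j=0$ term vanishes since it is applied to $\mathcal{C}_O[\bbs I]$). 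This also certifies existence and uniqueness of $\bbs O$ through the integral representation \eqref{eq:O.integral.form}.

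\textbf{The main obstacle} is making the claim ``$\|\mathcal{C}_-\|$ is bounded uniformly in $N$'' fully rigorous, since $\Gamma$ has infinitely many components and its total length grows without bound. The subtlety is that $\mathcal{C}_-$ does not act fully diagonally across the circles: the Cauchy kernel $1/(\xi-z)$ couples a point on $\Gamma_k$ to the jump data supported on $\Gamma_j$ for $j\neq k$. The key estimate is that these off-diagonal contributions decay in the separation $|\widehat v_k - \widehat v_j| \geq |k-j|$ fast enough to be summable, so that the Schur-test (or a direct $L^2$ block-operator estimate) yields an $N$-independent operator norm for $\mathcal{C}_-$ restricted to the relevant class of jump data. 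I would organize this via a Schur-test on the block structure indexed by the circles, using that the circles are uniformly separated and of uniformly bounded radius; this geometric input is what ultimately forces the error bounds in Theorem~\ref{thm:1} to depend on $N$ only through $\|\boldsymbol\mu\|_\infty$ and $\|\boldsymbol\mu\|_2$.
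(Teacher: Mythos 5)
Your overall skeleton is the same as the paper's: bound $\|\bbs V_O-\bbs I\|_{L^\infty(\Gamma)}$ by $C\|\boldsymbol\mu\|_\infty\Delta^{-1}$ using $|\psi^{(k)}|\leq\mu_k$, $m^{(k)}\leq 2\mu_k$, the disjointness of the supports $\mathbbm 1_k$, and the lower bound on the denominators $|\lambda-\lambda_k|$ on $\Gamma_k$; then write $\|\mathcal C_O[\bbs f]\|_{L^2}\leq\|\mathcal C_-\|\,\|\bbs V_O-\bbs I\|_{L^\infty}\|\bbs f\|_{L^2}$ and sum the Neumann series when $\Delta>C_*\|\boldsymbol\mu\|_\infty$. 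Those steps are correct (one small slip: no ``$L^2\to L^\infty$-type estimate'' on $\mathcal C_-$ is needed or true; the displayed bound follows from $L^2$-boundedness of $\mathcal C_-$ together with H\"older's inequality $\|\bbs f(\bbs V_O-\bbs I)\|_{L^2}\leq\|\bbs f\|_{L^2}\|\bbs V_O-\bbs I\|_{L^\infty}$).

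However, there is a genuine gap exactly at the point you flagged as the main obstacle, and your proposed fix does not work. You claim that the off-diagonal couplings between circles decay like $1/|k-j|$ and are therefore ``summable'', so a Schur test (or block-operator estimate) gives an $N$-independent bound on $\|\mathcal C_-\|$. But $\sum_{j\neq k}|k-j|^{-1}$ is a harmonic sum: it grows like $\log N$, so the Schur test applied to the modulus of the Cauchy kernel (or to the block norms $\|T_{kj}\|\sim|k-j|^{-1}$) yields at best $\|\mathcal C_-\|\lesssim\log N$, not $O(1)$. This is not an artifact of the method being lossy in a harmless way: the integral operator with kernel $|\xi-z|^{-1}$ on $\Gamma$ genuinely has norm of order $\log N$ (test it on the function identically $1$ on $\bigcup_{k\leq N}\Gamma_k$), just as the matrix $\bigl(|k-j|^{-1}\bigr)_{j\neq k}$ is unbounded on $\ell^2(\mathbb Z)$. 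Consequently any argument that discards the phase of the Cauchy kernel must fail, and your route would propagate a spurious $\log N$ into the error bounds of Theorem~\ref{thm:1}. Uniform boundedness of $\mathcal C_-$ requires cancellation. The paper gets it by observing that $\Gamma$ (uniformly separated circles of fixed radius) is Ahlfors--David regular and invoking the known $L^2$-boundedness of the Cauchy transform on such sets, with references. If you want a self-contained argument, you can expand the off-diagonal kernel about the center $c_j$ of $\Gamma_j$, writing $\frac1{\xi-z}=\frac1{c_j-z}+\frac{c_j-\xi}{(\xi-z)(c_j-z)}$: the remainder has block norms $O(|k-j|^{-2})$, which \emph{are} Schur-summable, while the leading term reduces to a discrete Hilbert-transform matrix $\bigl((c_j-c_k)^{-1}\bigr)$ whose boundedness on $\ell^2$ is Hilbert's inequality --- and that inequality holds precisely because of the sign structure you proposed to ignore.
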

\begin{proof}
Using the estimates
\begin{equation}\label{eq:estimates}
|\psi^{(k)}(x,t)| \leq \mu_k \quad \text{and} \quad m^{(k)}(x,t) \leq \| \psi^{(k)}(\, \cdot\,,t) \|_{L^2(\mathbb R)}^2 = 2\mu_k \ , \qquad \forall \, (x,t)\in \mathbb R^2\ ,
\end{equation}
in the expression \eqref{eq:V0.expand} for the jump matrix $\bbs V_O$, it follows that 
    \begin{equation}
        \left\| (\bbs{V}_{O}(\lambda) - \bbs{I}) \big\vert_{\lambda \in \Gamma_{k}} \right\| 
        \  
        \begin{cases} 
          \leq C_0 \frac{\mu_k }{\Delta} , & 1\leq k \leq N, \\
          = 0, & k > N,
        \end{cases}
    \end{equation}
    for some constant $C_0>0$, where \( \|\boldsymbol f\|^2 := \operatorname{Tr}(\boldsymbol f^*\boldsymbol f) \) for a given matrix \( \boldsymbol f\) (if fact, one can use any matrix norm). Therefore, there exists a constant $C_1>0$ such that
\begin{equation}
\label{eq:VO}
        \| \bbs{V}_{O} - \bbs{I} \|_{L^\infty(\Gamma)} \leq C_1 \| \boldsymbol\mu \|_\infty \Delta^{-1}
        \quad \text{and} \quad
        \| \bbs{V}_{O} - \bbs{I} \|_{L^2(\Gamma)} \leq C_1 \| \boldsymbol\mu \|_2\Delta^{-1}
\end{equation}  
uniformly for all $(x,t) \in \R^2$, where \( \|\boldsymbol f\|_{L^\infty(\Gamma)} :=  \text{ess}\sup_{\Gamma} \|\boldsymbol{f}(\xi)\|\)  and \( \|\boldsymbol f\|_{L^2(\Gamma)}^2 := \int_\Gamma \|\boldsymbol f(\xi)\|^2 |\di \xi|\) for a given \( 2\times 2\) matrix-functions \( \boldsymbol f(\xi),\boldsymbol g(\xi)\). Since the loops $\Gamma_k$ have fixed radii and are well separated, the contour $\Gamma$ is Ahlfors-David regular\footnote{We recall that a set $G$ is Ahlfors-David regular if there exists $c,C>0$ such that $cr \leq \mathcal H^1(G\cap B_r(z)) \leq Cr$ for any $z\in \C$, $r\in (0,\operatorname{diam}G)$, where $\mathcal H^1$ is the 1-dimensional Hausdorff measure and $B_r(z)$ is the open ball centered at $z$ with radius $r$; see \cite{MR3892469}.} and it follows \cite{MR3892469,MR1405945} that the operator norm $\| \mathcal{C}_- \|$ is finite. The norm estimates \eqref{eq:VO} then yield that
\begin{align*}
	\left\| \mathcal{C}_{O}[\bbs f] \right\|_{L^2(\Gamma)} & \leq \| \mathcal{C}_- \| \left\| (\bbs V_O - \bbs I) \right\|_{L^\infty(\Gamma)} \left\|\bbs f  \right\|_{L^2(\Gamma)} \leq C_* \|\boldsymbol \mu \|_\infty \Delta^{-1} \left\|\bbs f  \right\|_{L^2(\Gamma)}, \smallskip \\
    \| \mathcal{C}_{O}[\bbs I] \|_{L^2(\Gamma)} & \leq \| \mathcal{C}_- \| \|\bbs V_O - \bbs I\|_{L^2(\Gamma)} \leq C_* \|\boldsymbol\mu\|_2\Delta^{-1},
\end{align*}
where \( C_* := \| \mathcal{C}_- \|C_1 \). Hence, for any $\Delta>C_*\|\boldsymbol\mu\|_\infty$, $(\mathbbm{1} - \mathcal{C}_{O})^{-1}$ exists and $\boldsymbol{\eta}$ can be expanded as a convergent Neumann series
 \[
     \left\| \boldsymbol{\eta}\right\|_{L^2(\Gamma)} = \bigg\| \sum_{j=1}^\infty \mathcal{C}_{O}^{j}[\bbs{I}] \bigg\|_{L^2(\Gamma)} \leq \sum_{j=1}^\infty \|\mathcal C_O\|^{j-1}\big\|\mathcal C_O[\bbs I]\big\|_{L^2(\Gamma)} \leq \frac{C_* \|\boldsymbol\mu\|_2}{\Delta-C_*\|\boldsymbol\mu\|_\infty} . \qedhere
 \]
 \end{proof}

From \eqref{eq:V0.expand}, we can now derive an explicit expression for $\mathcal{C}_{O}[\bbs{I}]$:
\begin{equation}
    \begin{aligned}
    \mathcal{C}_{O}[\bbs{I}](\lambda) 
    &= \sum_{k=1}^N \frac{1}{2\pi \ii}\oint_{\Gamma_k} (\bbs{V}_{O} (\xi) - \bbs{I}) \frac{\di \xi}{\xi-\lambda} \\
    &= \begin{multlined}[t][.8\textwidth]
    \frac{1}{2\ii \Delta} 
    \sum_{k=1}^N \frac{1}{2\pi \ii}\oint_{\Gamma_k}
    \left( 
       \frac{1}{\xi-\lambda_k} 
       \begin{bmatrix}
          0 & 0\\ 
          -\overline{\psi^{(k)}(x,t)} & 
          -m^{(k)}(x,t)
       \end{bmatrix} 
       \right. \\ \left.
       + \frac{1}{\xi-\overline\lambda_k} 
       \begin{bmatrix}
         m^{(k)}(x,t) & 
         -\psi^{(k)}(x,t) \\
         0 & 0
       \end{bmatrix}
       \right) 
    \frac{\di \xi}{\xi-\lambda},
    \end{multlined}
    \end{aligned}
\end{equation}
where $\lambda$ is understood to lie outside each of the $\Gamma_k$. Evaluating by residues gives
\begin{equation}
\label{CO}
     \mathcal{C}_{O}[\bbs{I}](\lambda) 
    = \frac{1}{2\ii \Delta} \sum_{k=1}^N 
    \left( 
    \frac{1}{\lambda-\lambda_k} 
    \begin{bmatrix}
        0 & 0\\ 
        \overline{\psi^{(k)}(x,t)} & 
        m^{(k)}(x,t)
    \end{bmatrix} 
    + \frac{1}{\lambda-\overline\lambda_k} 
    \begin{bmatrix}
       -m^{(k)}(x,t) & 
       \psi^{(k)}(x,t) \\
       0 & 0
    \end{bmatrix}
    \right).   
\end{equation}

Expanding \eqref{eq:O.integral.form} for large $\lambda$ gives
\[
\begin{cases}
    \bbs{O}(\lambda;x,t) &\displaystyle  = \bbs{I} + \frac{\bbs{O}_1(x,t)}{\lambda} + \bigo{\lambda^{-2}}, \medskip \\
    \bbs{O}_1(x,t) & \displaystyle  = -\frac{1}{2\pi \ii } \int_{\Gamma} (\bbs{I} + \boldsymbol{\eta}(\xi;x,t) )(\bbs{V}_{O}(\xi;x,t) - \bbs{I}) \di \xi  = \sum_{j=0}^\infty \bbs{O}_1^{(j)}(x,t), \medskip \\
    \bbs{O}_1^{(j)}(x,t) & \displaystyle := -\frac{1}{2\pi \ii } \int_{\Gamma} \mathcal{C}_{O}^{j}[\bbs{I}](\xi) 
 (\bbs{V}_{O}(\xi;x,t) - \bbs{I}) \di \xi.
 \end{cases}
\]
Computing the first two terms by residues  with the help of \eqref{CO} gives
\begin{align}
\label{eq:O1.expand.0}
    \bbs{O}_1^{(0)}(x,t) &= -\frac{1}{2\pi \ii} \int_{\Gamma} (\bbs{V}_{O}(\xi) - \bbs{I})\, \di \xi 
    = \frac{1}{2\ii \Delta} \sum_{k=1}^N 
    \begin{bmatrix}
        -m^{(k)}(x,t) 
        & \psi^{(k)}(x,t) \\[.5em]
        \overline{\psi^{(k)}(x,t)} & 
       m^{(k)}(x,t)
    \end{bmatrix} \\
   \bbs{O}_1^{(1)}(x,t) &= -\frac{1}{2\pi \ii} \int_{\Gamma} \mathcal{C}_{O}[\bbs{I}](\xi)(\bbs{V}_{O}(\xi) - \bbs{I})\, \di \xi \nonumber \\
    & = \frac{1}{(2\ii \Delta)^2} \sum_{j=1}^N \frac{1}{2\pi \ii} \oint_{\Gamma_j} 
    \sum_{k=1}^N  
    \begin{bmatrix}
        \frac{-m^{(k)}}{\xi-\overline\lambda_k} & 
        \frac{\psi^{(k)}}{\xi-\overline\lambda_k} \\[.5em] 
        \frac{\overline{\psi^{(k)}}}{\xi-\lambda_k} & 
       \frac{m^{(k)}}{\xi-\lambda_k}
    \end{bmatrix}   
    \begin{bmatrix}
       \frac{-m^{(j)}}{\xi-\overline\lambda_j}  & 
       \frac{\psi^{(j)}}{\xi-\overline\lambda_j}  \\[.5em] 
        \frac{\overline{\psi^{(j)}}}{\xi-\lambda_j} & 
        \frac{m^{(j)}}{\xi-\lambda_j}
       \end{bmatrix}
     \,\di \xi \nonumber \\
     & =\frac{1}{(2\ii \Delta)^2} \sum_{j=1}^N \sum_{\substack{ k =1 \\ k \neq j}}^N
     \begin{bmatrix}         
        \frac{\psi^{(k)}\overline{\psi^{(j)}}}{\lambda_j - \overline\lambda_k} + \frac{m^{(k)} m^{(j)}}{\overline\lambda_j - \overline\lambda_k} 
        &
          \frac{\psi^{(k)}m^{(j)}}{\lambda_j - \overline\lambda_k} - \frac{m^{(k)} \psi^{(j)}}{\overline\lambda_j - \overline\lambda_k}
        \\[.5em]
        \frac{ m^{(k)}\overline{\psi^{(j)}}}{\lambda_j -\lambda_k} - \frac{\overline{\psi^{(k)}} m^{(j)}}{\overline\lambda_j-\lambda_k}
        &
        \frac{m^{(k)} m^{(j)}}{\lambda_j - \lambda_k} 
        + \frac{\overline{\psi^{(k)}}\psi^{(j)}}{\overline\lambda_j - \lambda_k} 
     \end{bmatrix}  \ . \label{eq:O1.expand.1}
\end{align}
Furthermore, using the supremum matrix norm, it holds for any $j \geq 1$ that
\begin{align}
     \left\| \bbs{O}_1^{(j)}(x,t) \right\| & \leq \frac{1}{2\pi} \bigg\| \int_{\Gamma} \left| \mathcal{C}_{O}^j[\bbs{I}](\xi)(\bbs{V}_{O}(\xi) - \bbs{I}) \right| \, |\di \xi| \bigg\|
       \leq  \frac{1}{2\pi} \big\| \mathcal{C}_{O}^j[\bbs{I}] \big\|_{L^2(\Gamma)} \,
     \left\| \bbs{V}_{O}- \bbs{I} \right\|_{L^2(\Gamma)} \nonumber \\
     & \leq \frac1{2\pi}  \big\| \mathcal C_O \big\|^{j-1} \,
     \| \bbs{V}_{O}- \bbs{I} \|_{L^2(\Gamma)}^2 \leq \left( C_* \frac{\|\boldsymbol\mu\|_\infty}\Delta \right)^{j-1} \left(C_1 \frac{\| \boldsymbol\mu\|_2}\Delta \right)^2, 
\label{eq:Oj.estimate}     
\end{align}
Finally, undoing all the transformations and using formula \eqref{eq:recover.2}, we have that the $N$-soliton solution of \eqref{eq:nls} parameterized by scattering data \( \{(z_k,c_k)\}_{k=1}^N \) is given by
 \begin{equation}
     \psi_N(x,t) = \lim_{\lambda\to\infty} 2\ii\Delta \lambda [\widehat{\bbs M}(\lambda;x,t)]_{1,2} = \lim_{\lambda\to\infty} 2\ii \Delta \lambda [\bbs O(\lambda;x,t)]_{1,2} = 2\ii \Delta 
   \left[ \sum_{j=0}^\infty \bbs{O}_1^{(j)}(x,t)    \right]_{1,2} \ ,
\end{equation}
which, together with \eqref{eq:Oj.estimate}, gives \eqref{linear interaction}.

\subsection{Proof of Corollary~\ref{cor:sinc}}

We get from \eqref{eq:soliton parameters} and \eqref{eq:scattering_data} that
\[
\ii \mu e^{\mu x_k+\ii\phi_k} = c_k = \ii \mu e^{-v_k\mu + \ii(\mu^2-v_k^2)/2}.
\]
That is, \( x_k=-v_k\) and \( \phi_k=(\mu^2-v_k^2)/2\). It follows that
\[
\psi^{(k)}(x,t) = -\mu\sech\big(\mu(x-v_k(t-1)\big)e^{\ii\big(xv_k-\frac{t-1}2(v_k^2-\mu^2)\big)}
\]
by \eqref{one_soliton}. 
Hence, using \eqref{linear interaction} and Remark~\ref{error terms} we get that
\begin{multline}\label{eq:psiN.rescaled}
\frac1N \psi_N\left( \frac{2X}{N V},1+\frac{T}{(N V)^2} \right) = \\
-\frac\mu N \sum_{k=1}^N \sech\left(\frac{2\mu}{N V} \left(X-\frac{v_k}{N V} \frac{T}{2}\right)\right) e^{\ii \frac{\mu^2T}{2(N V)^2}} e^{\ii\big(2X \frac{v_k}{N V} - \frac T2 \frac{v_k^2}{(N V)^2} \big)} + \bigo{\frac1{\Delta}}.
\end{multline}
Setting \( s_k := v_k/(N V) \in [\alpha+\tfrac{k-1}{N},\alpha+\tfrac kN]\) for each \( k=1,\ldots,N \), it holds locally uniformly with respect to \( X,T\) that
\[
\frac1N \psi_N\left( \frac{2X}{N V},1+\frac{T}{(N V)^2} \right) = -\mu\sum_{k=1}^N e^{\ii\big(2Xs_k-\frac T2s_k^2\big)}\frac1N + \bigo{\frac1{\Delta}},
\]
where, for the error bound, we note that $\Delta \leq  V$. Standard error estimates of Riemann sum approximation now give that
\[
\frac1N \psi_N\left( \frac{2X}{N\Delta},1+\frac{T}{(N\Delta)^2} \right) = -\mu \int_{\alpha}^{\alpha+1} e^{\ii\big(2Xs-\frac T2 s^2\big)}\di s + \bigo{\max\left\{\frac1N,\frac1{\Delta}\right\}}.
\]
It only remains to notice that
\[
\int_{\alpha}^{\alpha+1} e^{\ii\big(2Xs-\frac T2 s^2\big)}\di s = e^{\ii\big(2\alpha X- \alpha^2\frac T2\big)}\int_0^1 e^{\ii\big(2\big(X-\frac {\alpha T}2\big)s-\frac T2 s^2\big)}\di s.
\]

\section{Stochastic $N$-soliton Solutions} \label{sec:Nsolitons_random}

We now consider $N$-soliton solutions whose  scattering data are random satisfying Assumption \ref{hp:random_scattering_intro}. 

As we are assuming the imaginary part of the poles $\{\mu_k = 2\Im(z_k)\}$ to be distributed as a sub-exponential random variable (see \cite[Definition 2.7]{Wainwright2019}), we recall some of its properties.

\begin{definition}\label{def:subexp}
    Given $\alpha, \nu \in \R_+$, a random variable $\fX$ is sub exponential of parameters $(\nu,\alpha)$ if 
    \begin{equation}
\label{subexp distr}
    \meanval{\ee^{\lambda(\fX - \E[\fX])}} \leq \ee^{\frac{\nu^2\lambda^2}{2}}\,, \quad \forall \, \lambda < \frac{1}{\alpha}\ .
\end{equation}
\end{definition}
In particular, this implies exponential decay of the tail of the distribution (see \cite[Proposition 2.10]{Wainwright2019}):
\begin{equation}
\label{eq:subexp_prop}
    \mathbb{P}\lp| \fX - \E[\fX] | \geq s \rp \leq 
        2 \exp \lp -\frac{1}{2} \min\left\{  \frac{s^2}{\nu^2}, \frac{s}{\alpha} \right\} \rp\,, \qquad \forall\, s>0 \,.
\end{equation}

As an example, we can consider the amplitudes to be distributed as a chi-squared distribution $\mu_k \sim \chi^2(\beta)$, $\beta \in \mathbb R_+$, i.e.
\begin{equation}
    \meanval{f(\mu_k)} = \frac{1}{2^{\frac{\beta}{2} }\Gamma\lp \frac{\beta}{2}\rp}\int f(\mu_k)\mu_k^{\frac{\beta}{2}-1}\ee^{-\frac{\mu_k}{2}}d\mu_k\,,
\end{equation}
where $\Gamma(\beta)$ is the Gamma-function \cite[Ch. 5]{DLMF}.

The $N$-soliton setup we are considering is similar to the deterministic case in Proposition~\ref{thm:darboux_intro} and Theorem~\ref{thm:1}: the velocities of the solitons are tuned so that at finite time ($t=1$) the solution will display a peak  of order $\bigo{N}$, due to all the solitons colliding together. \\
We  recall that provided that $\Delta > C_* \| \bm \mu \|_\infty$, the $N$-soliton solution can be written as 
\begin{equation}\label{eq:sum_sols}
\psi_N(x,t) = \sum_{k=1}^N \psi^{(k)}(x,t) + \wt \psi(x,t) + \bigo{\frac{\|\boldsymbol \mu\|_{\infty} \| \boldsymbol \mu\|_2^2}{\Delta^2}}\,, 
\end{equation}
with 
\begin{align}
&\psi^{(k)}(x,t) = - \mu_k\sech \lp \mu_k (x -k\Delta (t-1) ) \rp \ee^{\ii \lp k\Delta x + \tfrac{1}{2}(\mu_k^2 - k^2 \Delta^2)(t-1) \rp} \ , \\
\label{eq:second_order}
&\wt \psi(x,t) := \frac{1}{2\Delta \ii} \sum_{j=1}^N \sum_{\substack{ k=1 \\ k \neq j}}^N
   \left[ \frac{-m^{(k)}(x,t) \psi^{(j)}(x,t)}{\overline{\lambda_j} - \overline{\lambda_k}} 
    + \frac{\psi^{(k)}(x,t){m^{(j)}(x,t)}}{\lambda_j - \overline{\lambda_k}}
    \right] \ ,
\end{align}
where we set $z_j =  \lambda_j \Delta$ (see Theorem~\ref{thm:1}).

\subsection{Probability estimates}

In order to apply Theorem~\ref{thm:1} in a stochastic setting and obtain a CLT-type result, we will need to control the subleading terms of \eqref{eq:sum_sols} and we will approximate the leading term in an appropriate way. \\ 
We start by deriving a probabilistic bound for $\widetilde \psi$ \eqref{eq:second_order}.
\begin{lemma}\label{prop:prob_error_estimate}
Under Assumption \ref{hp:random_scattering_intro}, there exists a constant $K_\cD$, dependent on the distribution $\cD$, but independent of $N$, such that for any $ 0 \leq \varepsilon < \gamma - \frac{1}{2}$, 
\begin{equation}
\label{eq:delta_uniform}
    \P\left( \big| \wt \psi(x,t) \big| > N^{\frac{1}{2} -\varepsilon}\right) \leq K_\cD \frac{\ln(N)}{N^{\gamma - \frac{1}{2} - \varepsilon}}\, , \qquad \meanval{ \big| \wt \psi(x,t) \big|} \leq K_\cD \frac{\ln(N)}{N^{\gamma-1}}\,.
\end{equation}
where $\wt \psi(x,t)$  has been defined in \eqref{eq:second_order}.
\end{lemma}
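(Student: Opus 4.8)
The plan is to reduce the random quantity $\wt\psi$ to a purely deterministic bound in terms of the amplitudes $\mu_k$ and the velocity spacings, and then to extract both estimates from a single first-moment computation.

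\medskip
\noindent\emph{Step 1: a deterministic pointwise bound.} Since $v_k=k\Delta$, Assumption~\ref{hp:random_scattering_intro} gives $\lambda_j=\tfrac12(-j+\ii\mu_j/\Delta)$, so for $j\neq k$ the real parts of both $\lambda_j-\overline\lambda_k$ and $\overline\lambda_j-\overline\lambda_k$ equal $-(j-k)/2$, whence
\[
|\lambda_j-\overline\lambda_k|\geq\tfrac12|j-k| \qquad\text{and}\qquad |\overline\lambda_j-\overline\lambda_k|\geq\tfrac12|j-k|.
\]
Inserting these lower bounds together with the estimates $|\psi^{(k)}(x,t)|\leq\mu_k$ and $m^{(k)}(x,t)\leq 2\mu_k$ from \eqref{eq:estimates} termwise into \eqref{eq:second_order}, I obtain a bound that is uniform in $(x,t)\in\R^2$:
\[
\big|\wt\psi(x,t)\big|\leq\frac{4}{\Delta}\sum_{j=1}^N\sum_{\substack{k=1\\k\neq j}}^N\frac{\mu_j\mu_k}{|j-k|}.
\]

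\medskip
\noindent\emph{Step 2: the expectation bound.} Taking expectations (all summands are nonnegative, so Tonelli applies) and using that $\mu_j,\mu_k$ are independent for $j\neq k$ with common mean $\mu_{\mathcal D}$, so that $\E[\mu_j\mu_k]=\mu_{\mathcal D}^2$, I get
\[
\E\big[|\wt\psi(x,t)|\big]\leq\frac{4\mu_{\mathcal D}^2}{\Delta}\sum_{j=1}^N\sum_{\substack{k=1\\k\neq j}}^N\frac{1}{|j-k|}.
\]
The inner sum is harmonic: for fixed $j$ it is bounded by $2\sum_{d=1}^{N-1}d^{-1}\leq C\ln N$, so the full double sum is at most $CN\ln N$. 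Since $\Delta=\beta N^\gamma$, this yields $\E[|\wt\psi|]\leq K_{\mathcal D}\,\ln N/N^{\gamma-1}$ with $K_{\mathcal D}=4C\mu_{\mathcal D}^2/\beta$, which is the second claim; finiteness of $\mu_{\mathcal D}$ is guaranteed by the sub-exponential hypothesis.

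\medskip
\noindent\emph{Step 3: the tail bound.} The first estimate follows at once from Markov's inequality applied to the nonnegative variable $|\wt\psi|$:
\[
\P\big(|\wt\psi(x,t)|>N^{\frac12-\varepsilon}\big)\leq\frac{\E[|\wt\psi|]}{N^{\frac12-\varepsilon}}\leq K_{\mathcal D}\frac{\ln N}{N^{\gamma-1}\,N^{\frac12-\varepsilon}}=K_{\mathcal D}\frac{\ln N}{N^{\gamma-\frac12-\varepsilon}},
\]
with the \emph{same} constant $K_{\mathcal D}$, for every $0\leq\varepsilon<\gamma-\tfrac12$; the restriction on $\varepsilon$ guarantees that the exponent $\gamma-\tfrac12-\varepsilon$ is positive so that both bounds decay.

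\medskip
The computation is essentially routine, and I expect no serious obstacle. The one load-bearing observation is that the well-separated deterministic velocities produce the factor $|j-k|$ in the denominators, which is precisely what collapses the $N^2$-term double sum into a quantity of size $N\ln N$ and therefore delivers decay in $N$ after dividing by $\Delta\sim N^\gamma$. In particular no concentration inequality is required here: the first moment already matches the stated polynomial tail rate through Markov's inequality, and the sub-exponential assumption is invoked only to ensure the mean $\mu_{\mathcal D}$ is finite.
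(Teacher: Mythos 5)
Your proof is correct and takes essentially the same route as the paper: a pointwise bound on the double sum in \eqref{eq:second_order} via the soliton estimates \eqref{eq:estimates} and the spacing $|\lambda_j-\overline\lambda_k|,|\overline\lambda_j-\overline\lambda_k|\gtrsim|j-k|$, then independence to get $\E\big[|\widetilde\psi|\big]\lesssim \mu_{\mathcal D}^2 N\ln N/\Delta$, and finally Markov's inequality with $\Delta=\beta N^\gamma$. The only cosmetic difference is that the paper first recombines the two fractions into a leading $1/|j-k|$ piece plus a subleading $1/(\Delta|j-k|^2)$ piece (estimates it reuses elsewhere, e.g.\ in Remark~\ref{rem:first_clt}), whereas you bound the two fractions directly; this changes nothing of substance.
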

\begin{proof}
First, we notice that, by exchanging $j,k$, in \eqref{eq:second_order}, we can rewrite the general term of the double sum as
\begin{equation}
   m^{(k)}\psi^{(j)}\left(\frac{2 \wo{\lambda_j} - \lambda_k - \wo{\lambda_k} }{(\wo{\lambda_j} - \wo{\lambda_k})(\lambda_k - \wo{\lambda_j})} \right) =  m^{(k)} \psi^{(j)}\left( \frac{-2}{(\wo{\lambda_j} - \wo{\lambda_k})} + \frac{\lambda_k - \wo{\lambda_k} }{(\wo{\lambda_j} - \wo{\lambda_k})(\lambda_k - \wo{\lambda_j})} \right)\,.
\end{equation}
Using $| \wo{\lambda_j} - \wo{\lambda_k}| \geq |k-j|$ and estimates \eqref{eq:estimates}, the expected value of the first term is bounded by
\begin{equation}\label{eq:error_estimate_prob1}
    \meanval{\left \vert m^{(k)}(x,t)\psi^{(j)}(x,t)\frac{2}{(\wo{\lambda_j} - \wo{\lambda_k})} \right\vert } \leq \frac{2 \meanval{\left| m^{(k)}(x,t)\right| \, \left|\psi^{(j)}(x,t)\right|}}{|k-j|} \leq \frac{4\mu^2_{\cD}}{|k-j|}\ ,
\end{equation}
where we also used independency of the $\mu_k$'s. Analogously, the expected value of the second term can be estimated as
\begin{equation}\label{eq:error_estimate_prob2}
    \meanval{\left \vert m^{(k)}(x,t)\psi^{(j)}(x,t)\frac{\lambda_k - \wo{\lambda_k} }{(\wo{\lambda_j} - \wo{\lambda_k})(\lambda_k - \wo{\lambda_j})} \right\vert } \leq \frac{\meanval{\left| m^{(k)}(x,t)\right| \, \left| \psi^{(j)}(x,t)\right| \mu_k}}{\Delta |k-j|^2} \leq \frac{2 \mu_{\mathcal D} \meanval{ \mu^2_k}}{\Delta |k-j|^2}\ .
\end{equation}
Therefore, from \eqref{eq:error_estimate_prob1}-\eqref{eq:error_estimate_prob2}, there exists a constant $K_{\cD}$ independent of $N,\Delta$, but depending on the distribution of the $\mu_k$ such that
\begin{equation}
\begin{split}
    \meanval{ \big| \wt \psi(x,t) \big|} & \leq \frac{1}{\Delta}\sum_{j=1}^N \sum_{\substack{ k=1 \\ k \neq j}}^N  \left(\frac{2\mu^2_{\cD}}{|k-j|} + \frac{\mu_{\mathcal D} \meanval{ \mu^2_k}}{\Delta |k-j|^2}\right)\ \leq \widetilde K_{\cD}\frac1\Delta \sum_{j=1}^N \sum_{\substack{ k=1 \\ k \neq j}}^N \left(\frac{1}{|k-j|} + \frac{1}{\Delta |k-j|^2} \right)\\& \leq \frac{K_\cD}{2} \left(\frac{N\ln(N)}\Delta + \frac{N}{\Delta^2} \right) \leq K_\cD\frac{N\ln(N)}{\Delta}  \,.
    \end{split}
\end{equation}
The statement of the Lemma  now follows by applying the Markov inequality that states that for any positive random variable $X$ and positive number $\lambda$, $\mathbb{P}(X>\lambda)<\frac{\meanval{X}}{\lambda}$ and then substituting $\Delta = \beta N^\gamma$.
\end{proof}


Our goal is to study the fluctuations of the solutions $\psi_N(x,t)$ in a neighbourhood of the collision singularity (i.e. $x=0$ and $t=1$). Therefore, we rescale the space and time variables
\[
x = \frac{2X}{\Delta N} 
\qquad \text{ and } \qquad t-1 = \frac{T}{\Delta^2N^2} \ ;
\]
note that $T<0$ indicates a time before the collision and $T>0$ indicates a time after the collision. \\
In a slight abuse of notation, we will express functions of the original variables $x$ and $t$ as functions of $X$ and $T$.  For example, $\psi_{N}(X,T)$ will be used instead of $\psi_{N}\left( \frac{2X}{\Delta N}, 1 +  \frac{T}{\Delta^2N^2} \right)$, and so on.

In the next Proposition we will show that the rescaled profile near the collision singularity 
\begin{equation}
\psi_N(X,T) =  - \sum_{k=1}^N \mu_k\sech \left( \mu_k \lp \frac{2  X}{\Delta N} - \frac{kT}{\Delta N^2}\rp \right) \ee^{\ii  \lp 2  \frac{k}{N}X - \tfrac{k^2}{2N^2}T   + \frac{\mu^2_k T}{2\Delta^2 N^2} \rp} + \wt \psi(X,T) +\bigo{\frac{\|\boldsymbol \mu\|_{\infty} \| \boldsymbol \mu\|_2^2}{\Delta^2}}
\end{equation}
can be approximated with high probability by the following function
\begin{equation}\label{eq:approx_NSS}
\widehat \psi_{N}(X,T) :=   -\sum_{k=1}^N \mu_k \ee^{\ii  \lp 2  \frac{k}{N}X - \tfrac{k^2}{2N^2}T   \rp}\ .
\end{equation} 
provided that $\Delta$ is large enough (i.e. we are in the small norm setting of Theorem \ref{thm:1}). 

To make this statement quantitative, we define the function $f_N(X,T)$ as

\begin{equation}
\label{eq:f_N}
    f_N(X,T) =  - \sum_{k=1}^N \mu_k\sech \left( \mu_k \lp \frac{2  X}{\Delta N} - \frac{kT}{\Delta N^2}\rp \right) \ee^{\ii  \lp 2  \frac{k}{N}X - \tfrac{k^2}{2N^2}T   + \frac{\mu^2_k T}{2\Delta^2 N^2} \rp} + \wt \psi(X,T)\,,
\end{equation}
then the following holds:

\begin{proposition}\label{prop:psi_approx}

Under Assumption \ref{hp:random_scattering_intro}, fix $(X,T)$ in a compact set and let $N>2$, then there exist two constants $C_\cD$, $ \hat C_{\mathcal{D}}$, depending on the distribution $\cD$, such that
\begin{equation}
\label{eq:approx_local}
    \meanval{ \Big|  f_N(X,T) - \widehat \psi_{N}(X,T)  \Big|} \leq C_\cD \frac{\ln(N)}{N^{\gamma - 1}} +   \hat C_{\mathcal D}\frac{(X^2 +T^2+|X T| + |T|)}{\beta^2N^{2\gamma + 1}} \ ,
\end{equation}
where $f_N(X,T)$ is defined in \eqref{eq:f_N} and $\widehat \psi_{N}(X,T)$ is defined in \eqref{eq:approx_NSS}.
\end{proposition}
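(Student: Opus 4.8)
The plan is to split the difference $f_N-\widehat\psi_N$ into the two sources of discrepancy and estimate each in expectation. Comparing the definitions \eqref{eq:f_N} and \eqref{eq:approx_NSS}, the term $\wt\psi(X,T)$ appears in $f_N$ but not in $\widehat\psi_N$, while the two ``main sums'' differ only through the $\sech$ prefactor and the extra phase $\ee^{\ii \mu_k^2 T/(2\Delta^2N^2)}$. Writing $a_k := \mu_k\lp \tfrac{2X}{\Delta N} - \tfrac{kT}{\Delta N^2}\rp$ and $b_k := \tfrac{\mu_k^2 T}{2\Delta^2 N^2}$, I would record
\begin{equation*}
f_N(X,T) - \widehat\psi_N(X,T) = \wt\psi(X,T) - \sum_{k=1}^N \mu_k\, \ee^{\ii\lp 2\frac kN X - \frac{k^2}{2N^2}T\rp}\Big[ \sech\lp a_k\rp \ee^{\ii b_k} - 1\Big],
\end{equation*}
and apply the triangle inequality to treat the two contributions separately.

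The first contribution, $\meanval{|\wt\psi(X,T)|}$, is exactly the content of Lemma~\ref{prop:prob_error_estimate}, whose second bound gives order $\ln(N)/N^{\gamma-1}$ once we substitute $\Delta=\beta N^\gamma$. This directly produces the first term $C_\cD\,\ln(N)/N^{\gamma-1}$ on the right-hand side of \eqref{eq:approx_local}, with no further work required.

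The second contribution is where the actual estimation is carried out. Since each phase has modulus $1$, I would bound $\big| \sech(a_k)\ee^{\ii b_k} - 1\big| \leq \big|1 - \sech(a_k)\big| + \big|\ee^{\ii b_k} - 1\big|$ and then use the elementary inequalities $0 \leq 1 - \sech(a) \leq \tfrac12 a^2$ (which follows from $\cosh a \leq \ee^{a^2/2}$) together with $|\ee^{\ii b}-1| \leq |b|$. Because $k \leq N$, one has $\lp 2X - \tfrac{kT}{N}\rp^2 \leq (2|X|+|T|)^2$, so $a_k^2 \leq \tfrac{\mu_k^2}{\Delta^2 N^2}(2|X|+|T|)^2$ and $|b_k| = \tfrac{\mu_k^2|T|}{2\Delta^2N^2}$; after multiplying by the prefactor $\mu_k$, each summand is controlled by $\tfrac{\mu_k^3}{2\Delta^2 N^2}\big(4X^2 + 4|X||T| + T^2 + |T|\big)$. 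Summing over the $N$ terms, taking expectations, and using that the $\mu_k$ are i.i.d. with finite third moment $\meanval{\mu^3}$ — a consequence of the sub-exponential Assumption~\ref{hp:random_scattering_intro}, under which all moments exist — the factor $N$ combines with $\Delta^2=\beta^2 N^{2\gamma}$ to yield a bound of order $(X^2+|XT|+T^2+|T|)/(\beta^2 N^{2\gamma+1})$. This is precisely the second term in \eqref{eq:approx_local}, with $\hat C_\cD$ absorbing $\meanval{\mu^3}$ and the numerical constants.

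The analysis is mostly routine, and I expect the only point requiring care to be the second contribution: one must track the $(X,T)$ dependence while keeping the exponent of $N$ sharp. The key observation is that the deviation of $\sech$ from $1$ is \emph{quadratic} in its (small) argument, which supplies an extra factor of $N^{-2}$ beyond the $\Delta^{-2}$ and makes this term subdominant to the $\wt\psi$ contribution. Since $(X,T)$ ranges over a compact set, the prefactor $4X^2+4|X||T|+T^2+|T|$ is uniformly bounded, so both error terms are uniform on compacta; retaining them explicitly is what later enables locally uniform control.
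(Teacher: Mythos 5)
Your proposal is correct and follows essentially the same route as the paper's proof: the $\wt\psi$ term is handled by Lemma~\ref{prop:prob_error_estimate}, and the main-sum discrepancy is split via the triangle inequality into a $|1-\sech(a_k)|$ piece and an $|\ee^{\ii b_k}-1|$ piece, each controlled by the same elementary inequalities and the finite third moment, with $\Delta=\beta N^\gamma$ substituted at the end. The only cosmetic difference is that the paper introduces an intermediate function $\wt{\psi^{(k)}}$ (with $\sech$ but without the extra phase) and estimates the two differences as separate expectations, whereas you bound the combined bracket directly.
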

\begin{proof}
By linearity of the expected value
\begin{equation}
       \meanval{ \Big|  f_N(X,T) - \widehat \psi_{N}(X,T)  \Big|}\leq  \meanval{\left\vert \sum_{k=1}^N \psi^{(k)}(X,T) - \widehat \psi_{N}(X,T) \right\vert}  + \meanval{\vert\wt \psi(X,T)\vert} \,.
\end{equation}
 From Lemma \ref{prop:prob_error_estimate}, we bound the second term  as 
\begin{equation}
      \meanval{\vert\wt \psi(X,T)\vert}
   \leq  K_\cD \frac{\ln(N)}{ N^{\gamma -1}}
\end{equation}
for some constant $K_\cD>0$ depending on the distribution $\cD$. Next
\begin{equation}
\label{eq:markov}
    \begin{split}
          &  \meanval{\left|\sum_{k=1}^N\psi^{(k)}(X,T) - \widehat \psi_{N}(X,T) \right\vert } \\ 
          &   \leq  \left( \meanval{\left|\sum_{k=1}^N \psi^{(k)}(X,T) - \sum_{k=1}^N \wt{\psi^{(k)}}(X,T) \right\vert }  + \meanval{\left|\sum_{k=1}^N \wt{\psi^{(k)}}(X,T) - \widehat \psi_{N}(X,T) \right\vert }\right)\, ,
    \end{split}
\end{equation}
where
    \begin{equation}
        \wt{\psi^{(k)}}(X,T): =  - \mu_k\sech \left( \mu_k \lp \frac{2  X}{\Delta N} - \frac{kT}{\Delta N^2}\rp \right) \ee^{\ii  \lp 2  \frac{k}{N}X - \tfrac{k^2}{2N^2}T  \rp}\,.
    \end{equation}

The first term is estimated as follows
   \begin{eqnarray}
       \meanval{\left|\sum_{k=1}^N \psi^{(k)}(X,T) - \sum_{k=1}^N \wt{\psi^{(k)}}(X,T) \right\vert } &\leq& \sum_{k=1}^N \meanval{\left| \wt{\psi^{(k)}}(X,T) \right|\, \left| 1 - \ee^{\ii \frac{\mu^2_k T}{2\Delta^2 N^2} } \right|} \nonumber  \\ & 
        \leq& 2\sum_{k=1}^N \E\left[\mu_k^3  \right] \frac{|T|}{2\Delta^2  N^2}\leq \frac{c_{1,\cD} |T|}{\Delta^2 N} \label{eq:first_approx}
    \end{eqnarray}
for some constant $c_{1,\cD}$,  where in the second inequality we used $\left\vert 1- \ee^{s}\right\vert \leq 2|s|$ for $s \in i \mathbb{R}$.
The second term can be easily bounded in a similar way
    \begin{eqnarray}
          \meanval{\left|\sum_{k=1}^N \wt{\psi^{(k)}}(X,T) - \widehat \psi_{N}(X,T) \right| } &\leq& \sum_{k=1}^N \meanval{\left|\ii \mu_k \ee^{\ii  \lp 2  \frac{k}{N}X - \tfrac{k^2}{2N^2}T   \rp} \right| \, \left| 1 - \operatorname{sech} \left( \mu_k \lp \tfrac{2  X}{\Delta N} - \tfrac{kT}{\Delta N^2}\rp \right) \right| } \nonumber \\
          &\leq& \frac{c_{2,\cD}}{\Delta^2 N} \lp X^2 + T^2 + |XT| \rp \, ,
           \label{eq:second_term}
    \end{eqnarray}
    for some constant $c_{2,\cD}$, 
where we used the inequality $ \left \vert 1 - \operatorname{sech}(s)\right\vert \leq s^2$ for $s\in\mathbb{R}$. 

   Finally, substituting $\Delta = \beta N^\gamma$ concludes the proof.
\end{proof}

We can now prove a CLT-type result for the approximating function $\widehat \psi_{N}(X,T)$. We will then be able to extend these results to the original $N$-soliton solution $\psi_N(x,t)$, using the results from Propositions \ref{prop:psi_approx}.

\begin{lemma}
\label{lem:CLT_approx}
    Fix $X,T\in \mathbb{R}$. Under Assumption \ref{hp:random_scattering_intro}, the following convergence results hold
    \begin{align}
    \label{eq:clt_re_approx}
        & \ \frac{\Re\lp \widehat \psi_{N}(X,T) - N \mu_\cD\psi_0(X,T)\rp}{ \sqrt{N \Var_{\cD}}} \xrightarrow[N\to\infty]{\text{law}}   \mathcal{N}\big( 0,\sigma_+(X,T) \big) \ , \\
        \label{eq:clt_im_approx}
          &  \frac{\Im\lp \widehat \psi_{N}(X,T) - N \mu_\cD\psi_0(X,T)\rp}{ \sqrt{N \Var_{\cD}} } \xrightarrow[N\to\infty]{\text{law}} \mathcal{N}\big( 0,\sigma_-(X,T) \big) \ , \\
          \label{eq:clt_mod_approx}
        &\frac{\left \vert  \widehat \psi_{N}(X,T) - N \mu_\cD\psi_0(X,T)  \right \vert}{ \sqrt{N \Var_{\cD}}} \xrightarrow[N\to\infty]{\text{law}}  \mathcal H\big(\varphi(X,T)\big)\ ,
    \end{align}
     where $\Var_{\cD}$ is the variance of the distribution $\mathcal D$,     
    \begin{equation}
    \psi_0(X,T) = - \int_0^1 \ee^{\ii\left( 2Xs - \tfrac{T}{2}s^2 \right)} \di s\ , \qquad \sigma_\pm(X,T) = \frac{1}{2}\left(1 \pm \int_0^1 \cos \left( 4Xs - Ts^2 \right) \right)\di s \ ,
    \end{equation}
    $\mathcal H(\varphi)$ is a special Hoyt distribution with probability density function 
    \begin{equation}
        \rho(\xi; \varphi) = \frac{2\xi\, {\rm e}^{-\frac{\xi^2 }{\sin^2(2\varphi)}}}{ | \sin (2 \varphi )| }  I_0\left(\xi^2 \cot (2 \varphi ) \csc (2 \varphi )\right)\ , \qquad \xi\in \R^+\ ,
    \end{equation}
    where $I_0$ is the modified Bessel function of first kind of order $\nu=0$ \cite[Formula 10.32.1]{DLMF}: 
\begin{equation}
I_0(z) 
=  \frac{1}{\pi} \int_0^\pi \ee^{\pm z\cos(\theta)} \di \theta = \frac{1}{\pi} \int_0^\pi \cosh(z\cos(\theta)) \di \theta \ .
\end{equation}
    and $\varphi\in \lb 0, \frac{\pi}{2}\rb$ is such that 

    \[ \cos(2\varphi) =   \int_0^1 \cos \left( 4Xs - Ts^2 \right) \di s\,. \]
    
\end{lemma}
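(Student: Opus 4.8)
The plan is to exploit that $\widehat\psi_N(X,T) = -\sum_{k=1}^N \mu_k \ee^{\ii\omega_k}$, with deterministic phases $\omega_k := 2\tfrac{k}{N}X - \tfrac{k^2}{2N^2}T$, is a sum of \emph{independent} complex random variables, so the statement is a planar central limit theorem for triangular arrays. First I would center. Since the $\mu_k$ are i.i.d.\ with mean $\mu_\cD$, one has $\E[\widehat\psi_N] = -\mu_\cD\sum_{k=1}^N\ee^{\ii\omega_k}$, and a Riemann-sum estimate gives $\tfrac1N\sum_{k=1}^N\ee^{\ii\omega_k} = \int_0^1\ee^{\ii(2Xs-\frac T2 s^2)}\di s + \bigo{N^{-1}} = -\psi_0(X,T)+\bigo{N^{-1}}$. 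Hence $\E[\widehat\psi_N]=N\mu_\cD\psi_0(X,T)+\bigo{1}$, so after dividing by $\sqrt{N\Var_\cD}$ the difference between centering at the exact mean and at $N\mu_\cD\psi_0$ is $\bigo{N^{-1/2}}\to0$. It therefore suffices to analyze the centered fluctuation $W_N := \widehat\psi_N-\E[\widehat\psi_N] = -\sum_{k=1}^N Y_k\ee^{\ii\omega_k}$, where $Y_k:=\mu_k-\mu_\cD$ are i.i.d., mean zero, variance $\Var_\cD$.

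Next I would establish a joint CLT for $(\Re W_N,\Im W_N) = \big(-\sum_k Y_k\cos\omega_k,\,-\sum_k Y_k\sin\omega_k\big)$ via the Cramér--Wold device. For fixed $(u,v)\in\R^2$ the combination $u\Re W_N+v\Im W_N = -\sum_{k=1}^N Y_k(u\cos\omega_k+v\sin\omega_k)$ is a sum of independent mean-zero terms with coefficients bounded by $\sqrt{u^2+v^2}$. Using $\cos^2\omega=\tfrac12(1+\cos2\omega)$, $\sin^2\omega=\tfrac12(1-\cos2\omega)$ and Riemann sums, its variance divided by $N\Var_\cD$ converges to $(u,v)\,\Sigma\,(u,v)^\top$, where $\Sigma = \begin{pmatrix}\sigma_+ & \tfrac12 b\\ \tfrac12 b & \sigma_-\end{pmatrix}$ and $b:=\int_0^1\sin(4Xs-Ts^2)\di s$. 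The Lyapunov condition with exponent three holds because the coefficients are bounded and $\E|Y_k|^3=\E|\mu-\mu_\cD|^3<\infty$ by sub-exponentiality (Definition~\ref{def:subexp}): the third-moment sum is $\bigo{N}$ while the variance raised to the power $3/2$ is of order $N^{3/2}$, so the ratio is $\bigo{N^{-1/2}}$. Thus $(\Re W_N,\Im W_N)/\sqrt{N\Var_\cD}\Rightarrow\mathcal N(0,\Sigma)$. Reading off the two marginals gives \eqref{eq:clt_re_approx} and \eqref{eq:clt_im_approx}; one checks that $\sigma_+>0$ for all $(X,T)$, while $\sigma_-$ vanishes only at $(X,T)=(0,0)$, where $\omega_k\equiv0$ forces $\Im W_N\equiv0$ deterministically, consistent with the degenerate normal limit $\mathcal N(0,0)$.

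For the modulus \eqref{eq:clt_mod_approx} I would apply the continuous mapping theorem: $|W_N|/\sqrt{N\Var_\cD}\Rightarrow\sqrt{G_1^2+G_2^2}$ with $(G_1,G_2)\sim\mathcal N(0,\Sigma)$, and since the deterministic shift $\E[\widehat\psi_N]-N\mu_\cD\psi_0$ is $\bigo{1}$ it does not affect the normalized limit, so the modulus of $\widehat\psi_N-N\mu_\cD\psi_0$ has the same limit law. It then remains to identify $\sqrt{G_1^2+G_2^2}$ with the Hoyt law $\mathcal H(\varphi)$. Because the modulus is invariant under rotations of the plane, its distribution depends only on the eigenvalues of $\Sigma$; diagonalizing $\Sigma$ by a rotation reduces the problem to $\sqrt{\lambda_+Z_1^2+\lambda_-Z_2^2}$ with $Z_1,Z_2$ i.i.d.\ standard normal, and the classical change of variables produces exactly the modified-Bessel density $\rho(\xi;\varphi)$. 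Here $\Tr\Sigma=\sigma_++\sigma_-=1$, so a single parameter survives, recorded through $\cos(2\varphi)$ determined by the eigenvalue split $\lambda_\pm$ of $\Sigma$.

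I expect the two delicate points to be, first, the verification of the Lyapunov (equivalently Lindeberg) condition together with the correct treatment of the degenerate direction $\sigma_-(0,0)=0$ --- which, as noted, collapses to a deterministic identity and so is handled by hand --- and, second, the exact evaluation of the envelope integral identifying $\rho(\xi;\varphi)$: one must first diagonalize the limiting covariance $\Sigma$ and then carry out the $I_0$-integral, taking care that the Hoyt parameter is fixed by the eigenvalues of $\Sigma$ (hence by the off-diagonal correlation $\tfrac12 b$ as well as the diagonal variances $\sigma_\pm$), not by the diagonal entries alone.
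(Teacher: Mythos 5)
Your proposal is correct, and it takes a genuinely different route from the paper. The paper proves \eqref{eq:clt_re_approx}--\eqref{eq:clt_im_approx} by applying the Nagaev--Guivarc'h theorem (Theorem~\ref{thm:Guido}) to the characteristic functions of $\Re\,\widehat\psi_N$ and $\Im\,\widehat\psi_N$ \emph{separately}, and then passes to the modulus by treating the two Gaussian limits $\cU\sim\cN(0,\sigma_+)$ and $\cV\sim\cN(0,\sigma_-)$ as \emph{independent}: a factorized joint density proportional to $\ee^{-\frac12\left(u^2/\sigma_+ + v^2/\sigma_-\right)}$ is integrated over a disk to produce the Hoyt density with $\cos(2\varphi)=a:=\int_0^1\cos(4Xs-Ts^2)\,\di s$. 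Your Cram\'er--Wold/Lyapunov argument instead establishes the \emph{joint} planar CLT, and therefore keeps track of a quantity the paper never computes: the limiting cross-covariance (in your notation $\omega_k=2\tfrac kN X-\tfrac{k^2}{2N^2}T$)
\[
\lim_{N\to\infty}\frac{\mathrm{Cov}\big(\Re\,\widehat\psi_N,\,\Im\,\widehat\psi_N\big)}{N\Var_{\cD}}
=\lim_{N\to\infty}\frac1{2N}\sum_{k=1}^N\sin(2\omega_k)
=\frac b2,
\qquad b:=\int_0^1\sin(4Xs-Ts^2)\,\di s .
\]
Since the marginals of $\cN(0,\Sigma)$ do not depend on the off-diagonal entry of your covariance matrix $\Sigma$, both approaches yield \eqref{eq:clt_re_approx} and \eqref{eq:clt_im_approx} exactly as stated, and your treatment of the degenerate direction at $(X,T)=(0,0)$ is fine; what your route buys is the full bivariate limit, at the modest price of a standard Lindeberg--Lyapunov verification in place of the citation to Theorem~\ref{thm:Guido}.

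The issue you flag at the end as a ``delicate point'' is in fact a genuine discrepancy, and you should state it as a conclusion rather than a caution. Generically $b\neq0$ (e.g.\ $T=0$, $X=\pi/8$ gives $a=b=2/\pi$), so the limiting real and imaginary fluctuations are \emph{correlated}; by your rotation-invariance argument the modulus limit is the Hoyt law attached to the eigenvalues $\lambda_\pm=\frac12\big(1\pm\sqrt{a^2+b^2}\big)$ of $\Sigma$, i.e.\ with parameter
\[
\cos(2\varphi)=\sqrt{a^2+b^2}=\left|\int_0^1\ee^{\ii(4Xs-Ts^2)}\,\di s\right| ,
\]
and \emph{not} $\cos(2\varphi)=a$ as in the statement of the lemma and in the paper's own computation; the two agree only when $b=0$. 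The paper's identification of the modulus law rests on the unjustified independence of $\cU$ and $\cV$, which your covariance computation shows to be false in general. So, carried to completion, your argument proves \eqref{eq:clt_re_approx} and \eqref{eq:clt_im_approx} verbatim but a \emph{corrected} version of \eqref{eq:clt_mod_approx} (with $\varphi$ defined through the eigenvalues of $\Sigma$); the gap this exposes lies in the paper's proof, not in yours, and it propagates to the modulus statement \eqref{eq:clt_mod} of Theorem~\ref{thm:local_CLT_intro}. Your write-up should therefore compute $b$ explicitly and state the corrected Hoyt parameter.
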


\begin{proof}
We will resort to the Nagaev-Guivarc'h method, a fundamental technique to prove probabilistic limit theorems for dynamical systems, which we  briefly recall in Appendix \ref{app:probability_results}. 

We start by considering the real and imaginary parts of $\widehat \psi_{N}$ \eqref{eq:approx_NSS} separately.

It is enough to notice that 
\begin{align}
&\E \left[\ee^{- \ii \xi \Re\lp\widehat \psi_N \rp} \right] = \E \lb \ee^{ \ii \xi \lb \sum_{k=1}^N \mu_k  \cos \lp 2X \frac{k}{N}  - \frac{T}{2} \frac{k^2}{N^2} \rp \rb} \rb = \prod_{k=1}^N  \E \lb \ee^{ \ii \xi  y  \cos \lp  2X \frac{k}{N}  - \frac{T}{2} \frac{k^2}{N^2}  \rp } \rb = \prod_{k=1}^N \lambda_1(\xi; \tfrac{k}{N}) \ ,\\
 &\E \left[\ee^{-\ii \xi \Im\lp\widehat \psi_N \rp} \right] = \E \lb \ee^{ \ii \xi \lb \sum_{k=1}^N \mu_k  \sin \lp  2X \frac{k}{N}  - \frac{T}{2} \frac{k^2}{N^2} \rp \rb} \rb = \prod_{k=1}^N  \E \lb \ee^{\ii \xi  y  \sin \lp  2X \frac{k}{N}  - \frac{T}{2} \frac{k^2}{N^2} \rp } \rb = \prod_{k=1}^N \lambda_2(\xi; \tfrac{k}{N})\ , 
\end{align}
where $y\sim \mathcal{D}$ with mean $ \mu_\cD $ and variance $\Var_\cD$, and 
\begin{align}
\lambda_1(\xi;s) &= \operatorname{exp} \left\{ \log \E \lb e^{ \ii \xi  y  \cos \lp ( 2X s  - \frac{T}{2} s^2) \rp } \rb \right\} \\
&\nonumber = \operatorname{exp} \left\{  \ii \cos(2Xs - \tfrac{T}{2}s^2)\mu_\cD \xi - \cos^2(2Xs - \tfrac{T}{2}s^2) \,\Var_{\cD} \, \frac{\xi^2}{2} + o(\xi^{2}) \right\}\ , \\
\lambda_2(\xi;s) &= \operatorname{exp} \left\{ \log \E \lb e^{ \ii \xi  y  \sin \lp ( 2X s  - \frac{T}{2} s^2) \rp } \rb \right\} \\
& = \operatorname{exp} \left\{  \ii \sin(2Xs - \tfrac{T}{2}s^2) \mu_\cD \xi - \sin^2(2Xs - \tfrac{T}{2}s^2) \, \Var_{\cD}\, \frac{\xi^2}{2} + o(\xi^{2}) \right\}\  . \nonumber
\end{align}
We define 
\[
\psi_0(X,T) = -\int_0^1 \ee^{\ii \left(2Xs-\frac{T}{2}s^2\right)} \di s \ ,
\]
 and, by applying \cite[Theorem 4.2]{Mazzuca2024} (see also Appendix \ref{app:probability_results}), we directly obtain \eqref{eq:clt_re_approx}-\eqref{eq:clt_im_approx}.
 
The complex random variable $\hat \psi_{N}(X,T)$ has expected value $N\mu_\cD\psi_0(X,T)$, which implies  
\begin{equation}
    \lim_{N\to\infty } \frac{\widehat \psi_{N}(X,T)}{\mu_{\mathcal D} N} = \psi_0(X,T)\,\quad \text{almost surely} \ .
\end{equation}
 which is a \textit{universal profile}. 
From the previous calculations, it follows that the real random variable 
\[ \mathcal X_N(X,T):=  \frac{\left \vert \widehat \psi_{N}(X,T) - N \mu_\cD\psi_0(X,T) \right \vert}{\sqrt{N m_{2,\cD}}}\, \]
converges to the following probability distribution
\begin{equation}
       \mathcal X_N(X,T)
      \xrightharpoonup{N\to\infty} \mathcal{Z}(X,T) \ , \qquad 
      \mathcal{Z}(X,T) := \sqrt{\cU(X,T)^2 + \cV(X,T) ^2 } 
\end{equation}
pointwise in $(X,T)\in \R^2$,
where $\cU(X,T) \sim  \mathcal{N}\lp 0,\sigma_+(X,T) \rp$ and $\cV(X,T) \sim \mathcal{N}\lp 0,\sigma_-(X,T) \rp $, with
 \begin{gather}
     \sigma_\pm(X,T) = \frac{1}{2}\left(1 \pm \int_0^1 \cos \left( 4Xs - Ts^2 \right) \di s\right)\,. 
     \end{gather}
The cumulative distribution function of the modulus of a complex Gaussian is equal to
\begin{align}
\P \lp \left|\mathcal Z\right|\leq \xi \rp & = \P \lp \cU^2 + \cV^2  \leq  \xi^2 \rp =\frac{1}{2\pi \sqrt{\sigma_+(X,T) \sigma_-(X,T)}} \int_{u^2+v^2 \leq\xi^2}  \ee^{-\frac{1}{2} \left( \frac{u^2}{\sigma_+(X,T)}+ \frac{v^2}{\sigma_-(X,T)} \right)}  \di u \di v \nonumber \\
& = \frac{1}{|\sin(2\varphi) |\pi} \int_{u^2+v^2 \leq   \xi^2}  \ee^{- \frac{1}{2}\left( \frac{u^2}{\cos^2(\varphi)}+ \frac{v^2}{\sin^2(\varphi)} \right)}  \di u \di v , \qquad \forall \ \xi\geq 0\ ,
\end{align}
where we defined 
\[ \cos(2\varphi) =  \int_0^1 \cos \left( 4Xs - Ts^2 \right) \di s\,, \]
which implies that 
\[
\sigma_+(X,T) = \cos^2(\varphi) \qquad \text{and} \qquad \sigma_- (X,T)  =  \sin^2(\varphi)  \, .
\]
 From this expression, we can compute the probability density function (Hoyt distribution):
\begin{equation}
\rho(\xi; \varphi) = \frac{2 \xi\, {\rm e}^{-\frac{ \xi^2 }{ \sin^2(2\varphi)}}}{ | \sin (2 \varphi )| }  I_0\left(\xi^2 \cot (2 \varphi ) \csc (2 \varphi )\right),
\end{equation}
where $I_0$ is the modified Bessel function of first kind of order $\nu=0$ \cite[Formula 10.32.1]{DLMF}: 
\begin{equation}
\label{I0}
I_0(z) 
=  \frac{1}{\pi} \int_0^\pi \ee^{\pm z\cos(\theta)} \di \theta = \frac{1}{\pi} \int_0^\pi \cosh(z\cos(\theta)) \di \theta \ .
\end{equation}
This proves \eqref{eq:clt_mod_approx}.
\end{proof}

\subsection{Proof of the Central Limit Theorem \ref{thm:local_CLT_intro} and Proposition \ref{prop:prop_CLT_intro}.}

We will now prove Theorem \ref{thm:local_CLT_intro}, and   Proposition \ref{prop:prop_CLT_intro}, obtaining a CLT-type result for the solution $\psi_N(x,t)$ near the collision point $(x_0,t_0) = (0,1)$. 

We prove only the first limit \eqref{eq:clt_re}, since the proof in the other cases is analogous. The idea is to show since  $\psi_N(X,T)$ behaves like $\widehat \psi_N(X,T)$ in the limit as $N\to \infty$, with high probability, it obeys a CLT-type behaviour as well.

Consider the quantity
\begin{equation}
     \frac{\Re\Big( \psi_{N}(X,T)- N \mu_\cD\psi_0(X,T)\Big)}{\sqrt{N \Var_{\cD}}}  = \frac{\Re\lp\widehat \psi_{N}(X,T) - N \mu_\cD\psi_0(X,T)\rp}{ \sqrt{N \Var_{\cD}}}  + \frac{\Re\lp\psi_N(X,T) - \widehat \psi_{N}(X,T)\rp}{ \sqrt{N \Var_{\cD}}}\,.
\end{equation}
Thanks to Lemma \ref{lem:CLT_approx}, the first term will converge to a Gaussian with mean zero and prescribed variance. It remains to prove that there exists a $\delta>0$ such that
\begin{equation}
\label{eq:tail}
    \mathbb{P}\lp \frac{\left \vert \psi_N(X,T) - \widehat \psi_{N}(X,T)\right\vert }{ \sqrt{N \Var_{\cD}}} > N^{-\delta} \rp \xrightarrow{N\to\infty} 0\ ,
\end{equation}
for any $X,T\in \R$ and for any $\Delta >0$.

Fix $\frac{1}{2}>\delta>0$ and define the sets $\Omega$ and $A_\delta$ as 
\begin{align}\label{eq:Omega}
    &\Omega := \left\{\bm \mu \in \R^N_+ \, \Big\vert \, \,\,\|\bm \mu\|_\infty > N^{\frac{1}{3}} \,\right\} \,,\\
    &  A_\delta := \left \{ \bm \mu\in \R_+^N \, \Big \vert \, \left \vert \psi_N(X,T) - \widehat \psi_{N}(X,T)\right\vert > \sqrt{\Var_{\cD}} N^{\frac{1}{2} - \delta} \right\}  \subseteq \R_+^N\,.
\end{align}
Given $\Delta = \beta N^{\gamma}$ with $\gamma >\tfrac{1}{2}$, for $N$ big enough, $\Omega$ contains the set of vectors $\bm\mu$ for which $\|\bm \mu\|_\infty > \beta N^{\frac{1}{2}}$ (i.e. $\tfrac{\|\bm \mu\|_\infty }{\Delta}>1$) by construction; and we can rewrite \eqref{eq:tail} as

\begin{equation}
\label{eq:newclaim}\mathbb{P}\left(A_\delta\right)\xrightarrow{N\to\infty}0\,.
\end{equation}
Furthermore
\begin{equation}\label{eq:3.55}
    \mathbb{P}\left( A_\delta \right)  = \mathbb{P}\left(  A_\delta \cap \Omega \right) + \mathbb{P}\left(  A_\delta \cap (\R_+^N\setminus\Omega) \right)\,.
\end{equation}
One immediately notices (see Lemma \ref{lem:scaling_max} in Appendix \ref{app:probability_results}) that there exists a positive constant $c$ such that

\begin{equation}
\label{eq:no small norm}
    \mathbb{P}(A_\delta \cap \Omega)\leq \mathbb{P}(\Omega)\leq e^{-cN^{\frac{1}{4}}}\,.
\end{equation}
Therefore, we just need to estimate $\mathbb{P}(A_\delta\cap(\R_+^N\setminus\Omega))$. 
In this set, we can apply Theorem \ref{thm:1} and Proposition \ref{prop:psi_approx} therefore we conclude that there exist a function $\wt f(X,T,\boldsymbol{\mu},\Delta,N)$ and a constant $C$, independent of $(\boldsymbol{\mu},\Delta,N)$, such that

\begin{equation}
\label{eq:ftilde_bound}
\begin{split}
&    \psi_N(X,T) = f_N(X,T) + \wt f (X,T,\boldsymbol{\mu} ,\Delta,N)
    \mbox{ for all } \mu \in \mathbb{R}_{N}^{+}\setminus \Omega, \mbox{ and } \\
    &
    \qquad \vert \wt f(X,T,\boldsymbol{\mu},\Delta,N)\vert \leq C\frac{\|\boldsymbol \mu\|_{\infty} \| \boldsymbol \mu\|_2^2}{\Delta^2}\, \mbox{ for all } \mu \in \mathbb{R}_{+}^{N}.
\end{split}
\end{equation}
Indeed, one may use
\begin{eqnarray}
    \wt f(X,T,\boldsymbol{\mu},\Delta,N) = \begin{cases}
        \psi_{N}(X,T) - f_{N}(X,T) & \mbox{ if } C^{*} \| \boldsymbol \mu \|_{\infty} < \Delta \\
        \frac{C \| \boldsymbol \mu \|_{\infty} \| \boldsymbol \mu \|_{2}^{2}}{\Delta^{2}} & \mbox{ otherwise.} \\
    \end{cases}\ 
\end{eqnarray}
Then one can estimate $\mathbb{P}\lp A_\delta\cap(\R_+^N\setminus\Omega)\rp$ as follows.

\begin{equation}
\begin{split}
     \mathbb{P}\lp A_\delta\cap(\R_+^N\setminus\Omega)\rp& =  \mathbb{P}\lp \left\{\bmu\in \R^N_+ \Bigg\vert\frac{\left \vert \wh\psi_N(X,T) - f_N(X,T) - \wt f (X,T,\boldsymbol{\mu},\Delta,N)\right\vert }{ \sqrt{N \Var_{\cD}}} > N^{-\delta} \right\}\cap (\R_+^N\setminus\Omega)\rp \,\\
     & \leq \mathbb{P}\lp \left\{\bmu\in \R^N_+ \Bigg\vert\frac{\left \vert \wh\psi_N(X,T) - f_N(X,T) - \wt f (X,T,\boldsymbol{\mu},\Delta,N)\right\vert }{ \sqrt{N \Var_{\cD}}} > N^{-\delta} \right\}\rp\,.
     \end{split}
\end{equation}
By Markov inequality, that states that for a positive random variable $\fX$ and $\lambda>0$, $\mathbb{P}(\fX>\lambda)\leq \frac{\meanval{\fX}}{\lambda}$, we bound the last term as

\begin{equation}
    \begin{split}
        \mathbb{P}&\lp \frac{\left \vert \wh\psi_N(X,T) - f_N(X,T) - \wt f (X,T,\boldsymbol{\mu},\Delta,N)\right\vert }{ \sqrt{N \Var_{\cD}}} > N^{-\delta} \rp \leq \frac{\meanval{\left \vert \wh\psi_N(X,T) - f_N(X,T) - \wt f (X,T,\boldsymbol{\mu},\Delta,N)\right\vert }}{N^{\frac{1}{2} - \delta}\sqrt{\Var_\cD}}\\
        & \leq \frac{\meanval{\left \vert \wh\psi_N(X,T) - f_N(X,T) \right\vert} +\meanval{\left\vert \wt f (X,T,\boldsymbol{\mu},\Delta,N)\right\vert }}{N^{\frac{1}{2} - \delta}\sqrt{\Var_\cD}}\,.
    \end{split}
\end{equation}
We recall that for subexponential random variables, there exists a constant $\texttt{c}$ independent of $N$ such that $\meanval{\max_j\mu_j^3}\leq \texttt{c}\ln^3(N)$; therefore, by \eqref{eq:ftilde_bound} one deduces
\begin{equation}
    \meanval{\left\vert \wt f (X,T,\boldsymbol{\mu},\Delta,N)\right\vert} \leq C\frac{\ln^3(N)}{N^{2\gamma -1}}\,.
\end{equation}
So, applying Proposition \ref{prop:psi_approx}
we conclude that \eqref{eq:tail} holds and Theorem \ref{thm:local_CLT_intro} follows.
To prove Proposition \ref{prop:prop_CLT_intro}, one notices that once we have established \eqref{eq:clt_re}, i.e.
\begin{equation}
     \frac{\Re\Big( \psi_{N}(X,T) - N\mu_{\mathcal D}\, \psi_0(X,T)\Big)}{ \sqrt{N \Var_{\cD}}} \xrightarrow[N\to\infty]{\text{law}}  \mathcal{N}\big( 0,\sigma_+(X,T)\big) 
\end{equation}
it follows that 
\begin{equation}
  \frac{1}{\sqrt N}  \frac{\Re\Big( \psi_{N}(X,T) - N\mu_{\mathcal D}\, \psi_0(X,T)\Big)}{ \sqrt{N}} \xrightarrow{N\to\infty} 0  \qquad \text{ in probability,}
\end{equation}
since $\mathcal N(0,\sigma_+)$ is a continuous random variable. Finally, convergence in distribution to a constant implies convergence in probability, i.e.
\begin{equation}
    \lim_{N\to\infty} \mathbb P\left( \left| \Re\left( \frac{1}{N}\psi_{N}(X,T) - \mu_{\mathcal D}\, \psi_0(X,T)\right)\right| \geq \epsilon \right) =0 \quad \forall \ \epsilon>0\ .
\end{equation}

The same argument holds for the imaginary part of $\psi_N(X,T)$ (see $\eqref{eq:clt_im}$), thus implying that 
\begin{equation}
    \frac{1}{\mu_\cD N}\psi_N(X,T) \xrightarrow{N\to\infty} \psi_0(X,T) \qquad \text{in probability.} 
\end{equation}

\subsection{Proof of the Central Limit Theorem \ref{thm:envelope_intro}.}
Finally, we consider the general solution $\psi_N(x,1)$ at the collision time $t=1$ and we prove Theorem \ref{thm:local_CLT_intro}.
We first obtain a CLT-type result for the leading order term in the expansion \eqref{eq:sum_sols} at collision time $t=1$
\begin{equation}
\label{eq:approx_envelope}
   \varphi_N(x,1):= \sum_{k=1}^N \psi^{(k)}(x,1) = \sum_{k=1}^N \frac{- \mu_k}{\cosh \lp \mu_k x \rp} \ee^{\ii  k\Delta x}\,.
\end{equation}

\begin{lemma}
\label{lem:approx_envelope}
    Under Assumption \ref{hp:random_scattering_intro}, for any $x\in  \R$ the following convergence results hold
    \begin{align}
        & \frac{\Re\lp\varphi_N(x,1) \rp- \omega_\cD(x) \cos\lp x\Delta \frac{N+1}{2}\rp D_N\lp x\Delta\rp}{\sigma_{N,\Re}(x)} \xrightarrow[N\to\infty]{\text{law}}  \cN(0,1)\,, \label{eq:clt_envelope_approx_re} \\
        & \frac{\Im\lp \varphi_N(x,1) \rp - \omega_\cD(x) \sin\lp x\Delta \frac{N}{2}\rp D_{N+1}\lp x\Delta\rp}{\sigma_{N,{\Im}}(x)} \xrightarrow[N\to\infty]{\text{law}}  \cN(0,1)\,, \quad {\text{for } x\neq 0\,,} 
        \label{eq:clt_envelope_approx_im}
    \end{align}
     where $D_N(x) := \frac{\sin\lp \frac{x N}{2}\rp}{\sin \lp \frac{x}{2}\rp}$ is the Dirichlet kernel ,
    \begin{align}
        & \omega_\cD(x) = - \meanval{\frac{\xi}{\cosh( x\xi)}}\, , \quad \xi\sim \cD\, ,\\
        \label{eq:real_var_tilde}
        &\sigma^2_{N,\Re}(x) = \Var\lp\frac{\xi}{\cosh( x\xi)}  \rp\lp \frac{N-1}{2} + \frac{1}{2} \cos\lp x\Delta N \rp D_{N+1}(2x\Delta) \rp \,,  \\ 
        &\sigma^2_{N,\Im}(x) = \Var\lp\frac{\xi}{\cosh( x\xi)}  \rp\lp \frac{N+1}{2} - \frac{1}{2}  \cos\lp x\Delta N \rp D_{N+1}(2x\Delta) \rp \, ,
    \end{align}
    and $\Var(\cdot) $ is the variance of the given random variable. Moreover, 
    \begin{equation}\label{eq:clt_envelope_approx_mod}
        \frac{\left|\varphi_N(x,1)\right| - |\omega_\cD(x)D_{N}(x\Delta)|}{N} \to 0 \quad \text{as $N\to \infty$, in probability.}
    \end{equation}
\end{lemma}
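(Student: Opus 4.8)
The plan is to observe that, at the collision time $t=1$, the profile $\varphi_N(x,1)$ is a sum of \emph{independent} terms,
\[
\varphi_N(x,1) = -\sum_{k=1}^N Y_k\, \ee^{\ii k \Delta x}, \qquad Y_k := \frac{\mu_k}{\cosh(\mu_k x)},
\]
where the $Y_k$ are i.i.d. copies of $\xi/\cosh(x\xi)$, $\xi\sim\cD$. The first useful remark is that for fixed $x\neq 0$ the map $\mu\mapsto\mu/\cosh(\mu x)$ is bounded on $\R_+$, so every moment of $Y_k$ is finite irrespective of the sub-exponential tail of $\cD$; this trivializes all moment bounds. I would then read off the centering and scaling constants by direct computation: since $\Re\varphi_N = -\sum_k Y_k\cos(k\Delta x)$ and $\Im\varphi_N=-\sum_k Y_k\sin(k\Delta x)$, linearity gives $\E[\Re\varphi_N] = \omega_\cD(x)\sum_k\cos(k\Delta x)$ and $\E[\Im\varphi_N]=\omega_\cD(x)\sum_k\sin(k\Delta x)$, and these geometric sums collapse to the Dirichlet-kernel forms in \eqref{eq:clt_envelope_approx_re}--\eqref{eq:clt_envelope_approx_im} through $\sum_{k=1}^N\ee^{\ii k\theta}=\ee^{\ii(N+1)\theta/2}D_N(\theta)$. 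The variances follow identically from $\cos^2=\tfrac12(1+\cos2\theta)$ and $\sin^2=\tfrac12(1-\cos2\theta)$, giving $\sigma_{N,\Re}^2=\Var(Y_1)\sum_k\cos^2(k\Delta x)$ and $\sigma_{N,\Im}^2=\Var(Y_1)\sum_k\sin^2(k\Delta x)$; the one-line identity $\cos(N\theta)D_{N+1}(2\theta)-\cos((N+1)\theta)D_N(2\theta)=1$ then reconciles these with \eqref{eq:real_var_tilde}.

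For the two Gaussian limits I would apply the Lyapunov central limit theorem to the triangular array $\zeta_{N,k}=-(Y_k-\E Y_k)\cos(k\Delta x)/\sigma_{N,\Re}(x)$ (and its $\sin$ analogue). These are row-independent, centered, and normalized so that $\sum_k\Var(\zeta_{N,k})=1$. Because $|\cos(k\Delta x)|^3\le\cos^2(k\Delta x)$, the Lyapunov ratio built from third moments is bounded by $\E|Y_1-\E Y_1|^3\big/\bigl(\Var(Y_1)^{3/2}\sqrt{\sum_k\cos^2(k\Delta x)}\bigr)$, so the entire scheme reduces to a single analytic requirement: that the trigonometric sum $\sum_{k=1}^N\cos^2(k\Delta x)$ (respectively $\sum_k\sin^2(k\Delta x)$) diverges, i.e. that $\sigma_{N,\Re}(x)\to\infty$ (resp. $\sigma_{N,\Im}(x)\to\infty$).

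Securing this variance lower bound is the main obstacle, and the delicate feature is that the frequency $x\Delta=x\beta N^\gamma$ itself grows with $N$, so Weyl equidistribution does not apply verbatim. The plan is to use the exact identity $\sum_{k=1}^N\cos^2(k\Delta x)=\tfrac{N-1}{2}+\tfrac12\cos(Nx\Delta)D_{N+1}(2x\Delta)$ and bound the oscillatory correction by $\tfrac12|D_{N+1}(2x\Delta)|\le\tfrac1{2|\sin(x\Delta)|}$, so that whenever $x\Delta$ stays bounded away from $\pi\mathbb{Z}$ the leading $\tfrac N2$ term dominates and the sum is $\Theta(N)$. The borderline case, in which $x\Delta$ is anomalously close to a multiple of $\pi$ (the only regime making the kernel large), I would handle directly: there $\cos(k\Delta x)\approx\pm1$ for every $k\le N$, so $\sum_k\cos^2(k\Delta x)$ is in fact pushed \emph{toward} $N$ rather than shrinking, and the real-part variance still grows. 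The analogue for $\sum_k\sin^2(k\Delta x)$ with $x\neq 0$ is where the genuine care is needed, since exactly at $x\Delta\in\pi\mathbb{Z}$ the imaginary part degenerates; this explains the hypothesis $x\neq0$ and the need to rule out such non-generic resonant configurations. Once $\sigma_{N,\Re},\sigma_{N,\Im}\to\infty$ is established, Lyapunov yields \eqref{eq:clt_envelope_approx_re}--\eqref{eq:clt_envelope_approx_im}.

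Finally, for the modulus statement \eqref{eq:clt_envelope_approx_mod} I would bypass the CLT entirely. Since $\E\varphi_N(x,1)=\omega_\cD(x)\ee^{\ii(N+1)x\Delta/2}D_N(x\Delta)$, one has $|\omega_\cD(x)D_N(x\Delta)|=|\E\varphi_N(x,1)|$, and the reverse triangle inequality gives $\bigl||\varphi_N|-|\E\varphi_N|\bigr|\le|\varphi_N-\E\varphi_N|$. As $\E|\varphi_N-\E\varphi_N|^2=\sigma_{N,\Re}^2+\sigma_{N,\Im}^2=\Var(Y_1)\,N$, Chebyshev's inequality yields $\P(|\varphi_N-\E\varphi_N|>\epsilon N)\le\Var(Y_1)/(\epsilon^2N)\to0$ for every $\epsilon>0$, which is exactly the claimed convergence in probability after dividing by $N$. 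Notably this last part uses only the trivial identity $\sum_k\cos^2+\sum_k\sin^2=N$ and is therefore insensitive to the frequency pathology that complicates the two Gaussian limits.
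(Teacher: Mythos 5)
Your proposal is essentially the paper's own proof: the same decomposition of $\varphi_N(x,1)$ into i.i.d. terms $Y_k\,\ee^{\ii k\Delta x}$, the same Dirichlet-kernel evaluation of the centerings and variances, Lyapunov's CLT for the two Gaussian limits (you verify the condition with third moments, the paper with fourth, which is immaterial since all moments of $Y_k$ are finite), and a Chebyshev-type law of large numbers for \eqref{eq:clt_envelope_approx_mod}. Your modulus argument (reverse triangle inequality plus the exact identity $\sigma_{N,\Re}^2+\sigma_{N,\Im}^2=\Var(Y_1)\,N$) is a cleaner packaging of the paper's appeal to Proposition \ref{rem:almost_sure}, and, as you note, it is completely insensitive to the growth of the frequency $x\Delta$.

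The one point where you diverge is the variance lower bound, and there you are more careful than the paper, whose proof simply asserts that the Lyapunov condition is satisfied without checking that $\sigma_{N,\Re}(x),\sigma_{N,\Im}(x)\to\infty$ --- nontrivial precisely because $x\Delta=x\beta N^\gamma$ grows with $N$. Your real-part analysis is correct, and the intermediate regime $|x\Delta-\pi m|\sim 1/N$ (where neither of your two bounds applies verbatim) is closed by the exact kernel formula: with $u=N|x\Delta-\pi m|$ one gets $\sum_{k=1}^N\cos^2(kx\Delta)=\tfrac N2\bigl(1+\tfrac{\sin(2u)}{2u}+o(1)\bigr)\geq cN$, since $\inf_{v>0}\tfrac{\sin v}{v}>-1$. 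For the imaginary part, however, the resonant case you flag is not removable and is \emph{not} excluded by the hypothesis $x\neq 0$: taking for instance $\gamma=1$, $\beta=\pi$ and $x$ rational (admissible under Assumption \ref{hp:random_scattering_intro}), one has $x\Delta\in\pi\mathbb{Z}$ for infinitely many $N$, whence $\Im\varphi_N(x,1)\equiv 0$ deterministically and $\sigma_{N,\Im}(x)=0$ along that subsequence, so that \eqref{eq:clt_envelope_approx_im} is ill-posed there. This is therefore a gap in the lemma as stated rather than in your strategy: the paper's proof passes over it silently, while you identify it but cannot close it. Closing it requires a non-resonance hypothesis on $(x,\beta,\gamma)$ (or an almost-every-$x$ formulation), under which your Lyapunov scheme goes through verbatim.
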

\begin{remark}
    Notice that $\Im\lp\varphi_N(x,1) \rp =0$ deterministically, therefore we excluded the value $x=0$ in \eqref{eq:clt_envelope_approx_re}.
\end{remark}

\begin{proof} Let $x\in \R$. We will show the  proof of \eqref{eq:clt_envelope_approx_re} in detail. The proof of  \eqref{eq:clt_envelope_approx_im} is analogous.
The result easily follows from the classical result of Lyapounov's condition \cite{Billingsley2012}, which we recall in Appendix \ref{app:probability_results}.

Consider the real part of \eqref{eq:approx_envelope}:
$$\Re\lp \varphi_N(x,1) \rp = -\sum_{k=1}^N \mathcal X_k \cos(kx\Delta ) \, ,$$
where $\mathcal X_k$ is the random variable $\mathcal X_k(x) := \frac{\mu_k}{\cosh(\mu_k x)}$, $\mu_k\sim \cD$. 

Let $\sigma_{N,\Re}(x):= \sqrt{\sum_{k=1}^N \Var(\mathcal X_k)}$. We compute

    \begin{align}
        \sigma_{N,\Re}^4(x) & = \lp \sum_{k=1}^N \Var\lp \mathcal X_k \cos(kx\Delta ) \rp \rp^2= \Var\lp \mathcal X_1 \rp^2 \lp\sum_{k=1}^N  \cos^2(kx\Delta )\rp^2\, \nonumber \\
        & = \Var\lp \mathcal X_1\rp^2 \lp \frac{N}{2} - 1 + \frac{1}{2}\Re\lp \frac{1-\ee^{2\ii x\Delta(N+1)}}{1-\ee^{2\ii x\Delta}} \rp \rp^2 \, , \label{eq:denominator}
    \end{align}
since the amplitudes $\mu_k$'s are i.i.d., and
    \begin{align}
        \sum_{k=1}^N& \meanval{\Big( \mathcal X_k \cos(kx\Delta ) - \meanval{\mathcal X_k\cos(kx\Delta )} \Big)^4  }  
        = \meanval{\Big( \mathcal X_1 - \meanval{\mathcal X_1} \Big)^4}\sum_{k=1}^N \cos^4(kx\Delta ) \nonumber \\
         &= \meanval{\Big( \mathcal X_1 - \meanval{\mathcal X_1} \Big)^4} \lp \frac{3}{8}N - \frac{5}{8}+ \frac{1}{2} \Re\lp \frac{1-\ee^{2\ii x\Delta(N+1)}}{1-\ee^{2\ii x\Delta}} \rp +  \frac{1}{8} \Re\lp \frac{1-\ee^{4\ii x\Delta(N+1)}}{1-\ee^{4\ii x\Delta}} \rp\rp \,. \label{eq:numerator}
    \end{align}
Thus, the following Lyapounov's condition (with $\delta =2$) is satisfied
\[
\lim_{N\to \infty} \frac{1}{\sigma_{N,\Re}^4(x)} \sum_{k=1}^N \meanval{\Big( \mathcal X_k \cos(kx\Delta ) - \meanval{\mathcal X_k\cos(kx\Delta )} \Big)^4  }=0\ .
\]

Therefore, by the Lyapounov's condition Theorem \ref{thm:lyapounov}, the convergence result \eqref{eq:clt_envelope_approx_im}  follows

\begin{equation}
\frac{\Re\left(\varphi_N(x,1) - \meanval{ \varphi_N(x,1)}\right) }{\sigma_{N,\Re}(x)}   \xrightharpoonup{N\to\infty}  \cN(0,1)\ ,
\end{equation}
where
    \begin{equation}
       \meanval{\Re\left( \varphi_N(x,1)\right)}= -\meanval{\sum_{k=1}^N \mathcal X_k \cos(kx\Delta )} = 
       \omega_{\mathcal D}(x)\Re\lp \ee^{\ii x\Delta \frac{N+1}{2}} D_N(x\Delta)\rp\,.
    \end{equation}

Furthermore, since and $\sigma_{N,\Re}(x)$ is of the order $\bigo{N^{1/2}}$ for large $N$ ($\forall \, x\in\R$), Lyapounov's condition Theorem implies the following Law of Large Numbers result (see Proposition \ref{rem:almost_sure})
\begin{equation}
    \frac{1}{N} \left( \Re\lp\varphi_N(x,1)\rp - \omega_\cD(x)\cos\lp x\Delta \tfrac{N+1}{2}\rp D_N(x\Delta) \right)\to 0 \qquad \text{as $N \to \infty$, in probability;}
\end{equation}
similarly for the imaginary part of $\varphi_N(x,1)$. Therefore,  we can conclude that 
\begin{equation}
    \frac{\Big|\varphi_N(x,1)\Big| - |\omega_\cD(x)D_N(x\Delta)|}{N} \to 0 \qquad \text{as $N \to \infty$, in probability,}
\end{equation}
where we used the fact that 
 $ N^{-1}\lp \Big|\meanval{\varphi_N(x,1)}\Big| - \meanval{\Big|\varphi_N(x,1)\Big|}\rp  \to 0$, as $N\to\infty$,  and
    \begin{equation}
         \Big|\meanval{\varphi_N(x,1)}\Big| = \sqrt{\meanval{\Re(\varphi_N(x,1))}^2 + \meanval{\Im(\varphi_N(x,1))}^2 } \\ 
         = |\omega_\cD(x)D_N(x\Delta)| \,.
    \end{equation}
\end{proof}

Finally, in order to prove Theorem \ref{thm:envelope_intro}, we extend the result of the previous lemma to the solution $\psi_N(x)$ in the same way as in the proof of Theorem \ref{thm:local_CLT_intro}. We leave the details to the interested reader.

\paragraph{\bf Acknowledgements.} This project was made possible by a SQuaRE at the American Institute of Mathematics. The authors thank AIM for providing a supportive and mathematically rich environment, and excellent working conditions during the visit in Spring 2024 and 2025. 
We thank Gustavo Didier for the fruitful discussion about the CLTs and the references that he provided.

M.G. was supported in part by the National Science Foundation (grant no. DMS-2508767).
T.G. acknowledges the support of PRIN 2022 (2022TEB52W) "The charm of integrability: from
nonlinear waves to random matrices"-– Next Generation EU grant – PNRR Investimento M.4C.2.1.1 - CUP: G53D23001880006; the GNFM-INDAM group and the research project Mathematical Methods
in NonLinear Physics (MMNLP), Gruppo 4-Fisica Teorica of INFN. 
R.J. was supported in part by a grant from the Simons Foundation, CGM-853620 and in part by the National Science Foundation under Grant No. DMS-2307142. 
G.M. was partially supported by the Swedish Research Council under grant no. 2016-06596 while the author was in residence at Institut Mittag-Leffler in Djursholm, Sweden during the Fall semester 2024. 
M.Y. was supported in part by a grant from the Simons Foundation, CGM-706591. 

Most of the figures in the paper were realized using the Python libraries \texttt{NumPy} \cite{numpy}, \texttt{Scipy} \cite{Scipy} and \texttt{Matplolib} \cite{matplotlib}.

\appendix
\section[Facts about the $N$-soliton solution]{General facts about the $N$-soliton solution of the fNLS equation}\label{app:facts}
We report here some known facts about the RHP for $N$-soliton solutions, and we present a novel upper bound on the modulus of the solution $\psi_N(x,t)$.  Such a bound is suboptimal as compared to \eqref{eq:upper_bound_result} (see Theorem~\ref{thm:1}), but it shows  exponential decay of the tails. 
Given a set of spectral data $\{(z_k,c_k)\}_{k=1}^N$, the solution  \( \bbs M(z;x,t)\) of  the RHP \ref{rhp-sch} has the form (see \cite[Appendix B]{Jenkins2018})
\begin{equation}
\label{M_reflectionless}
    \bbs M(z;x,t) := \bbs I + \sum_{k=1}^N \frac1{z-z_k}\begin{bmatrix} \alpha_k(x,t) & 0 \smallskip \\ \beta_k(x,t) & 0  \end{bmatrix} + \sum_{k=1}^N \frac1{z-\overline z_k} \begin{bmatrix} 0 & -\overline{\beta_k(x,t)} \smallskip \\ 0 & \overline{\alpha_k(x,t)} \end{bmatrix},
\end{equation}
where \( \{\alpha_k(x,t),\beta_k(x,t)\}_{k=1}^N\) solve the following linear system
\begin{gather}
\label{alpha_beta} 
\alpha_k(x,t) = -\gamma_k(x,t) \sum_{j=1}^N\frac{\overline{\beta_j(x,t)}}{z_k-\overline z_j}
\quad \text{and} \quad
\beta_k(x,t) = \gamma_k(x,t) \left( 1 + \sum_{j=1}^N \frac{\overline{\alpha_j(x,t)}}{z_k-\overline z_j} \right), \\
\gamma_k(x,t) := c_k \ee^{2\ii\theta(z_k;x,t)}, \label{gamma}
\end{gather}
which follows directly from the residue conditions satisfied by \(\bbs M(z;x,t) \). 

\begin{proposition}
    The system \eqref{alpha_beta} is uniquely solvable, and the fNLS solution is given as 
    \begin{equation}
\label{reflectionless_sch}
\overline{\psi_N(x,t)} = 2\ii \sum_{k=1}^N \beta_k(x,t).
\end{equation}
\end{proposition}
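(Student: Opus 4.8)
The plan is to split the statement into its two assertions: the recovery formula \eqref{reflectionless_sch}, which is essentially a computation, and the unique solvability of \eqref{alpha_beta}, which is the substantive part and which I would establish through a vanishing lemma. For the recovery formula, I would read off the $(1,2)$ entry of the ansatz \eqref{M_reflectionless}, namely $[\bbs M(z;x,t)]_{1,2} = -\sum_{k=1}^N \overline{\beta_k(x,t)}/(z-\overline z_k)$, and insert it into the reconstruction identity \eqref{eq:recovery}. Since $2\ii z\,[\bbs M]_{1,2}\to -2\ii\sum_k\overline{\beta_k}$ as $z\to\infty$, this gives $\psi_N = -2\ii\sum_k\overline{\beta_k}$, and conjugating yields \eqref{reflectionless_sch} (this is also consistent with the $1/z$ coefficient in \eqref{m.expand}).

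For unique solvability, I would first observe that \eqref{alpha_beta}, together with its complex conjugate, is a finite square $\R$-linear system for the unknowns $(\alpha_k,\beta_k)_{k=1}^N$; hence by the Fredholm alternative it suffices to show that the homogeneous system (drop the inhomogeneous ``$1$'' in the equation for $\beta_k$) has only the trivial solution. A homogeneous solution corresponds, via \eqref{M_reflectionless}, to a matrix $\bbs M_0$ that is analytic off $\{z_k,\overline z_k\}$, decays like $\bigo{z^{-1}}$ at infinity, and satisfies the residue conditions of RHP~\ref{rhp-sch} but with the normalization $\bbs I$ replaced by $0$; the goal is to prove $\bbs M_0\equiv0$. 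Writing $\bbs M_0=[\bbs m_1,\bbs m_2]$, the ansatz builds in the Schwarz symmetry $\bbs m_2(z)=\sigma\,\overline{\bbs m_1(\overline z)}$ with $\sigma=\left(\begin{smallmatrix}0&-1\\1&0\end{smallmatrix}\right)$, while the residue relation reduces to $\Res_{z=z_k}\bbs m_1=\gamma_k\,\bbs m_2(z_k)$, where $\gamma_k$ is as in \eqref{gamma}.

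The key device is to pair $\bbs m_1$ with its own Schwarz reflection: set $g(z):=\overline{\bbs m_1(\overline z)}^{\,\top}\bbs m_1(z)$. This scalar function is meromorphic with poles only at $z_k$ (in $\C^+$) and $\overline z_k$ (in $\C^-$), it equals $\|\bbs m_1(x)\|^2\ge0$ on the real axis, and it decays like $\bigo{z^{-2}}$. Closing the real axis by a large semicircle in $\C^+$ and applying the residue theorem gives $\int_\R\|\bbs m_1(x)\|^2\,\di x=2\pi\ii\sum_k\Res_{z=z_k}g$. Because $\overline{\bbs m_1(\overline z)}^{\,\top}$ is analytic at $z_k$, each residue equals $\gamma_k\,\mathbf w_k^\top\sigma\,\mathbf w_k$ with $\mathbf w_k:=\overline{\bbs m_1(\overline z_k)}$, and this vanishes by the antisymmetry of $\sigma$. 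Hence $\int_\R\|\bbs m_1\|^2\,\di x=0$, forcing $\bbs m_1\equiv0$ on $\R$ and therefore $\bbs M_0\equiv0$ by analyticity, i.e. $\alpha_k=\beta_k=0$; uniqueness then gives existence for the square system.

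The main obstacle is precisely the choice of pairing in the vanishing lemma: the natural positive quantity $\bbs m_1^\dagger\bbs m_1$ is not meromorphic (the conjugate destroys analyticity), while the symmetry-respecting meromorphic pairing $\bbs m_1^\top\sigma^{-1}\bbs m_1$ collapses to $0$ identically and carries no information; the workable device is the reflected pairing $g$ above, which is simultaneously meromorphic, nonnegative on $\R$, and has residues annihilated by the antisymmetry of $\sigma$. As an alternative route I would reduce \eqref{alpha_beta} to the invertibility of $\bbs I+\bbs\Phi_N\overline{\bbs\Phi_N}$ from \eqref{eq:nls Kay-Moses}, using that $\bbs\Phi_N$ is Hermitian (since $x,t$ are real, $\overline{\theta(z)}=\theta(\overline z)$) and is in fact a definite Gram matrix; then $\bbs\Phi_N\overline{\bbs\Phi_N}$ is a product of two semidefinite Hermitian matrices and so has nonnegative spectrum, whence $\bbs I+\bbs\Phi_N\overline{\bbs\Phi_N}$ is nonsingular.
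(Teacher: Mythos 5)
Your proposal is correct, and your primary argument takes a genuinely different route from the paper's. The paper proves unique solvability purely algebraically: it reduces \eqref{alpha_beta} to the non\-vanishing of $\det\big(\bbs I + \bbs\Phi_N\overline{\bbs\Phi_N}\big)$, observes that $\ii\bbs\Phi_N$ is a Gram matrix of linearly independent functions (hence Hermitian positive definite, via $\tfrac{\ii}{z_k-\overline z_n}=\int_0^\infty \ee^{\ii(z_k-\overline z_n)s}\di s$), and notes that the eigenvalues of $\bbs\Phi_N\overline{\bbs\Phi_N}=(\ii\bbs\Phi_N)\overline{(\ii\bbs\Phi_N)}$ coincide with those of the positive definite matrix $(\ii\bbs\Phi_N)^{1/2}\,\overline{(\ii\bbs\Phi_N)}\,(\ii\bbs\Phi_N)^{1/2}$, so the determinant equals $\prod_k(1+\lambda_k)>0$. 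You instead run a vanishing lemma in the spirit of Zhou: identify homogeneous solutions of \eqref{alpha_beta} with matrices $\bbs M_0$ satisfying the residue conditions and decaying at infinity, and kill them by integrating the Schwarz-reflected pairing $g(z)=\overline{\bbs m_1(\overline z)}^{\top}\bbs m_1(z)$ over $\R$; the details check out — the ansatz \eqref{M_reflectionless} does satisfy $\bbs m_2(z)=\sigma\overline{\bbs m_1(\overline z)}$, the residue identity $\Res_{z=z_k}\bbs m_1=\gamma_k\bbs m_2(z_k)$ makes each residue of $g$ in $\C^+$ equal to $\gamma_k\mathbf w_k^\top\sigma\mathbf w_k=0$ by antisymmetry, and nonnegativity of $g$ on $\R$ then forces $\bbs m_1\equiv0$, while the finite-dimensional Fredholm alternative converts injectivity into solvability. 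What each buys: your route exposes the symmetry structure of the focusing problem and extends essentially verbatim to non-reflectionless data, where no finite determinant is available; the paper's route is elementary linear algebra and yields the explicit positivity $\det\big(\bbs I + \bbs\Phi_N\overline{\bbs\Phi_N}\big)>0$, which ties directly into the determinantal formula \eqref{eq:nls Kay-Moses}. The recovery formula is handled identically in both (read off \eqref{M_reflectionless} in \eqref{eq:recovery} and conjugate).

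One small correction to your alternative route, which is in fact the paper's proof: $\bbs\Phi_N$ is \emph{skew}-Hermitian, not Hermitian — with $x,t$ real one computes $\overline{[\bbs\Phi_N]_{nk}}=-[\bbs\Phi_N]_{kn}$. It is $\ii\bbs\Phi_N$ that is the Hermitian positive definite Gram matrix, and the product one must consider is $\bbs\Phi_N\overline{\bbs\Phi_N}=(\ii\bbs\Phi_N)\overline{(\ii\bbs\Phi_N)}$; with that adjustment your spectral argument goes through exactly as in the paper.
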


\begin{proof} 
Unique solvability of system \eqref{alpha_beta} is equivalent to verify that
\begin{equation}
\label{matA}
\det\left(\bbs I + \bbs \Phi_N \overline{\bbs \Phi_N}\right) \neq 0, \quad \text{where } \   \bbs \Phi_N := \left[\frac{\sqrt{c_k} \sqrt{\overline {c_n} }}{z_k-\overline z_n}\ee^{\ii(\theta(z_k;x,t)-\theta(\overline z_n;x,t))}\right]_{k,n=1}^N,
\end{equation}
 \( \overline{\bbs \Phi_N} \) is the conjugate matrix of \( \bbs \Phi_N\), and  we consider the principal value of \( \sqrt {c_k}\). 
 
 We consider now the matrix $\ii \bbs \Phi_N$: since each entry  
 can be viewed as an inner product of linearly independent functions,
 \[
[\ii\bbs \Phi_N]_{k,n} = \int_0^\infty \big( \sqrt{c_k}\ee^{\ii\theta(z_k)} \big) \overline{\big( \sqrt{c_n}\ee^{\ii\theta(z_n)} \big)} \ee^{\ii(z_k-\overline z_n)s}\di s,
\]
\( \ii\bbs \Phi_N\) is a positive definite matrix. Let \( (\ii\bbs \Phi_N)^{1/2} \) be the unique positive definite square root of \( \ii\bbs \Phi_N\). The eigenvalues of \(\bbs \Phi_N \overline{\bbs \Phi_N} = (\ii\bbs \Phi_N)\overline{(\ii\bbs \Phi_N)}\) are the same as the eigenvalues of \((\ii\bbs \Phi_N)^{1/2} \overline{(\ii\bbs \Phi_N)} (\ii\bbs \Phi_N)^{1/2}\), which is also a positive definite matrix. If one labels these eigenvalues by \( \lambda_k>0 \), then
\[
\det\left(\bbs I + \bbs \Phi_N \overline{\bbs \Phi_N} \right)=\prod_{k=1}^N(1+\lambda_k)>0
\]
as needed. Finally, from \eqref{eq:recovery} and \eqref{M_reflectionless}, it  immediately follows that the corresponding solution \( \psi_N(x,t)\) of the fNLS equation \eqref{eq:nls} can be expressed as
\[
\overline{\psi_N(x,t)} = 2\ii \sum_{k=1}^N \beta_k(x,t). \qedhere
\]
\end{proof}

We derive now an alternative expression for the solution \( \psi_N(x,t)\), that will be used shortly to derive some estimates on the modulus of the solution.

\begin{lemma}
\label{prop:1}
Let \( B(z) \) be the Blaschke product from \eqref{eq:max_norming_const}. It holds that
\[
\psi_N(x,t) = 2\ii \sum_{k=1}^N \frac1{B^\prime(z_k)}\frac{\alpha_k(x,t)}{\gamma_k(x,t)},
\]
where $\alpha_k$ and $\gamma_k$ are defined in \eqref{alpha_beta} and \eqref{gamma} respectively.
\end{lemma}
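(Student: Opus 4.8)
The plan is to reduce the claim to a single residue identity for the Blaschke product $B$ and then to invoke the already-established representation \eqref{reflectionless_sch}. First I would use the first relation in the linear system \eqref{alpha_beta}, namely $\alpha_k/\gamma_k = -\sum_{j=1}^N \overline{\beta_j}/(z_k-\overline z_j)$, which is legitimate since $\gamma_k\neq 0$ by \eqref{gamma}. Substituting this into the asserted right-hand side and interchanging the order of summation gives
\[
2\ii\sum_{k=1}^N\frac{1}{B'(z_k)}\frac{\alpha_k(x,t)}{\gamma_k(x,t)} = -2\ii\sum_{j=1}^N\overline{\beta_j(x,t)}\left(\sum_{k=1}^N\frac{1}{B'(z_k)(z_k-\overline z_j)}\right).
\]
Thus everything reduces to showing that the inner sum over $k$ equals $1$ for every fixed $j$.

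The key step, and the only genuine computation, is the identity $\sum_{k=1}^N \frac{1}{B'(z_k)(z_k-\overline z_j)} = 1$, which I would prove by residues. Since $B(z)=\prod_{k}\frac{z-z_k}{z-\overline z_k}$ has simple zeros exactly at the $z_k$, the residue of $1/B$ at $z_k$ is $1/B'(z_k)$, so $\frac{1}{B'(z_k)(z_k-\overline z_j)}$ is precisely the residue at $z=z_k$ of $g(z):=\frac{1}{B(z)(z-\overline z_j)}$. The apparent singularity of $g$ at $z=\overline z_j$ is removable: $B$ has a pole there, so $1/B$ has a simple zero that cancels the factor $1/(z-\overline z_j)$. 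Hence the $z_k$ are the only finite poles of $g$. Because $B(z)\to1$ as $z\to\infty$, one has $g(z)=z^{-1}+\bigo{z^{-2}}$, so summing all residues, equivalently integrating over a large circle $|z|=R$ and letting $R\to\infty$, yields $\sum_{k=1}^N \operatorname{Res}_{z=z_k} g = 1$, which is the desired identity.

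Finally I would assemble the pieces. With the inner sum equal to $1$, the displayed expression collapses to $-2\ii\sum_{j=1}^N \overline{\beta_j(x,t)}$, which is the complex conjugate of $2\ii\sum_{j=1}^N\beta_j(x,t)=\overline{\psi_N(x,t)}$ by \eqref{reflectionless_sch}; conjugating both sides gives exactly $\psi_N(x,t)$, completing the argument.

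I expect the main obstacle to be the bookkeeping in the residue identity, specifically confirming that $\overline z_j$ contributes no residue to $g$ and that the decay $g(z)\sim z^{-1}$ at infinity is exact enough that the large-circle integral converges to $1$. Everything else is an algebraic rearrangement of the already-solved linear system \eqref{alpha_beta} together with the conjugation in \eqref{reflectionless_sch}.
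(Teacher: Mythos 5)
Your proposal is correct and is essentially the paper's own argument: both use the first relation in \eqref{alpha_beta} to reduce the claim to the Blaschke identity $\sum_{k=1}^N \tfrac{1}{B'(z_k)(z_k-\overline z_j)}=1$ for each $j$, and then conclude via \eqref{reflectionless_sch} after conjugation. The only cosmetic difference is how that identity is verified: you sum the residues of $1/\big(B(z)(z-\overline z_j)\big)$, while the paper writes the partial-fraction expansion $B(z)=1+\sum_n \overline{B'(z_n)}^{-1}(z-\overline z_n)^{-1}$ and evaluates it at the zero $z_k$ of $B$ --- equivalent residue computations, with yours having the minor advantage of only needing $\operatorname{Res}_{z=z_k}1/B = 1/B'(z_k)$ rather than the Blaschke symmetry underlying the residue of $B$ at $\overline z_n$.
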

\begin{proof}

It readily follows from \eqref{alpha_beta} that
\[
2\ii \sum_{n=1}^N \frac1{B^\prime(z_n)}\frac{\alpha_n(x,t)}{\gamma_n(x,t)} = \overline{2\ii \sum_{k=1}^N \left(\sum_{n=1}^N \frac1{\overline {B^\prime(z_n)}} \frac1{\overline z_n- z_k}\right) \beta_k(x,t)}.
\]
Since \( B(z) \) is a rational function with poles \( \overline z_1,\ldots,\overline z_N\) that is equal to \( 1 \) at infinity, it holds that
\[
B(z) = 1 + \sum_{n=1}^N \frac1{\overline{B^\prime(z_n)}} \frac1{z-\overline z_n}.
\]
As \( B(z_k)=0 \), it follows that \(\sum_{n=1}^N \frac1{\overline{B^\prime(z_n)}} \frac1{z_k-\overline z_n} = -1\), and we recover \eqref{reflectionless_sch}.
\end{proof}

The modulus of \( \psi_N(x,t) \) also admits expressions convenient for analysis. Indeed, it is known (see for example \cite[Equation (2.3)]{Jenkins2018}) that
\[
\partial_x\bbs M(z) = -\ii z\big[\sigma_3,\bbs M(z)\big] + \begin{bmatrix} 0 & \psi_N(x,t) \\ -\overline{\psi_N(x,t)} & 0 \end{bmatrix} \bbs M(z).
\]
Multiplying by \( z \) and taking the limit as \( z\to\infty \) of the \( (2,2) \)-entries of the above relation gives after conjugation that
\begin{equation}
\label{partial_alpha}
|\psi_N(x,t)|^2 = 2\ii \partial_x \left( \sum_{k=1}^N \alpha_k(x,t) \right).
\end{equation}
We recall that expression \eqref{partial_alpha} can further be rewritten using the famous determinantal formula \cite{faddeev2007hamiltonian}:
\[
|\psi_N(x,t)|^2 = \partial_{xx}\log\det\big(\bbs I + \bbs A \overline{\bbs A}\big) \ .
\]
We now present a novel, general upper bounds for \( |\psi_N(x,t)| \). Despite being suboptimal as compared to \eqref{eq:upper_bound_result} for finite $x,t \in \R\times\R^+$, it shows exponential decay for $|x|, |t| \gg 1$, which cannot be read from \eqref{eq:upper_bound_result}.

\begin{proposition}
\label{prop:3}
It holds that
\[
\frac12|\psi_N(x,t)| \leq \min \left\{\sum_{k=1}^N|\gamma_k(x,t)|, \;\;  \sum_{k=1}^N |B^\prime(z_k)|^{-1}, \;\; \sum_{k=1}^N |B^\prime(z_k)|^{-2}|\gamma_k(x,t)|^{-1} \right\}.
\]
\end{proposition}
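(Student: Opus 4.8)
The plan is to reduce all three estimates to pointwise bounds on the entries of the second column of $\bbs M(z)$ evaluated at the poles $z_k$, and then to prove those bounds by combining the Schwarz symmetry of the reflectionless RHP with the maximum modulus principle in the upper half-plane $\C^+$.

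First I would extract $\alpha_k,\beta_k$ from the RHP. Writing the residue condition of RHP~\ref{rhp-sch} in terms of the representation \eqref{M_reflectionless}, and using that the second column $\bigl(M_{12}(z),M_{22}(z)\bigr)^{\top}$ is regular at $z_k$, one gets
\[ \alpha_k = \gamma_k\,M_{12}(z_k), \qquad \beta_k = \gamma_k\,M_{22}(z_k), \qquad \gamma_k := c_k\,\ee^{2\ii\theta(z_k;x,t)}. \]
Substituting these into the two available representations of the solution — namely \eqref{reflectionless_sch}, $\overline{\psi_N}=2\ii\sum_k\beta_k$, and Lemma~\ref{prop:1}, $\psi_N=2\ii\sum_k B'(z_k)^{-1}\alpha_k/\gamma_k=2\ii\sum_k B'(z_k)^{-1}M_{12}(z_k)$ — the triangle inequality reduces the three claimed bounds to the three entrywise estimates
\[ |M_{22}(z_k)|\le 1, \qquad |M_{12}(z_k)|\le 1, \qquad |\alpha_k\,B'(z_k)|\le 1 \]
(the last being equivalent to $|\alpha_k|\le |B'(z_k)|^{-1}$), which yield the first, second, and third terms of the minimum respectively.

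The core of the argument is the next step. The reflectionless solution \eqref{M_reflectionless} satisfies the Schwarz symmetry $\bbs M(z)=\sigma\,\overline{\bbs M(\overline z)}\,\sigma^{-1}$ with $\sigma=\left(\begin{smallmatrix}0&1\\-1&0\end{smallmatrix}\right)$, which is checked by conjugating the partial-fraction expansion; in entries this reads $M_{11}(z)=\overline{M_{22}(\overline z)}$ and $M_{21}(z)=-\overline{M_{12}(\overline z)}$. Combined with $\det\bbs M\equiv 1$ this gives, for real $\xi$, the unit-vector relation $|M_{12}(\xi)|^2+|M_{22}(\xi)|^2=1$, so that $|M_{12}|,|M_{22}|\le 1$ on $\R$; moreover $|B(\xi)|=1$ on $\R$ since $|\xi-z_k|=|\xi-\overline z_k|$. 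Now $M_{12}(z)$ and $M_{22}(z)$ are analytic and bounded in $\C^+$ (their only poles, at $\overline z_k$, lie in $\C^-$) and continuous up to $\R$, with limits $0$ and $1$ at infinity; the maximum modulus principle then forces $|M_{12}(z_k)|\le 1$ and $|M_{22}(z_k)|\le 1$, giving the first two bounds. For the third, the product $M_{11}(z)B(z)$ is analytic in $\C^+$ — the simple poles of $M_{11}$ at the $z_j$ are cancelled by the simple zeros of $B$ — bounded, and satisfies $|M_{11}B|=|M_{11}|=|M_{22}|\le 1$ on $\R$; maximum modulus gives $|M_{11}(z)B(z)|\le 1$ throughout $\C^+$, and evaluating at $z=z_k$, where $\lim_{z\to z_k}M_{11}(z)B(z)=\alpha_k\,B'(z_k)$, produces exactly $|\alpha_k\,B'(z_k)|\le 1$.

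I expect the main obstacle to be recognizing that entrywise estimates on $\bbs M(z_k)$ are the right objects, and resisting the more obvious but doomed route: solving the linear system \eqref{alpha_beta} to write $\overline{\psi_N}=2\ii\,\mathbf w^{\top}(\bbs I+\bbs\Phi_N\overline{\bbs\Phi_N})^{-1}\mathbf w$, with $\mathbf w=\bigl(\sqrt{c_k}\,\ee^{\ii\theta(z_k;x,t)}\bigr)_{k=1}^N$, and bounding it by the operator norm of the resolvent. That fails because $\bbs\Phi_N\overline{\bbs\Phi_N}$ is only complex symmetric (not Hermitian), so $(\bbs I+\bbs\Phi_N\overline{\bbs\Phi_N})^{-1}$ need not be a contraction and its operator norm can exceed $1$. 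The symmetry-plus-maximum-modulus mechanism sidesteps this entirely; the only points requiring care are the justification of the maximum modulus principle on the unbounded domain $\C^+$ (handled by the rational, hence continuous-at-$\infty$, nature of the entries) and the bookkeeping of the pole/zero cancellation in the product $M_{11}B$.
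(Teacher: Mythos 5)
Your proof is correct and follows essentially the same route as the paper's: both hinge on the residue relations $\alpha_k=\gamma_k\,[\bbs M(z_k)]_{1,2}$, $\beta_k=\gamma_k\,[\bbs M(z_k)]_{2,2}$, the Schwarz symmetry together with $\det\bbs M\equiv 1$ to get boundary control on $\R$, and a maximum-principle argument in $\C^+$ with multiplication by $B$ to cancel the poles of the first column. The only cosmetic differences are that the paper runs the maximum principle on the subharmonic column norms $S_2=|[\bbs M]_{1,2}|^2+|[\bbs M]_{2,2}|^2$ and $S_1|B|^2$ rather than on individual entries, and derives the second bound from $|\beta_k|\le|B'(z_k)|^{-1}$ directly instead of passing through Lemma~\ref{prop:1}.
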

\begin{proof}
Recall that if \( f(z) \) is analytic in a domain \( D \) then \( |f(z)|^2 \) is subharmonic there because \( \Delta |f(z)|^2 = 4\partial_z\partial_{\overline z}|f(z)|^2=4|f^\prime(z)|^2\geq 0 \). Let
\[
S_i(z) := \big| [\bbs M(z)]_{1,i} \big|^2 + \big| [\bbs M(z)]_{2,i} \big|^2, \quad i\in\{1,2\}.
\]
Since the second column of \( \bbs M(z) \) is analytic in \( \C_+ \), see \eqref{M_reflectionless}, \( S_2(z) \) is a subharmonic there. Clearly, \( \det \bbs M(z) \) is a rational function of \( z \) that is equal to \(1\) at infinity. Since only one column of \(\bbs M(z)\) can have a pole at a given point, \( \det \bbs M(z) \) can have at most simple poles. However, it is easy to check that the residue conditions for \( \bbs M(z) \) imply that the residues of \( \det\bbs M(z) \) are zero. Thus,  \( \det\bbs M(z)\equiv1 \). Since \( S_2(\infty) =1 \) and
\[
S_2(z) = [\bbs M(z)]_{2,2}(z)[\bbs M(z)]_{1,1} - [\bbs M(z)]_{1,2} [\bbs M(z)]_{2,1} = \det \bbs M(z) \equiv 1
\]
for \( z \) on the real line by \eqref{M_reflectionless}, the maximum principle for subharmonic functions implies that \( S_2(z) \leq 1 \) in \( \overline\C_+ \). 

These considerations can also be applied to the matrix \( \bbs M(z)B(z)^{\bbs \sigma_3} \), as it is still meromorphic with unit determinant. However, the roles of the columns are now reversed: the first one is analytic in \( \C_+ \) while the second one has poles therein. Thus, it now must hold that  \( (S_1|B|^2)(z) \leq 1 \) in \( \overline{\C_+} \). It readily follows from \eqref{M_reflectionless} and the residue conditions satisfied by \( \bbs M \) that
\[
\begin{bmatrix} \alpha_n \\ \beta_n \end{bmatrix} = \lim_{z\to z_n}(z-z_n)\begin{bmatrix} [\bbs M(z)]_{1,1} \\ [\bbs M(z)]_{2,1} \end{bmatrix} = \gamma_n\begin{bmatrix} [\bbs M(z_n)]_{1,2} \\ [\bbs M(z_n)]_{2,2} \end{bmatrix}.
\]
These relations now yield that
\[
\label{resM}
\sqrt{|\alpha_n|^2+|\beta_n|^2} =
\begin{cases}
|\gamma_n|\sqrt{S_2(z_n)}\leq |\gamma_n|, \medskip \\
\displaystyle \lim_{z\to z_n} |z-z_n|\sqrt{S_1(z)} \leq \lim_{z\to z_n} |z-z_n||B(z)|^{-1} = |B^\prime(z_n)|^{-1}.
 \end{cases}
\]
Recalling \eqref{reflectionless_sch}, we obtain the first bound of the proposition using the first estimate above, while the second bound follows from the second estimate above. Finally, the last bound is a consequence of Lemma~\ref{prop:1} and the second estimate above.
\end{proof}

\section{Results from Probability Theory}\label{app:probability_results}

In this Appendix we show in details some passages for the proof of Theorem~\ref{thm:local_CLT_intro}, and we report two results from Probability Theory that we used for the proof of Lemmas \ref{lem:CLT_approx} and \ref{lem:approx_envelope}. 

 \begin{lemma}
 \label{lem:scaling_max}
     Under Assumption \ref{hp:random_scattering_intro} and Definition \ref{def:subexp}, the following estimates hold:
     \begin{itemize}
         \item If $\alpha >0$ and $N>\exp\lp \frac{\nu^2}{2\alpha^2}\rp$, for all $s>0$ it holds that
    \[
         \mathbb{P}\lp \max_{i=1,\ldots,N} |\mu_i - \mu_{\mathcal D} | \geq 2\alpha \lp \ln(N) + s \rp \rp \leq 
             2e^{-s} \,.
     \]
        
     \item If $\alpha =0$, for all $s>0$ it holds that  
    \[
         \mathbb{P}\lp \max_{i=1,\ldots,N} |\mu_i - \mu_{\mathcal D} | \geq \nu \sqrt{2 \lp \ln(N) + s \rp} \rp \leq 2\ee^{-s} \,.
    \]
     \end{itemize}
 \end{lemma}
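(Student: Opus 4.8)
The plan is to combine the sub-exponential tail estimate \eqref{eq:subexp_prop} with a union bound over the $N$ amplitudes. Since the $\mu_i$ are i.i.d. and satisfy the concentration inequality of Definition~\ref{def:subexp}, each obeys
\[
\mathbb{P}\lp |\mu_i - \mu_\cD| \geq t \rp \leq 2\exp\lp -\tfrac12 \min\left\{ \tfrac{t^2}{\nu^2}, \tfrac{t}{\alpha} \right\}\rp
\]
for every $t>0$. First I would pass to the maximum by the elementary union bound
\[
\mathbb{P}\lp \max_{i=1,\ldots,N} |\mu_i - \mu_\cD| \geq t \rp \leq \sum_{i=1}^N \mathbb{P}\lp |\mu_i - \mu_\cD| \geq t \rp \leq 2N \exp\lp -\tfrac12 \min\left\{ \tfrac{t^2}{\nu^2}, \tfrac{t}{\alpha} \right\}\rp,
\]
so the whole task reduces to choosing the threshold $t$ so that the exponential factor absorbs $N$ and leaves the desired $e^{-s}$.

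The only genuine subtlety is deciding which branch of the minimum is active, and this is the step I expect to require care. The crossover occurs at $t = \nu^2/\alpha$: for $t \geq \nu^2/\alpha$ the minimum equals the linear (sub-exponential) term $t/\alpha$, while for smaller $t$ it equals the quadratic (sub-Gaussian) term $t^2/\nu^2$. In the case $\alpha>0$, I would take $t = 2\alpha(\ln N + s)$ and verify that the hypothesis $N > \exp(\nu^2/(2\alpha^2))$ together with $s>0$ forces $\ln N + s > \nu^2/(2\alpha^2)$, hence $t > \nu^2/\alpha$. The linear branch is then active, so $\tfrac12 \cdot t/\alpha = \ln N + s$, and the union bound collapses to $2N e^{-(\ln N + s)} = 2e^{-s}$. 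Everything beyond this threshold verification is a direct substitution.

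In the degenerate case $\alpha=0$, the defining inequality \eqref{subexp distr} holds for all $\lambda \in \R$, which is exactly the sub-Gaussian condition, so the tail estimate reduces to $\mathbb{P}(|\mu_i - \mu_\cD| \geq t) \leq 2\exp(-t^2/(2\nu^2))$ with no crossover to track. Choosing $t = \nu\sqrt{2(\ln N + s)}$ then gives $t^2/(2\nu^2) = \ln N + s$, and the union bound again yields $2N e^{-(\ln N + s)} = 2e^{-s}$, which establishes the second estimate and completes the proof.
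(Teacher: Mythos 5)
Your proposal is correct and follows essentially the same route as the paper: a union bound over the $N$ amplitudes combined with the sub-exponential tail bound \eqref{eq:subexp_prop}, where the hypothesis $N>\exp\lp\tfrac{\nu^2}{2\alpha^2}\rp$ guarantees that the linear branch of the minimum is active at the chosen threshold. Your write-up is in fact slightly more explicit than the paper's, since you verify the crossover $t\geq\nu^2/\alpha$ in detail and carry out the $\alpha=0$ case, which the paper dismisses as analogous.
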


 \begin{proof}
 We prove only the first statement. The proof of the second one is analogous. Let $u= 2\alpha\lp\ln(N) + s\rp$. Then, from standard inequalities, 
\[
              \mathbb{P}\lp \max_{i=1,\ldots,N} |\mu_i - \mu_{\mathcal D}| \geq u \rp \leq \sum_{i=1}^N \mathbb{P}\lp |\mu_i - \mu_{\mathcal D}|  \geq u \rp \leq N \mathbb{P}\lp |\mu_1 - \mu_{\mathcal D} | \geq u \rp \,.
\]
Since $N>\exp\lp \frac{\nu^2}{2\alpha^2}\rp$,  we can apply \eqref{eq:subexp_prop} to conclude that
\[
    \mathbb{P}\lp \max_{i=1,\ldots,N} |\mu_i - \mu_{\mathcal D} |  \geq u \rp \leq 2 N \exp\lp -\frac{u}{2\alpha}\rp = 2 \ee^{-s}. \qedhere
\]   
 \end{proof}

From Lemma \ref{lem:scaling_max}
it follows that the set $\Omega$ defined in \eqref{eq:Omega} is exponentially small in probability as $N\to \infty$.

The next result is the so-called Nagaev--Guivarc’h method, which is a fundamental technique to prove probabilistic limit theorems for dynamical systems. We used the following theorem to prove Lemma \ref{lem:CLT_approx}, which is part of the proof of Theorem \ref{thm:local_CLT_intro}. 
\begin{theorem}[{\bf Nagaev--Guivarc’h method}, Theorem 4.2 in  \cite{Mazzuca2024}]\label{thm:Guido}
Let $X_1, X_2, \ldots$ be a sequence of real random variables and let $S_N := \sum_{j=1}^N X_j$. Assume that there exists $\delta>0$ and functions $\lambda(s,\xi) \in C^{1,0}([0,1) \times \R)$, $c_N(\xi) \in C^0(\R)$, and $h_n(\xi)$ continuous at zero, such that $\forall\ \xi \in [-\delta, \delta]$ and $\forall \ N \in \mathbb N$
\begin{equation}
\E \lb \ee^{- \ii \xi S_N }  \rb = c_N(\xi)\left(  \prod_{j=1}^N \lambda(\xi; \tfrac{j}{N})\right) \lp 1 + h_N(\xi)\rp \ .
\end{equation}
Moreover, assume that 
\begin{enumerate}
\item there exist functions $A, \sigma^2 :[0,1]\to \C$ such that 
\begin{equation}
\lambda(\xi;s) = \ee^{- \ii A(s) \xi  - \frac{\sigma^2(s)}{2}\xi^2 + o(\xi^2)} \ , \qquad \text{as } \xi \to 0\ ;
\end{equation}
\item $c_N(0)=1$ and  $\lim_{N\to \infty} c_N\left(\tfrac{\xi}{\sqrt{N}}\right) = \lim_{\xi\to \infty} c_N\left(\tfrac{\xi}{N}\right)=1$, $\forall \, \xi\in[-\delta,\delta]$;
\item  $h_N \to 0$ as $N\to \infty$, uniformly in $[-\delta,\delta]$ and $h_N(0)=0$.
\end{enumerate}
Then, $\int_0^1 A(s) \di s\in \R$ and $\int_0^1 \sigma^2(s) \di s \geq 0$ and 
\begin{equation}
\frac{S_N - N \int_0^1 A(s) \di s}{\sqrt{N}} \to \mathcal N \Big(0, \int_0^1\sigma^2(s) \di s \Big) \ , \qquad \text{as $N\to \infty$, in distribution.} 
\end{equation}
\end{theorem}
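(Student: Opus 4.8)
The statement is a central limit theorem for a triangular array, and the natural route is through characteristic functions and Lévy's continuity theorem, so the plan is to show that the characteristic function of the centered and normalized sum converges pointwise to that of the claimed Gaussian. Writing $\bar A := \int_0^1 A(s)\,\di s$ and $\bar\sigma^2 := \int_0^1 \sigma^2(s)\,\di s$, and setting $W_N := (S_N - N\bar A)/\sqrt N$, I would first use the factorization hypothesis with $\xi$ replaced by $-\xi/\sqrt N$ to write
\[
\E\!\left[\ee^{\ii\xi W_N}\right] = \ee^{-\ii\xi\sqrt N\,\bar A}\; c_N\!\left(-\tfrac{\xi}{\sqrt N}\right)\left(\prod_{j=1}^N \lambda\!\left(-\tfrac{\xi}{\sqrt N};\tfrac jN\right)\right)\left(1 + h_N\!\left(-\tfrac{\xi}{\sqrt N}\right)\right),
\]
and then analyze each factor as $N\to\infty$ with $\xi$ fixed.

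The core of the argument is the product, which I would treat by taking logarithms and inserting the local expansion of the first assumption with argument $\zeta = -\xi/\sqrt N \to 0$, giving $\log\lambda(-\tfrac{\xi}{\sqrt N};\tfrac jN) = \ii A(\tfrac jN)\tfrac{\xi}{\sqrt N} - \tfrac{\sigma^2(j/N)}2 \tfrac{\xi^2}N + o(\tfrac1N)$. Summing over $j$ and recognizing Riemann sums yields
\[
\sum_{j=1}^N \log\lambda\!\left(-\tfrac{\xi}{\sqrt N};\tfrac jN\right) = \ii\xi\sqrt N\left(\frac1N\sum_{j=1}^N A\!\left(\tfrac jN\right)\right) - \frac{\xi^2}{2}\left(\frac1N\sum_{j=1}^N \sigma^2\!\left(\tfrac jN\right)\right) + o(1),
\]
where $\tfrac1N\sum_j \sigma^2(\tfrac jN)\to\bar\sigma^2$. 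Combining with the prefactor $\ee^{-\ii\xi\sqrt N\bar A}$ collapses the linear part to $\ii\xi\sqrt N\big(\tfrac1N\sum_j A(\tfrac jN) - \bar A\big)$, which must vanish, while the second and third assumptions send $c_N(-\tfrac{\xi}{\sqrt N})\to1$ and $1+h_N(-\tfrac{\xi}{\sqrt N})\to1$. The limit is then $\ee^{-\frac{\bar\sigma^2}2\xi^2}$, which is the characteristic function of $\mathcal N(0,\bar\sigma^2)$; and because the $X_j$ are real, $S_N$ is real, so for the pointwise limit to be a legitimate characteristic function one is forced to have $\bar A\in\R$ and $\bar\sigma^2\geq0$, giving the asserted side conditions for free.

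The main obstacle is the centering term $\ii\xi\sqrt N\big(\tfrac1N\sum_{j=1}^N A(\tfrac jN) - \int_0^1 A(s)\,\di s\big)$: plain convergence of the Riemann sum is insufficient, since the $\sqrt N$ prefactor demands the Riemann-sum error be $o(N^{-1/2})$. This is exactly where the $C^{1,0}$ regularity of $\lambda$ in the spatial variable is used: it propagates to $A\in C^1([0,1])$, for which the right-endpoint Riemann-sum error is $O(1/N) = o(N^{-1/2})$, so the term vanishes. A companion point, which I would verify carefully, is that the error $o(\xi^2)$ in the first assumption is \emph{uniform} in $s\in[0,1]$; only then is the accumulated remainder $\sum_{j=1}^N o(\tfrac1N)$ genuinely $o(1)$ rather than merely $O(1)$. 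Once these two uniformity/regularity facts are in hand, the remainder of the proof is routine bookkeeping with the continuity theorem.
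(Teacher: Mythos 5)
The paper does not prove this theorem: it is imported verbatim as Theorem 4.2 of the cited reference [Mazzuca2024] and used as a black box, so there is no internal proof to compare against. Your proposal is the canonical Nagaev--Guivarc'h argument --- L\'evy's continuity theorem applied to the factorized characteristic function at scale $\xi/\sqrt{N}$, expansion of $\log\lambda$, and Riemann sums --- and it is correct in outline; this is also how results of this type are proved in the cited reference. The two caveats you flag are exactly the delicate points: the $o(\xi^2)$ in hypothesis 1 must be uniform in $s$ for the accumulated remainder $\sum_{j} o(1/N)$ to be $o(1)$, and $A$ needs enough regularity (Lipschitz or $C^1$) for the Riemann-sum error to be $o(N^{-1/2})$; both are implicit in the $C^{1,0}$ hypothesis, although the propagation of that regularity from $\lambda$ to the limit function $A$ is not automatic and deserves the care you indicate. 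One structural remark: the otherwise odd-looking hypothesis $\lim_{N} c_N(\xi/N)=1$ exists precisely so that the same argument can be run at the law-of-large-numbers scale $\xi \mapsto \xi/N$, which gives $S_N/N \to \int_0^1 A(s)\,\mathrm{d}s$ in distribution and hence forces $\int_0^1 A(s)\,\mathrm{d}s \in \mathbb{R}$. Your shortcut --- deducing reality of the mean from the requirement that the limit be a legitimate characteristic function --- is valid but needs one extra line: when $\int_0^1 A(s)\,\mathrm{d}s \notin \mathbb{R}$ the shifted variable $W_N$ is not real, so L\'evy's theorem does not apply to it directly; instead one observes that $\left|\mathbb{E}\left[\ee^{\ii\xi S_N/\sqrt{N}}\right]\right| \le 1$ while the complex shift contributes the factor $\ee^{\xi\sqrt{N}\,\Im\int_0^1 A(s)\,\mathrm{d}s}$, which would force the limit to vanish for one sign of $\xi$, contradicting the nonzero Gaussian limit.
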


 Finally, we used a classical Probability Theory result, known as { Lyapounov's condition} \cite{Billingsley2012}, to prove Lemma \ref{lem:approx_envelope}, which is part of the proof of Theorem \ref{thm:envelope_intro}.
\begin{theorem}[\bf Lyapounov's condition]
\label{thm:lyapounov}
    Let $X_1,\ldots, X_N$ be independent random variables with means $\mu_1,\ldots,\mu_N$ and variances $\sigma^2_1,\ldots,\sigma^2_N$.
    Define $s_N := \sqrt{\sum_{k=1}^N\sigma^2_k}$ and assume that there exists a $\delta>0$ such that
    \begin{equation}
    \label{eq:lyapounouv_condition}
        \lim_{N\to\infty} \frac{1}{s_N^{2+\delta}} \sum_{k=1}^N \meanval{|X_k - \mu_k|^{2+\delta}}=0 \,.
    \end{equation}
    Then
\begin{equation}
    \frac{1}{s_N}\sum_{k=1}^N \lp X_k - \mu_k \rp \xrightarrow{d} \cN(0,1)\,  \quad \text{as }N\to\infty\,.
\end{equation}
\end{theorem}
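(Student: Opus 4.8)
The statement is the classical Lyapunov central limit theorem, so the plan follows the standard route: normalize the variables, reduce the Lyapunov hypothesis to the Lindeberg condition, and then establish convergence of characteristic functions to that of $\cN(0,1)$, concluding by L\'evy's continuity theorem. Throughout I would set $Y_k := (X_k-\mu_k)/s_N$, so that the $Y_k$ are independent with $\E[Y_k]=0$ and $\sum_{k=1}^N \E[Y_k^2]=1$, and the hypothesis \eqref{eq:lyapounouv_condition} becomes $L_N := \sum_{k=1}^N \E[|Y_k|^{2+\delta}] \to 0$. The goal is then to prove $\frac1{s_N}\sum_{k=1}^N(X_k-\mu_k) = \sum_{k=1}^N Y_k \xrightarrow{d}\cN(0,1)$.

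First I would record two elementary consequences of $L_N\to0$. By Lyapunov's moment inequality applied to each $Y_k$, together with the fact that a single summand is dominated by the whole sum, $\max_k \E[Y_k^2] \le \big(\max_k \E[|Y_k|^{2+\delta}]\big)^{2/(2+\delta)} \le L_N^{2/(2+\delta)}\to 0$, which is the uniform asymptotic negligibility of the array. Second, for each $\epsilon>0$ the Lindeberg condition holds: on the event $\{|Y_k|>\epsilon\}$ one has $Y_k^2 \le \epsilon^{-\delta}|Y_k|^{2+\delta}$, whence $\sum_{k=1}^N \E\big[Y_k^2\,\mathbf 1_{\{|Y_k|>\epsilon\}}\big] \le \epsilon^{-\delta}L_N\to0$. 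This single truncation bound is what converts the $(2+\delta)$-moment hypothesis into a statement valid for \emph{every} $\delta>0$ with no case distinction on the size of $\delta$.

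Next I would pass to characteristic functions. Fix $t\in\R$ and write $\phi_k(t):=\E[e^{\ii t Y_k}]$. Using the Taylor remainder bound $|e^{\ii u}-1-\ii u+\tfrac12 u^2|\le \min\{u^2,\tfrac16|u|^3\}$, splitting each expectation according to $\{|Y_k|\le\epsilon\}$ and $\{|Y_k|>\epsilon\}$, and using $\E[Y_k]=0$, one gets $\phi_k(t)=1-\tfrac12 t^2\,\E[Y_k^2]+r_k$ with $\sum_{k=1}^N|r_k|\le \tfrac16|t|^3\epsilon\,\sum_k\E[Y_k^2]+t^2\sum_k\E\big[Y_k^2\mathbf 1_{\{|Y_k|>\epsilon\}}\big]$. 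The first piece is $\le \tfrac16|t|^3\epsilon$ and the second tends to $0$ by the Lindeberg bound; letting $N\to\infty$ and then $\epsilon\to0$ shows $\sum_k r_k\to0$. Writing $\phi_k(t)=1+w_k$ with $w_k=-\tfrac12 t^2\E[Y_k^2]+r_k$, this gives $\sum_k w_k\to -\tfrac12 t^2$, while $\max_k|w_k|\to0$ and $\sum_k|w_k|$ stays bounded. The standard product--exponential comparison, namely $\big|\prod_k(1+w_k)-\exp(\sum_k w_k)\big|\le \big(\sum_k|w_k|^2\big)\exp(\sum_k|w_k|) \le \max_k|w_k|\,\sum_k|w_k|\,\exp(\sum_k|w_k|)\to 0$, then yields $\prod_k\phi_k(t)\to e^{-t^2/2}$ for every $t$, and L\'evy's continuity theorem gives $\sum_k Y_k\xrightarrow{d}\cN(0,1)$.

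The main obstacle, and the only genuinely nontrivial point, is the uniform control of the remainder $r_k$ in the large-deviation region $\{|Y_k|>\epsilon\}$, where the cubic Taylor bound is useless; this is precisely why the Lindeberg truncation established in the first step is indispensable, and it is also the step that removes any dependence on the particular value of $\delta$. The remaining ingredients (Lyapunov's moment inequality, the product--exponential comparison for complex numbers of modulus near $1$, and L\'evy's theorem) are standard and can be quoted directly. Alternatively, once the Lindeberg condition has been derived from \eqref{eq:lyapounouv_condition}, one may simply invoke the Lindeberg--Feller theorem from \cite{Billingsley2012} and terminate the argument there.
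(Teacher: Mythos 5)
Your proposal is correct in every step, but note that the paper does not prove Theorem~\ref{thm:lyapounov} at all: it is stated in Appendix~\ref{app:probability_results} as a classical result, quoted directly from \cite{Billingsley2012} and used as a black box in the proof of Lemma~\ref{lem:approx_envelope}. Your writeup is precisely the standard proof that this citation points to, and all the details check out: the reduction of Lyapunov's hypothesis to Lindeberg's condition via the truncation bound \( Y_k^2 \le \epsilon^{-\delta}|Y_k|^{2+\delta} \) on \( \{|Y_k|>\epsilon\} \) is the right (and only nontrivial) bridge; the uniform negligibility \( \max_k \mathbb{E}[Y_k^2] \le L_N^{2/(2+\delta)} \to 0 \) follows correctly from Lyapunov's moment inequality; the characteristic-function expansion with the Taylor remainder \( |e^{\mathrm{i}u}-1-\mathrm{i}u+\tfrac12 u^2| \le \min\{u^2,\tfrac16|u|^3\} \) yields, as you state, \( \sum_k |r_k| \le \tfrac16 |t|^3 \epsilon + t^2 \epsilon^{-\delta} L_N \), which vanishes upon \( N\to\infty \) followed by \( \epsilon\to0 \); and the product--exponential comparison together with \( \max_k|w_k|\to0 \), \( \sum_k|w_k| = \mathcal{O}(1) \) legitimately upgrades \( \sum_k w_k \to -\tfrac12 t^2 \) to \( \prod_k \phi_k(t) \to e^{-t^2/2} \), after which L\'evy's continuity theorem finishes. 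Your argument buys self-containedness at the cost of length; since the paper needs the theorem only as an off-the-shelf tool, its bare citation is the economical choice, and your closing observation --- that one may stop after deriving the Lindeberg condition and invoke the Lindeberg--Feller theorem from \cite{Billingsley2012} --- is exactly the route most faithful to the paper's usage.
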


\begin{proposition}[\bf Law of Large Numbers]
\label{rem:almost_sure}
   Under the same assumption of Theorem \ref{thm:lyapounov}, if $s_N$ is asymptotically equal to $c N^{\alpha}$ for some $c>0$ and  $0<\alpha < 1$, then it implies that 
    \begin{equation}
        \frac{1}{N}\sum_{k=1}^N X_k  - \frac{1}{N} \sum_{k=1}^N \mu_k \xrightarrow{N\to\infty} 0\,
    \end{equation}
   in probability. 
\end{proposition}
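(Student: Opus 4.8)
The plan is to establish this convergence directly from Chebyshev's inequality, observing that it follows from the variance structure alone and does not in fact require the full strength of the Lyapounov condition; the latter merely provides a convenient setting in which the hypotheses are already available.

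First I would introduce the centered empirical average $Y_N := \frac{1}{N}\sum_{k=1}^N (X_k - \mu_k)$, which has expectation zero by linearity of the expected value. Using the independence of the $X_k$, the variance decouples into a sum of individual variances, giving $\Var(Y_N) = N^{-2}\sum_{k=1}^N \sigma_k^2 = s_N^2/N^2$.

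Next I would substitute the asymptotic hypothesis $s_N \sim c N^{\alpha}$ to obtain $\Var(Y_N) \sim c^2 N^{2\alpha-2}$, which tends to zero as $N \to \infty$ precisely because the assumption $0 < \alpha < 1$ forces the exponent $2\alpha - 2$ to be strictly negative.

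The conclusion then follows immediately. For any fixed $\varepsilon > 0$, Chebyshev's inequality yields $\mathbb{P}\big(|Y_N| \geq \varepsilon\big) \leq \Var(Y_N)/\varepsilon^2$, and since the right-hand side vanishes in the limit, we obtain the asserted convergence in probability. There is no genuine obstacle in this argument: the only point worth flagging is conceptual, namely that the statement is really a weak law of large numbers driven by a normalized variance that vanishes at rate $N^{2\alpha-2}$, so the Lyapounov hypothesis is stronger than what is strictly used here.
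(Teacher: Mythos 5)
Your proof is correct, but it takes a genuinely different route from the paper's. The paper treats this Law of Large Numbers as a corollary of the Lyapounov central limit theorem it has just stated: writing
\begin{equation*}
\frac{1}{N}\sum_{k=1}^N (X_k-\mu_k) \;=\; \frac{s_N}{N}\cdot\frac{1}{s_N}\sum_{k=1}^N (X_k-\mu_k),
\end{equation*}
the second factor converges in law to $\cN(0,1)$ by Theorem~\ref{thm:lyapounov}, while the prefactor $s_N/N \sim c\,N^{\alpha-1}$ vanishes, so by Slutsky's theorem the product converges in distribution to the constant $0$, and convergence in distribution to a constant upgrades to convergence in probability (the same argument pattern the paper uses explicitly in the proof of Proposition~\ref{prop:prop_CLT_intro}). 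Your argument instead bypasses the CLT entirely: independence gives $\Var\big(\frac{1}{N}\sum_k(X_k-\mu_k)\big) = s_N^2/N^2 \sim c^2 N^{2\alpha-2} \to 0$, and Chebyshev's inequality finishes. What your approach buys is generality and economy of hypotheses — it needs only finite variances (indeed only pairwise uncorrelatedness, not the $(2+\delta)$-moment Lyapounov condition~\eqref{eq:lyapounouv_condition}), and your closing remark that the Lyapounov hypothesis is stronger than what is used is accurate. What the paper's route buys is that, in context, the CLT has already been established for the relevant sums, so the LLN comes for free without invoking any additional inequality. Both proofs are complete and correct.
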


	\bibliographystyle{siam}
	\bibliography{mybib.bib}

\end{document}